\renewcommand{\setminus}{\smallsetminus} %
\renewcommand{\phi}{\varphi} %
\renewcommand{\epsilon}{\varepsilon}     %
\newcommand{\eword}{\varepsilon}%
\newcommand{\egdef}{\stackrel{\mbox{\begin{scriptsize}def\end{scriptsize}}}{=}}
\newcommand{\subword}{\sqsubseteq}
\newcommand{\strictsubword}{\sqsubset}
\newcommand{\prefix}{\sqsubseteq_{\mathsf{p}}}
\newcommand{\suffix}{\sqsubseteq_{\mathsf{s}}}
\newcommand{\ttw}[1]{{#1}}
\newcommand{\tta}{\ttw{a}}
\newcommand{\ttb}{\ttw{b}}
\newcommand{\ttc}{\ttw{c}}
\newcommand{\ttd}{\ttw{d}}
\newcommand{\tte}{\ttw{e}}
\newcommand{\ttp}{\ttw{p}}
\newcommand{\ttv}{\ttw{v}}
\newcommand{\ttww}{\ttw{w}}
\newcommand{\ttsharp}{\ttw{\#}}
\newcommand{\cA}{\mathcal{A}}
\newcommand{\cB}{\mathcal{B}}
\newcommand{\Nat}{\mathbb{N}}
\newcommand{\N}{\mathbb{N}}
\newcommand{\up}{{\uparrow}}
\newcommand{\down}{{\downarrow}}
\newcommand{\LL}[1]{\up{#1}}
\newcommand{\frag}[2]{\Sigma_{#1,#2}}
\newcommand{\fragf}[2]{\Sigma'_{#1,#2}}
\newcommand{\fragpif}[2]{\Pi'_{#1,#2}}
\newcommand{\fragpi}[2]{\Pi_{#1,#2}}
\newcommand{\FOpure}[1]{\FO({#1}^*, \mathord{\subword})}
\newcommand{\FOconst}[2][XX]{\FO({#2}^*, \mathord{\subword}, \ifthenelse{\equal{#1}{XX}}{\ttw{w}_1, \mathord{\ldots}}{#1})}
\newcommand{\swstructpure}[1]{({#1}^*, \mathord{\subword})}
\newcommand{\swstructconst}[1]{({#1}^*, \mathord{\subword}, \ttw{w}_1, \mathord{\ldots})}
\DeclareMathOperator{\Ima}{Im}
\newcommand{\WeakExp}[1]{{\ComplexityFont{\Sigma}}_{#1}^{\ComplexityFont{EXP}}}
\newcommand{\autstep}[1][YYY]{\ifthenelse{\equal{#1}{YYY}}{\rightarrow}{\rightarrow_{#1}}}
\newcommand{\autsteps}[1][YYY]{\ifthenelse{\equal{#1}{YYY}}{\xrightarrow{*}}{\xrightarrow{*}_{#1}}}
\newcommand{\langof}[1]{L(#1)}
\newcommand{\assignments}[2][ZZZ]{\ifthenelse{\equal{#1}{ZZZ}}{\llbracket #2 \rrbracket}{\llbracket #2 \rrbracket_{#1}}}
\newcommand{\freevar}[1]{\mathsf{fv}(#1)}
\newcommand{\im}[1]{\mathsf{im}(#1)}
\newcommand{\direction}[2]{``\ref{#1}$\Rightarrow$\ref{#2}''}
\newcommand{\Powerset}[1]{2^{#1}}
\newtheorem{theorem}{Theorem}[section]
\newtheorem{corollary}[theorem]{Corollary}
\newtheorem{fact}[theorem]{Fact}{\bfseries}{\itshape}
\newtheorem{lemma}[theorem]{Lemma}
\newtheorem{proposition}[theorem]{Proposition}
\newlist{typelist}{enumerate}{1}
\setlist[typelist,1]{label=(\roman*)}
\crefname{typelisti}{type}{types}
\begin{document}

\title[First-Order Logic over the Subword Ordering]{Decidability, Complexity, and Expressiveness of First-Order Logic Over
the Subword Ordering}

\newcommand{\ouraffiliation}{%
Lab.\ Specification \& Verification (LSV), CNRS \& ENS Paris-Saclay, Cachan, France}

\author{Simon Halfon}
\address{\ouraffiliation}
\email{halfon@lsv.fr}

\author{Philippe Schnoebelen}
\address{\ouraffiliation}
\email{phs@lsv.fr}

\author{Georg Zetzsche}
\address{\ouraffiliation}
\thanks{The third author is supported by a fellowship within the
Postdoc-Program of the German Academic Exchange Service (DAAD) and by Labex
DigiCosme, Univ.\ Paris-Saclay, project VERICONISS}
\email{zetzsche@lsv.fr}

\maketitle

\begin{abstract}
We consider first-order logic over the subword ordering on finite
words where each word is available as a constant. Our first
result is that the $\Sigma_1$ theory is undecidable (already over
two letters). 

We investigate the decidability border by considering fragments where
all but a certain number of variables are alternation bounded, meaning
that the variable must always be quantified over languages with a
bounded number of letter alternations.  We prove that when at most two
variables are not alternation bounded, the $\Sigma_1$ fragment is
decidable, and that it becomes undecidable when three variables are
not alternation bounded.  Regarding higher quantifier alternation
depths, we prove that the $\Sigma_2$ fragment is undecidable already
for one variable without alternation bound and that when all variables
are alternation bounded, the entire first-order theory is decidable.

\end{abstract}

\section{Introduction}
\label{sec-intro}

A subsequence of a (finite) sequence $u$ is a
sequence obtained from $u$ by removing any number of elements. For
example, if $u=(\tta, \ttb, \tta, \ttb, \tta)$ then $u'=(\ttb, \ttb
,\tta)$ is a subsequence of $u$, a fact we denote with $u'\subword
u$. Other examples that work for any $u$ are $u\subword u$ (remove
nothing) and $()\subword u$.  In the rest of this paper, we shall use
the terminology from formal methods and will speak of \emph{words} and
their \emph{subwords} rather than finite sequences.

Reasoning about subwords occurs prominently in many areas of computer
science, e.g., in pattern matching (of texts, of DNA strings, etc.),
in coding theory, in theorem proving, in algorithmics, etc.  Closer to
our own motivations, the automatic verification of unreliable channel
systems and related problems involves the subword ordering or some of
its
variants~\cite{abdulla-forward-lcs,BMOSW-fac2012,HSS-lmcs,KS-msttocs}.
Our experience is that reasoning about subwords and related concepts
(e.g., shuffles of words) involves ad hoc
techniques quite unlike the standard tools that work well with
prefixes and suffixes~\cite{KNS-tcs2016}.
\\

\subsection*{The logic of subwords}
In this paper we consider the first-order logic $\FO(A^*,\subword)$ of
words over some alphabet $A=\{\tta,\ttb,\ttc,\ldots\}$ equipped with the subword relation $\subword$.
Our main objective is to understand how and when one can decide
queries formulated in this logic, or decide whether a given formula is
valid.

For example, we consider formulas like
{\makeatletter \tagsleft@true \makeatother %
\begin{gather}
\tag*{$\phi_1$:}
\forall u,u',u'':
 u\subword u'\land u'\subword u''\implies u\subword u''
\:,
\\[.3em]
\tag*{$\phi_2$:}
\exists u:\:
\tta\ttb\ttc\ttd\subword u
\land
\ttb\ttc\ttd\tte\subword u
\land
\tta\ttb\ttc\ttd\tte\not\subword u
\:,
\\[.3em]
\tag*{$\phi_3$:}
\forall u, v: \exists s:
\left(\!\!\!
\begin{array}{rl}
         & u\subword s\land v\subword s\\
\land\!\!& \forall t: u\subword t\land v\subword t \implies s\subword t
\end{array}
\!\!\!\right)
\:.
\end{gather}
}%
Here $\varphi_1$ states that the subword relation is transitive
(which it is).

More interesting is $\varphi_2$, stating that it is possible that
a word contains both $\tta\ttb\ttc\ttd$ and $\ttb\ttc\ttd\tte$ as
subwords but not $\tta\ttb\ttc\ttd\tte$. This formula is true and,
beyond knowing its validity, one is also interested in
\emph{solutions}: can we design a constraint solver that will produce
a witness, e.g., $u=\ttw{bcdeabcd}$, or more generally the set of
solutions?

Our third example, $\varphi_3$, states that words ordered by
subwords are an upper semilattice. This is a more complex formula with
$\Pi_3$ quantifier alternation. It is not valid in general (e.g.,
$\tta\ttb$ and $\ttb\tta$ have no lub) but this depends on the
alphabet $A$ at hand: $\varphi_3$ holds if $A$ is a singleton alphabet, i.e.,
$\{\tta\}^*\models\varphi_3$ but $\{\tta,\ttb\}^*\not\models\varphi_3$.

We say that formulas like $\varphi_1$ or $\varphi_3$ where
constants from $A^*$ do not appear are in the \emph{pure fragment}.
Formally, there are two logics at hand here. The \emph{pure} logic is the
logic of the purely relational structure $\swstructpure{A}$
while the \emph{extended} logic is over the expansion
$\swstructconst{A}$ where there is a constant
symbol $\ttw{w}_i$ for every word in $A^*$.

As we just illustrated with $\varphi_3$, the validity of a
formula may depend on the underlying alphabet even for the pure
fragment.  We note that this phenomenon is not limited to the
degenerate case of singleton alphabets. Indeed, observe that it is
possible to state that $u$ is a letter, i.e., is a word of length 1,
in the pure fragment:
\[
\exists z:\forall x: z\subword x
\land
(x\subword u\implies \bigl(u\subword x\lor x\subword z\bigr))
\:.
\]
Thus, even in the pure fragment, one can state that $A$ contains
$2$, $3$, \ldots, or exactly $n$ letters. Similarly one can state that $A$ is infinite by
saying that  no word contains all letters.
\\

\subsection*{State of the art}

Relatively little is known about deciding the validity of $\swstructpure{A}$
formulas and about algorithms for computing their solutions. By comparison, it
is well known that the $\Sigma_2$-theory of $\FO(A^*,\cdot)$, the
logic of strings with concatenation, is
undecidable~\cite{quine46,durnev1995},  and that its
$\Sigma_1$ fragment (aka ``word equations'') is decidable in
$\PSPACE$~\cite{plandowski2006,jez2016}. Moreover, introducing counting
predicates leads to an undecidable $\Sigma_1$ fragment~\cite{buchi88}.

Regarding the logic of subwords,  Comon and
Treinen showed undecidability for an extended logic
$\FO(A^*,\subword,p_\ttsharp)$ where $A=\{\tta,\ttb,\ttsharp\}$ has three
letters and $p_\ttsharp$ is a unary function that prepends $\ttsharp$
in front of a word, hence is a restricted form of
concatenation~\cite[Prop.~9]{comon94}.  Kuske showed that, when only
the subword predicate is allowed, the logic $\FOpure{A}$ is
undecidable and already its $\Sigma_3$ fragment is undecidable when
$|A|\geq 2$.  Kudinov \textit{et al.} considered definability in
$\swstructpure{A}$ and showed that the predicates definable in
$\swstructpure{A}$ are exactly the arithmetical
predicates\footnote{Those
that are invariant under the automorphisms of the
structure}~\cite{kudinov2010}.

Kuske's result on the $\Sigma_3$ theory leaves open the question
whether smaller fragments are decidable. Karandikar and Schnoebelen
showed that the $\Sigma_2$ theory is undecidable~\cite{KS-fosubw} and
this is tight since the $\Sigma_1$ fragment is decidable, in fact
$\NP$-complete~\cite{kuske2006,KS-fosubw}. 

Karandikar and Schnoebelen
also showed that the two-variable fragment $\FO^2(A^*,\subword)$ is
decidable~\cite{KS-fosubw} and that it has an elementary complexity
upper bound~\cite{KS-csl2016}. Decidability extends to
the logic $\FO^2(A^*,\subword,R_1,R_2,\ldots)$ where arbitrary regular languages
(monadic predicates) are allowed.
\\

\subsection*{Objectives of this paper}

We are interested in solving constraints built with the subword
ordering.  This corresponds to the $\Sigma_1$ fragment but beyond
deciding validity, we are interested in computing sets of solutions: a
formula like $\varphi_2$ can be seen as a conjunctive set of
constraints, ``$\tta\ttb\ttc\ttd\subword x \land \ttb\ttc\ttd\tte\subword
x \land \tta\ttb\ttc\ttd\tte\not\subword x$'' that define a set of words (a
set of tuples when there are several free variables).

A first difficulty is that Kuske's decidability result for the
$\Sigma_1$ fragment only applies to the pure fragment, where constants
are not allowed. That is, we know how to decide the validity of
formulas like $\varphi_1$ but not like $\varphi_2$.
However, using constants inside constraints is natural and
convenient. In particular, it makes it easy to express
piecewise testable constraints (see below), and
we would like to generalise Kuske's result to the extended logic.

We note that, in principle, the difference between the pure and the
extended logic is only superficial since, up to automorphisms,
arbitrary words can be
defined in the logic,\footnote{This is a common situation,
shared with, e.g., $\FO(A^*,\cdot)$ and $\FO(\N,<)$.}
see~\cite{kuske2006,kudinov2010,KS-fosubw}. However this requires
some universal quantification (even when defining the empty word)  that are not allowed when restricting to the
$\Sigma_1$ fragment. So this avenue is closed.
\\

\subsection*{Summary of results}

Our first result is that, \emph{when constants are allowed}, the
$\Sigma_1$ fragment of $\FOconst{A}$ is actually
undecidable. 
In fact the $\Sigma_1$ fragment of $\FOconst[W]{A}$, where \emph{a single
constant} $W\in A^*$ can be named, is undecidable unless $W$ is too
simple.
These results hold as soon as $A$ contains two distinct letters
and exhibit a sharp contrast between the pure and the extended logic.
We found this very surprising because, before hitting on undecidability,
we had already developed algorithms that solve large classes of $\Sigma_1$ constraints.
\\

Our second result identifies a key factor influencing decidability: it
turns out that free variables ranging over a  ``thin''
language like $L=\tta^+\ttb\ttc^*$, are easier to handle than
variables ranging over a ``wide''
language like $L'=(\tta+\ttb)^*$.  The key difference is that a thin language
only allows a bounded number of letter changes (in $L$ we have $\tta$'s, then
$\ttb$'s, then $\ttc$'s) while a wide language contains words with arbitrarily
many alternations between distinct letters.

These observations lead to a new descriptive complexity measure for
the formulas in $\FOconst{A}$. The associated fragments, denoted $\frag{i}{j}$
for $i,j\in\Nat$, consist of all $\Sigma_i$ formulas where $j$ variables, say
$x_1,x_2,\ldots,x_j$ can be used without any restrictions, while all the other
variables must be restricted with respect to letter alternations, say using
$x\in (\tta^*_1\tta^*_2\cdots \tta^*_n)^\ell$ for some $\ell\in \Nat$ and assuming that
$\tta_1,\ldots,\tta_n$ is a fixed enumeration of $A$.  In computer-aided
verification, such bounded quantifications occur in the analysis of bounded
context-switching protocols.

Within this classification framework, we can delineate a precise
undecidability landscape. The $\frag{1}{2}$ fragment is decidable
while $\frag{1}{3}$ is undecidable even for $|A|=2$.  The
$\frag{2}{0}$ fragment is decidable while $\frag{2}{1}$ is not.  In
fact, when all variables are alternation bounded, the entire
first-order theory is decidable.

\begin{table}
\begin{center}
\begin{tabular}{c|cccc}
$\frag{i}{j}$ & 0 & 1 & 2 & 3 \\ \hline
1             & $\NP$ & $\NP$ & in $\NEXP$ & U \\
$i\ge 2$      & $\WeakExp{i-1}$ & U & U & U
\end{tabular}
\end{center}
\caption{The cell in row $i$ and column $j$ shows the decidability/complexity
of the fragment $\frag{i}{j}$.}\label{frag:decidability}
\label{tab-res}
\end{table}

The computational complexity of all mentioned fragments is summarized
in \cref{tab-res}. Note that, in this table, $\WeakExp{n}$ denotes the
$n$-th level of the weak $\EXP$ hierarchy, which lies between $\NEXP$ and
$\EXPSPACE$~\cite{hemachandra89,gottlob95b}.
\\

Finally, we offer a series of expressiveness results showing how various
predicates like concatenation or length function can, or cannot, be defined in
the $\frag{i}{j}$ fragments.  As demonstrated in the paper, expressiveness
results are crucial to obtain hardness results.  Beyond their theoretical
interest, and since pinning down precise properties of words is not easy when
only the subword ordering is available, these results provide a welcome
intermediate language for defining more complex formulas.
\\

\subsection*{Related work}

We already mentioned works on the logic of concatenation, or the
two-variable fragment $\FO^2(A^*,\subword)$.  Because undecidability
appears so easily when reasoning about words, the focus is often on
restricted fragments, typically $\Sigma_1$, aka ``constraint
solving''. Decision methods for constraints over words have been
considered in several contexts but this usually does not include the
subsequence predicate: these works rather consider the prefix
ordering, and/or membership in a regular language, and/or functions
for taking contiguous subsequences or computing the length of
sequences, see, e.g.,~\cite{hooimeijer2012,ganesh2013,abdulla2014}.
\\

\subsection*{Outline of the paper}
We provide in \Cref{sec-basics} the basic definitions and results
necessary for our later developments.  Then we show the undecidability of
the $\Sigma_1$ fragment 
(\Cref{undecidability}) before focusing on the decidable fragments
(\Cref{complexity}). Finally, in \Cref{expressiveness}, we turn to
expressiveness questions.

\section{Subwords and their logics}
\label{sec-basics}

We consider finite words $\ttww,\ttv,...$ over a
given finite alphabet $A$ of letters like $\tta,\ttb,\ldots$.
Concatenation of words is written multiplicatively, with the
empty word $\epsilon$ as unit.  We freely use regular expressions like
$(\tta \ttb)^*+(\ttb \tta)^*$ to denote regular languages.

The length of a word $\ttww$ is written $|\ttww|$ while, for a letter $\tta\in
A$, $|\ttww|_\tta$ denotes the number of occurrences of $\tta$ in $\ttww$. The set
of all words over $A$ is written $A^*$.

A word $\ttv$ is a \emph{factor} of $\ttww$ if there exist words $\ttww_1$ and
$\ttww_2$ such that $\ttww = \ttww_1 \ttv \ttww_2$.  If furthermore $\ttww_1=\epsilon$ then
$\ttv$ is a \emph{prefix} of $\ttww$,
while if $\ttww_2=\epsilon$ then $\ttv$ is a \emph{suffix}.

\subsection*{Subwords.}

We say that a word $\ttww$ is a \emph{subword} (i.e., a subsequence) of
$\ttv$, written $\ttww\subword \ttv$, when $\ttww$ is some $\tta_1\cdots \tta_n$ and $\ttv$
can be written as $\ttv_0 \tta_1 \ttv_1 \cdots \tta_n \ttv_n$ for some
$\ttv_0,\ttv_1,\ldots,\ttv_n\in A^*$, e.g., $\epsilon\subword \ttb \ttb \tta\subword \tta \ttb \tta \ttb \tta$.
We write $\ttww\strictsubword \ttv$ for the
associated strict ordering, where $\ttww\neq \ttv$. Two words $\ttww$ and $\ttv$ are
\emph{incomparable} (with respect to the subword relation), denoted $\ttww
\perp \ttv$, if $\ttww \not \subword \ttv$ and $\ttv \not \subword \ttww$.
Factors are a special case of subwords.

With any $\ttww\in A^*$ we associate its upward closure $\LL{\ttww}$, given by
$\LL{\ttww}\egdef\{\ttv\in A^* ~|~ \ttww\subword \ttv\}$.  For example,
$\LL{\ttw{ab}}=A^*\tta A^*\ttb A^*$. The definition of $\LL{\ttww}$ involves an implicit alphabet
$A$ that will always be clear from the context.

\subsection*{Piecewise testable languages}
Piecewise testable languages (abbreviated PT) constitute a subvariety of the
languages of dot-depth one, themselves a subvariety of the star-free languages,
which are a subvariety of the regular languages~\cite{DGK-ijfcs08}.
Among the several characterizations of PT languages, the most
convenient for our purposes is the following one: $L\subseteq A^*$ is
PT if, and only if, it is a boolean combination of languages of the
form $\up \ttww$ for some $\ttww\in A^*$. Thus the PT languages are exactly
the monadic predicates that can be defined by a boolean combination of
constraints of the form $\ttww_i\subword x$ and/or $\ttww_j\not\subword x$, or
equivalently by a quantifier-free $\varphi_L(x)$ formula in the
$\FOconst{A}$ logic. For example, the solutions of $\phi_2$ (from
the introduction) form a PT language. 
In the following, we often write ``$x\in L$'', where $L$ is a given PT
language, as an abbreviation for $\phi_L(x)$, with the understanding
that this is a $\Sigma_0$ formula.

\subsection*{Logic of subwords}

Let $V$ be the set of variables with typical elements 
$x,y,\ldots,u,v,\ldots$. 
 For a
first-order logic formula $\varphi$ over a structure with domain $D$, we denote
by $\assignments{\varphi}\subseteq D^V$ the set of satisfying
assignments, with typical elements $\alpha,\beta,\ldots$.
If $\varphi$ has only one free variable, say $x$, and there is no danger of
confusion, we sometimes write $\assignments{\varphi}$ to mean $\{\alpha(x) \mid \alpha\in\assignments{\varphi}\}$.
Moreover, $\freevar{\varphi}$ denotes the set of free variables in $\varphi$.

By $\FOpure{A}$, we denote the first-order logic over the structure
$\swstructpure{A}$. In contrast, $\FOconst{A}$ is the first-order logic over the
structure $\swstructconst{A}$, where for each word $\ttww\in A^*$, the signature
provides a constant symbol. In the case of $\FOconst{A}$ and $\FOpure{A}$,
assignments are members of $(A^*)^V$. We will sometimes write $\ttww$ to denote the
assignment that maps every variable to the word $\ttww\in A^*$.  Moreover,
$(x\mapsto \ttww)$ denotes the assignment in $(A^*)^{\{x\}}$ that maps $x$ to $\ttww$.

\subsection*{Bounding alternations}
We define a fragment of first-order logic over the relational structure $\swstructconst{A}$. Let $A=\{\tta_1,\ldots,\tta_n\}$.  The starting point for
introducing the fragments $\frag{i}{j}$ is the observation that if
every variable in a sentence $\varphi$ is introduced by a restricted quantifier
of the form $\exists x\in(\tta_1^*\cdots \tta_n^*)^\ell$ or $\forall
x\in(\tta_1^*\cdots \tta_n^*)^\ell$ for
some $\ell\in\N$, then one can reduce the truth problem 
of $\varphi$ to Presburger arithmetic. Note that the language $(\tta_1^*\cdots
\tta_n^*)^\ell$ is PT, implying that such restrictions, which we
call \emph{alternation bounds}, can be imposed within $\FOconst{A}$
and without any additional quantifiers. This raises
the question of how many variables without alternation bound one can allow
without losing decidability. 

In essence $\frag{i}{j}$ contains all formulas in the $\Sigma_i$
fragment with $j$
variables without alternation bound. A formalization of this \emph{for sentences}
could just be a syntactic restriction: Every quantifier for all but at most $j$
variables must be relative to some $(\tta_1^*\cdots \tta_n^*)^\ell$. However, this
would not restrict free variables, which we need in order to build complex $\frag{i}{j}$
formulas from predicates defined in $\frag{i}{j}$.

Formally, a \emph{formula with alternation
bounds} consists of a formula $\varphi$ of $\FOconst{A}$ and a function
$\ell\colon V\to\N\cup\{\infty\}$, which specifies the alternation bounds. This
means, the semantics $\assignments{(\varphi,\ell)}$ of $(\varphi,\ell)$ is
defined as $\assignments{\tilde{\varphi}}$, where $\tilde{\varphi}$ is defined
as follows.  First, we replace every quantifier $\mathcal{Q}x$
($\mathcal{Q}\in\{\exists,\forall\}$) in $\varphi$ by the relativized $\mathcal{Q}x\in (\tta_1^*\cdots
\tta_n^*)^{\ell(x)}$. Then we add the conjunction 
$\bigwedge_{x\in\freevar{\varphi}, \ell(x)<\infty} x\in (\tta_1^*\cdots
\tta_n^*)^{\ell(x)}$ for the free variables.

The fragment $\frag{i}{j}$ consists of those formulas with alternation bounds
$(\varphi,\ell)$ where $\varphi$ belongs to the $\Sigma_i$ fragment of
$\FOconst{A}$ and has at most $j$ variables $x\in V$ with $\ell(x)=\infty$.
We will always represent a formula in $\frag{i}{j}$ by its $\Sigma_i$ formula
and the function $\ell$ will be clear from the context.
Variables $x\in V$ with $\ell(x)<\infty$ will be called \emph{alternation bounded},
the others \emph{alternation unbounded}.
In order to permit a polynomial translation into an equivalent formula in
ordinary $\FOconst{A}$, the alternation bounds are always encoded in unary.
The fragment $\fragpi{i}{j}$ is defined similarly, with $\Pi_i$ instead of $\Sigma_i$.

Sometimes we define predicates that are satisfied for words with unbounded
alternations (such as ``$u\in \{\tta,\ttb\}^*$'' when $A=\{\tta,\ttb,\ttc\}$), but want to use
the corresponding formula in a context where the variables are alternation
bounded (``$u\in\{\tta,\ttb\}^* \wedge \tta \ttb\not\subword u$''). In that situation, we
want to record the number of alternation unbounded variables we need for the
definition, \emph{disregarding the free variables}. Hence, $\fragf{i}{j}$
denotes those formulas with alternation bound in $\Sigma_i$, where there are at
most $j$ \emph{quantified variables} without alternation bound.  The semantics
is defined as for $\frag{i}{j}$.  The fragment $\fragpif{i}{j}$ is defined with
$\Pi_i$ instead of $\Sigma_i$.

\section{Undecidability}\label{undecidability}

\subsection{The $\frag{1}{3}$ fragment}
We begin with our main result, the undecidability of the $\Sigma_1$ theory of
$\FOconst{A}$ for $|A|\ge 2$. In fact, we will even prove undecidability for
the $\frag{1}{3}$ fragment.  We need a few ingredients. A word $\ttww\in A^+$ is
called \emph{primitive} if there is no $\ttv\in A^+$, $|\ttv|<|\ttww|$, with $\ttww\in \ttv^*$.
The following is a well-known basic fact from word combinatorics (see e.g.
\cite[Exercise 2.5]{berstel79})
\begin{fact}
\label{cor-prim}
If $\ttp \in A^+$ is primitive, then $\ttp\ttww=\ttww \ttp$ is equivalent to $\ttww\in \ttp^*$.
\end{fact}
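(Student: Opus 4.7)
The easy direction ($u\in p^*\Rightarrow pu=up$) is immediate: if $u=p^k$ then $pu=p^{k+1}=up$, and this uses no hypothesis on $p$. The substance is in the converse, and the plan is an induction on $|u|$.

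For the inductive step with $pu=up$, I would split on whether $u$ is at least as long as $p$. If $|u|\ge|p|$, then comparing $pu$ and $up$ character by character shows that $p$ is a prefix of $u$, so I can write $u=pu'$. Substituting into $pu=up$ and cancelling one $p$ on each side yields $pu'=u'p$ with $|u'|<|u|$, so the induction hypothesis gives $u'\in p^*$, hence $u=pu'\in p^*$.

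The remaining case is $0<|u|<|p|$. Comparing $pu$ and $up$ again, $u$ must be a proper nonempty prefix of $p$, say $p=uv$ with $v\ne\varepsilon$. Plugging this into $pu=up$ and cancelling a leading $u$ gives the commutation $uv=vu$ between two nonempty words. Here I would invoke (or prove by a small side induction on $|u|+|v|$) the classical commutation lemma: if two nonempty words commute, they are both powers of a common word $w\in A^+$. Then $p=uv\in w^*$ with both $u$ and $v$ contributing at least one factor of $w$, so $p=w^k$ with $k\ge 2$, contradicting the primitivity of $p$. Hence this case cannot occur, which completes the induction.

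The only genuine obstacle is the commutation lemma for $uv=vu$; everything else is routine prefix arithmetic. Since that lemma is itself a standard tool in word combinatorics (and is needed only to manufacture the contradiction with primitivity), I would either cite it from the same reference \cite{berstel79} or sketch it in one line. An alternative, even shorter, route is to appeal directly to Fine and Wilf's theorem applied to the equality $pu=up$ of length $|p|+|u|\ge|p|+|u|-\gcd(|p|,|u|)$ to get $p,u\in w^*$ for some $w$, and then use primitivity to force $w=p$; but the elementary induction above keeps the proof self-contained.
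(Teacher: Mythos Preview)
Your argument is correct and is in fact the standard textbook proof of this classical commutation fact. However, note that the paper does not prove this statement at all: it is labelled a \emph{Fact}, introduced as ``a well-known basic fact from word combinatorics,'' and simply cited from \cite[Exercise~2.5]{berstel79}. So there is nothing in the paper to compare your proof against; you have supplied a self-contained proof where the paper deliberately chose to appeal to the literature.
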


We also use the following version of the fact that Diophantine sets
are precisely the recursively enumerable sets~\cite{Mat93}.
\begin{theorem}\label{diophantine}
Let $S\subseteq \N$ be a recursively enumerable set. Then there is a finite set
of variables  $\{x_0,\ldots,x_m\}$ and a finite set $E$ of equations, each of
the form
\begin{align*}
x_i=x_j+x_k && x_i=x_j\cdot x_k && x_i=1
\end{align*}
with $i,j,k\in [0,m]$, such that 
\[S=\{y_0 \in\N \mid \exists y_1,\ldots,y_m\in\N\colon \text{$(y_0,\ldots,y_m)$ satisfies $E$}\}. \] 
\end{theorem}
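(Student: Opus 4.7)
The plan is to derive this normal form from the standard Matiyasevich--Robinson--Davis--Putnam theorem in two stages: first obtain a polynomial Diophantine representation of $S$, and then break the polynomial down into the prescribed atomic equations using auxiliary variables.

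First I would invoke the classical result that every recursively enumerable set $S\subseteq\N$ is Diophantine, i.e.\ there exist a polynomial $P\in\mathbb{Z}[y_0,\ldots,y_m]$ such that $S=\{y_0\in\N\mid \exists y_1,\ldots,y_m\in\N\colon P(y_0,\ldots,y_m)=0\}$. By moving the monomials with negative coefficient to the right-hand side, one can rewrite this as $P_1(y_0,\ldots,y_m)=P_2(y_0,\ldots,y_m)$ where $P_1,P_2$ are polynomials with coefficients in $\N$.

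Next I would transform this single polynomial equality into a conjunction of atomic equations of the three allowed shapes. The idea is to introduce a fresh variable $z_t$ for each subterm $t$ occurring in the expression trees for $P_1$ and $P_2$, and to add one equation that relates $z_t$ to the variables naming its immediate subterms:
\begin{itemize}
\item for the constant $1$ I add $z_t=1$;
\item for any natural-number constant $c>1$ appearing as a coefficient, I build $c$ from $1$ by a chain of equations $z_k=z_{k-1}+z_1$, so that only the atomic shapes are used;
\item for a sum subterm $t=t_1+t_2$ I add $z_t=z_{t_1}+z_{t_2}$;
\item for a product subterm $t=t_1\cdot t_2$ I add $z_t=z_{t_1}\cdot z_{t_2}$;
\item for a variable $y_i$ the ``fresh variable'' is just $y_i$ itself.
\end{itemize}
Finally I add the equation $z_{P_1}=z_{P_2}$; this requires a small trick since equality between two variables is not among the allowed shapes, but it can be encoded as $z_{P_1}=z_{P_2}+z_0'$ together with $z_0'+z_0'=z_0'$ to force $z_0'=0$, or more simply by substituting the name $z_{P_1}$ for $z_{P_2}$ throughout. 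Let $E$ be the resulting finite system and let $x_0,\ldots,x_{m'}$ be all the variables used (with $x_0=y_0$).

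For correctness, any solution of $E$ projects to a solution $(y_0,\ldots,y_m)$ of $P_1=P_2$ because every $z_t$ is forced to equal the value of the subterm $t$ under the assignment, by induction on the structure of $t$. Conversely, any solution of $P_1=P_2$ in $\N$ extends uniquely to a solution of $E$ by assigning each $z_t$ the value of $t$. Hence $S=\{y_0\in\N\mid \exists x_1,\ldots,x_{m'}\in\N\colon (x_0,\ldots,x_{m'})\text{ satisfies }E\}$, as required. The main obstacle is purely bookkeeping: choosing the encoding of constants and of the final equality so that only the three prescribed shapes appear; there is no mathematical difficulty once the MRDP theorem is in hand.
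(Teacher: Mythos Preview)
The paper does not actually prove this theorem: it is stated as a known reformulation of the MRDP theorem, with a bare citation to Matiyasevich's monograph~\cite{Mat93} and no further argument. Your proposal therefore cannot be compared against a proof in the paper, because there is none.

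That said, your derivation is essentially the standard one and is correct. Starting from the MRDP representation $P_1(\bar y)=P_2(\bar y)$ with $P_1,P_2\in\N[\bar y]$ and then flattening the arithmetic expression trees into three-address code with fresh variables is exactly how one obtains this normal form. Your handling of integer constants (building $c>1$ from $1$ by repeated additions) and of the top-level equality (either by identifying the two names or by adding a forced-zero variable via $z'_0+z'_0=z'_0$) is fine; the constant $0$, should it arise, is covered by the same $z'_0$ trick. The correctness argument you sketch (by structural induction on subterms, each auxiliary variable is forced to the intended value) is the right one. As you say, once MRDP is granted, nothing remains but bookkeeping.
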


We are now ready to prove our main result.
\begin{theorem}\label{subword:undecidable}
Let $|A|\ge 2$ and $a\in A$. For each recursively enumerable set
$S\subseteq\N$, there is a $\frag{1}{3}$ formula $\varphi$ over the structure
$\FOconst{A}$ with one free variable such that $\assignments{\varphi}=\{a^k
\mid k\in S\}$. 
\end{theorem}
\begin{proof}
We show how to express some basic properties of words and combine these to
build more complex predicates, all the time keeping track of what fragments are
involved.  Here, we always use $u,v,w$ as the free variables of the formula we
currently construct.

Recall that for every PT language $L\subseteq A^*$, we can express ``$u\in
L$'' in $\fragf{0}{0}$: we will use this silently, mainly for
languages of the form $r a^* s$ where $a$ is a letter and $r,s$ are
two words.\footnote{That any language of the form $ra^*s$ is PT is
easy to prove, e.g., using the characterization of~\cite{klima2013}.}
Note also that, since ``$u\in(a+b)^*\,$'' can be expressed in $\fragf{0}{0}$ for $a,b\in A$,
it suffices to prove the theorem in the case $|A|=2$.
\begin{enumerate}
  
\item We can express ``$|u|_a<|v|_a$'' in $\fragf{1}{0}$:
  \[
  \exists x\in a^*\colon x\subword v \wedge x\not\subword u.
  \]
  
\item We can express ``$\exists n\colon u=a^n \wedge v=a^{n-1}b$'' in $\frag{1}{0}$.
  Clearly, it suffices to show that we can express
  ``$\exists n\ge 2\colon u=a^n \wedge v=a^{n-1}b$''.
  Consider the formula:
  \begin{align*}
  u\in a a a^* \wedge v\in a^*b\wedge
  \exists x\in a^*b a a\colon
   |v|_a<|u|_a \wedge v\not\subword x \wedge u\subword x.
  \end{align*}
  Suppose the formula is satisfied with $u=a^n$, $x=a^\ell b a a$ and $v=a^m b$.
  Then $|v|_a<|u|_a$ implies $m<n$. By $v\not\subword x$, we have $\ell<m$ and
  thus $\ell<m<n$, hence $\ell+2\le n$.  On the other hand, $u\subword x$ implies
  $n\le \ell+2$ and thus $n=\ell+2$ and $m=n-1$.

  Conversely, if $u=a^n$ and $v=a^{n-1}b$ for some $n\geq 2$, then the formula is satisfied with
  $x=a^{n-2}b a a$.

\item\label{undec2:last} We can express ``$u,v\in (a+b)^*b \wedge |u|_a=|v|_a$'' in $\fragf{1}{0}$:
  \begin{align*}
  u,v\in (a+b)^* \\ ~\wedge~
  \exists x \in a^*\colon \exists y \in a^*b \colon 
  \left[ \exists n\colon x=a^n \wedge y=a^{n-1}b \right] \\ ~\wedge~
   y\subword u~\wedge~ y\subword v ~\wedge~ x\not\subword u~\wedge~ x\not\subword v.
  \end{align*}
  Suppose the formula is satisfied. Then $a^{n-1}b\subword u$ and
  $a^n\not\subword u$ imply $|u|_a=n-1$. Moreover, if $u$ ended in $a$, then
  $a^{n-1}b\subword u$ would entail $a^n\subword u$, which is not the case. Since
  $|u|\ge 1$, we therefore have $u\in\{a,b\}^*b$.  By symmetry, we have
  $|v|_a=n-1$ and $v\in\{a,b\}^*b$. Hence, $|u|_a=n-1=|v|_a$.

  If $u,v\in\{a,b\}^*b$ with $|u|_a=|v|_a$, then the formula is satisfied with $n=|u|_a+1$.
  
\item We can express ``$\exists n\colon u=aaba^{n}b \wedge v=aba^{n+1}b\wedge w=ba^{n+2}b$'' in $\frag{1}{0}$:
  \begin{align*}
  u\in aaba^*b \wedge v\in aba^*b\wedge w\in ba^*b \\ ~\wedge~
  \left[u,v,w\in \{a,b\}^*b \wedge |u|_a=|v|_a=|w|_a\right].
  \end{align*}
\item\label{undec2:bab+1} We can express ``$\exists n\colon u=ba^nb\wedge v=ba^{n+1}b$'' in $\frag{1}{0}$. It suffices to show that we can
express ``$\exists n\ge 1\colon u=ba^nb\wedge v=ba^{n+1}b$''. Consider the formula:
\begin{align*}
  \exists x \in aaba^*b, y \in aba^*b, z\in ba^*b \colon \\
  \left[ \exists m\colon x=aaba^mb \wedge y=aba^{m+1}b \wedge z=ba^{m+2}b \right] \\
~\wedge~ u,v\in ba^*b ~\wedge~ u\subword y ~\wedge~ u\not\subword x ~\wedge~ v\subword z~\wedge~ v\not\subword y. 
\end{align*}
Suppose the formula is satisfied for $u=b a^kb$ and $v=b a^\ell b$. Then
$u\subword y$ and $u\not\subword x$ imply $k\le m+1$ and $k>m$, hence $k=m+1$.
Moreover, $v\subword z$ and $v\not\subword y$ imply $\ell\le m+2$ and
$\ell>m+1$, hence $\ell=m+2$. Hence, with $n=m+1$ we have $u=b a^nb$ and
$v=b a^{n+1}b$ and $n\ge 1$.

Conversely, if $u=b a^nb$ and $v=b a^{n+1}b$ for some  $n\ge 1$, then the formula is satisfied with $m=n-1$.

\item\label{undec2:anan+1} We can express ``$\exists n\colon u=a^n \wedge v=a^{n+1}$'' in $\frag{1}{0}$. For this, it suffices to
express ``$\exists n\ge 1\colon u=a^n \wedge v=a^{n+1}$''.
As in \cref{undec2:bab+1}, one verifies correctness of the following:
\begin{align*}
  \exists x,y,z\colon
  \left[ \exists m\colon x=b a^mb \wedge y=b a^{m+1}b \wedge z=b a^{m+2}b \right] \\
\wedge ~u,v\in a^*~\wedge~ u\subword y~\wedge~ u\not\subword x ~\wedge~ v\subword z ~\wedge~ v\not\subword y.
\end{align*}

\item We can express ``$v=a^{|u|_a}$'' in $\fragf{1}{0}$:
  \[
  \exists x \in a^*\colon
  \left[ \exists n\colon v=a^n \wedge x=a^{n+1} \right]
  \wedge v\subword u \wedge x\not\subword u.
  \]
  
\item\label{undec2:counting} We can express ``$|u|_a=|v|_a$'' in $\fragf{1}{0}$:
  \[
  \exists x\colon x=a^{|u|_a} \wedge x=a^{|v|_a}.
  \]
  
\item\label{undec2:append-left} For $a\ne b$, we can express ``$u\in a^* \wedge v=bu$'' in $\frag{1}{0}$:
  \[
  u\in a^* \wedge v\in b a^* \wedge |v|_a=|u|_a.
  \]
  
\item\label{undec2:append-right} For $a\ne b$, we can express ``$u\in a^* \wedge v=ub$'' in $\frag{1}{0}$:
  \[
  u\in a^* \wedge v\in a^*b \wedge |v|_a=|u|_a.
  \]
  
\item\label{undec2:add} We can express ``$|w|_a=|u|_a+|v|_a$'' for any $a\in A$ in $\fragf{1}{0}$.
Let $b\in A\setminus \{a\}$:
\begin{align}
\exists x,y\in a^*\colon\exists z\in a^*b a^*\colon x=a^{|u|_a}\wedge y=a^{|v|_a} \nonumber \\
\wedge \ xb\subword z \wedge xab \not\subword z \wedge by\subword z\wedge bya\not\subword z \label{undec2:exact:bounded}\\
\wedge \ |w|_a=|z|_a
\end{align}
Note that we can define $xb$, $(xa)b$ and $b(ya)$ thanks to \cref{undec2:append-left,undec2:append-right,undec2:anan+1}.
The constraints in \cref{undec2:exact:bounded} enforce that $z=xby$ and hence $|z|_a=|x|_a+|y|_a=|u|_a+|v|_a$.

\item For $k,n_0,\ldots,n_k\in\N$, $a\ne b$, 
let $r_a(a^{n_0}b a^{n_1}\cdots
b a^{n_k})=n_k$, which defines a function $r_a\colon \{a,b\}^*\to\N$.
We can express ``$v=a^{r_a(u)}$'' in $\fragf{1}{0}$:
\begin{align*}
v\in a^* \wedge~\exists x\in b^*a^* \colon \exists y\in b^*a^* \colon \\
|x|_b=|y|_b=|u|_b ~\wedge~ |y|_a=|x|_a+1 \\
\wedge ~x\subword u~\wedge~ y\not\subword u~\wedge~ |v|_a=|x|_a.
\end{align*}
Note that $|x|_b=|y|_b=|u|_b$ can be expressed according to
\cref{undec2:counting} and $|y|_a=|x|_a+1$ can be expressed thanks to
\cref{undec2:add}.
Write $u=a^{n_0}b a^{n_1}\cdots b a^{n_k}$.

Suppose the formula is satisfied. Then $|x|_b=|y|_b=|u|_b$ and $|y|_a=|x|_a+1$
imply that $x=b^ka^m$ and $y=b^ka^{m+1}$ for some $m\in\N$. Moreover,
$x\subword u$ implies $m\le n_k$ and $y\not\subword u$ implies $m+1>n_k$, thus
$m=n_k$. Thus, $|v|_a=|x|_a$ entails $|v|_a=n_k$.

Conversely, if $v=a^{n_k}$, then the formula is satisfied with $x=b^ka^{n_k}$ and $y=b^ka^{n_k+1}$.

\item \label{undec2:conc:unary} For $a\in A$, we can express ``$v\in a^* \wedge w=uv$'' in $\fragf{1}{0}$. Let $b\ne a$ and consider the formula
\begin{align}
&v\in a^*~\wedge \nonumber \\
\wedge~~&\exists x\in a^*~ \colon\exists y \in a^* ~\colon x=r_a(u) ~\wedge~ y=r_a(w) \nonumber \\
\wedge ~~& |w|_b=|u|_b~\wedge~ u\subword w\label{undec2:conc:leq} \\
\wedge ~~& |y|_a=|x|_a+|v|_a ~\wedge~ |w|_a=|u|_a+|v|_a \label{undec2:conc:eq}
\end{align}

To show correctness, suppose the formula is satisfied with 
$u=a^{n_0}b a^{n_1}\cdots b a^{n_k}$ and $w=a^{m_0}b a^{m_1}\cdots b a^{m_\ell}$.
The conditions in \cref{undec2:conc:leq}
imply that $k=\ell$ and $w=a^{m_0}b a^{m_1}\cdots b a^{m_k}$ and $n_i\le m_i$ for $i\in[0,k]$.
The conditions in \cref{undec2:conc:eq} then entail $m_k=n_k+|v|_a$ and
$\sum_{i=0}^k m_i=\sum_{i=0}^k n_i+|v|_a$, which together is only possible if
$m_i=n_i$ for $i\in[0,k-1]$. This means we have $w=uv$.
The converse is clear. %

\item\label{undec2:prefix} We can express ``$u$ is prefix of $v$'' in $\frag{1}{3}$:
\[ \bigwedge_{a\in A} \exists x\colon \exists y\in a^*\colon x=uy ~\wedge~ x\subword v ~\wedge~ |x|_a=|v|_a. \]
Suppose the formula is satisfied. Then $uy\subword v$ for some $y\in A^*$,
which implies $u\subword v$.  Let $p$ be the shortest prefix of $v$ with
$u\subword p$. Observe that whenever $uw\subword v$, we also have $pw\subword
v$, because the leftmost embedding of $uw$ in $v$ has to match up $u$ with $p$.
Now towards a contradiction,
assume $|p|>|u|$. Then there is some $a\in A$ with $|p|_a>|u|_a$. The formula
tells us that for some $m\in\N$, we have $ua^m\subword v$ and $|u|_a+m=|v|_a$.
Our observation yields $pa^m\subword v$, and hence
$|v|_a \ge |p|_a + m > |u|_a+ m = |v|_a$, a contradiction. The converse is clear.

\item\label{undec2:conc} We can express ``$w=uv$'' in $\frag{1}{3}$: Since expressibility is
preserved by mirroring, we can express prefix and suffix by
\cref{undec2:prefix}. Let $\prefix$ and $\suffix$ denote the prefix and suffix
relation, respectively. We can use the formula
\[ u\prefix w ~\wedge~ v\suffix w ~\wedge~ \bigwedge_{a\in A} |w|_a=|u|_a+|v|_a. \]

\item For $a,b\in A$, $a\ne b$, we can express ``$u\in (a b)^*$'' in
$\frag{1}{3}$: By \cref{undec2:conc}, we can use the formula
$\exists v\colon v=uab \wedge v=abu,$
which, according to \cref{cor-prim}, is equivalent to $u\in (a b)^*$.

\item For $a,b\in A$, $a\ne b$, we can express ``$|u|_a=|v|_b$'' in $\frag{1}{3}$ by using 
 \[ 
  \exists x\in (a b)^*\colon |u|_a=|x|_a \wedge |v|_b=|x|_b.
  \]

\item\label{undec2:mult} We can express ``$\exists m,n\colon u=a^n \wedge v=a^m \wedge w=a^{m\cdot n}$'' in $\frag{1}{3}$:
  \begin{align*}
    u,v,w\in a^* \\
   ~\wedge~ \exists x \colon [\exists y,z\colon~ y=bu ~\wedge~ z=yx ~\wedge~ z=xy] \\
    ~\wedge~ |x|_b = |v|_a ~\wedge~ |w|_a=|x|_a.
  \end{align*}
The conditions in brackets require $(bu)x=x(bu)$. Since $bu\in ba^*$
is primitive, this is equivalent to $x\in (bu)^*$ (\emph{cf} \cref{cor-prim}).

\item We use the fact that every recursively enumerable set of natural numbers
is Diophantine.  Applying \cref{diophantine} to $S$ yields a finite set $E$ of
equations over the variables $\{x_0,\ldots,x_m\}$.  The formula $\varphi$ is of
the form
\[ \varphi \equiv \exists x_1, x_2, \dots, x_m \in a^*\colon \psi, \]
where $\psi$ is a conjunction of the following $\frag{1}{3}$ formulas. For each equation $x_i=1$, we add $x_i=a$.
For each equation $x_i=x_j+x_k$, we add a formula expressing $|x_i|_a=|x_j|_a + |x_k|_a$.
For each equation $x_i=x_j\cdot x_k$, we add a formula expressing $x_i=a^{|x_j|\cdot|x_k|}$.
Then we clearly have $\assignments{\varphi}=\{a^k \mid k\in S\}$.\qed
\end{enumerate}
\renewcommand{\qedsymbol}{}
\end{proof}

As an immediate consequence, one sees that the truth problem is also
undecidable for the $\Sigma_1$ fragment of the logic of subwords
without constants but enriched with predicates like ``$|u|_a=2$'' for
counting letter occurrences.

It can even be shown that there is a fixed word $W\in\{a,b\}^*$ such
that the truth problem of $\frag{1}{3}$ over $\FOconst[W]{\{a,b\}}$ is
undecidable.  
In order to show undecidability with a single constant, we will need
the fact that each word of length at least $3$ is determined by its
length and its strict subwords. For two words $u,v\in A^*$, we write
$u\sim_n v$ if $\down\{u\}\cap A^{\le n}=\down\{v\}\cap A^{\le n}$.
\begin{lemma}[\cite{KS-fosubw}]\label{shorter-subwords}
Let $n\ge 2$ and $|u|=|v|=n+1$. Then $u=v$ if and only if $u\sim_n v$.
\end{lemma}

\begin{theorem}\label{undecidable:single}
  There is a word $W\in\{a,b\}^*$ such that for every recursively
  enumerable set $S\subseteq\N$, there is a $\frag{1}{3}$-formula
  $\tau$ over the structure $\FOconst[W]{\{a,b\}}$ such that
  \[ \assignments{\tau}=\{a^k \mid k\in S\}. \]
  In particular, the
  truth problem for the $\frag{1}{3}$ fragment over
  $\FOconst[W]{\{a,b\}}$ is undecidable.
\end{theorem}
\begin{proof}
  In the proof of \cref{subword:undecidable}, we have constructed $\frag{1}{3}$ formulas
  over $\FOconst{A}$ expressing successor, addition, and
  multiplication, more precisely: expressing ``$\exists n\ge 0\colon
  u=a^n\wedge v=a^{n+1}$'' and ``$\exists m,n\ge 0\colon u=a^m\wedge
  v=a^n\wedge w=a^{m+n}$'' and ``$\exists m,n\ge 0\colon u=a^n\wedge
  v=a^m\wedge w=a^{m\cdot n}$''. Let $W_1,\ldots,W_r\in\{a,b\}^*$ be
  the constants occurring in these three $\frag{1}{3}$ formulas, plus
  $\varepsilon$. Let $m$ the maximal length of any of these words, and
  let $W=a^{m+1}b^{m+2}$.

  Let $S\subseteq\N$ be recursively enumerable. Then, according to
  \cref{diophantine} and by the choice of $W_1,\ldots,W_r$, there is a
  $\frag{1}{3}$ formula $\varphi$ that only uses constants from
  $W_1,\ldots,W_r$ and with $\assignments{\varphi}=\{a^k \mid k\in S\}$.
  We shall prove that using $W$, we can define all the words
  $W_1,\ldots,W_r$. Consider the formula
  \begin{align*}
        &\exists x_0,y_0,\ldots,x_{2m+3},y_{2m+3}\colon && x_0\strictsubword\cdots\strictsubword x_{2m+3}\subword W\wedge y_0\strictsubword \cdots \strictsubword y_{2m+3}\subword W  \wedge \\
        & \exists x'_0,\ldots,x'_{m+1}\colon && x'_0\strictsubword\cdots\strictsubword x'_{m+1}\subword W~\wedge \\
        & \exists y'_0,\ldots,y'_{m+2}\colon && y'_0\strictsubword\cdots\strictsubword y'_{m+2}\subword W~\wedge \\
        & && x_1\ne y_1 ~\wedge~ x_1\not\subword y'_{m+2} ~\wedge~ y_1\not\subword x'_{m+1} ~\wedge\\
        &\exists z_{01}\colon && x'_1\subword z_{01} ~\wedge~ x'_2\not\subword z_{01} ~\wedge~ y'_1\subword z_{01} ~ \wedge~y'_2\not\subword z_{01} ~\wedge \\
        &\exists z_{10}\colon && x'_1\subword z_{10} ~\wedge~ x'_2\not\subword z_{10} ~\wedge~ y'_1\subword z_{10} ~ \wedge~y'_2\not\subword z_{10} ~\wedge\\
        & && z_{01}\subword W ~\wedge~ z_{01} \ne z_{10}
      \end{align*}
      If it is satisfied, then $|x_i|=|y_i|=i$ for $i\in[0,2m+3]$ and
      since $x_1\ne y_1$, we have $\{x_1,y_1\}=\{a,b\}$. Since
      $x_1\not\subword y'_{m+2}$ we get $y'_i\in y_1^*$ and thus
      $y'_i=y_1^i$ for $i\in[0,m+2]$, which is only possible with
      $y_1=b$. This implies $x_1=a$.  In particular, we get
      $\{z_{01},z_{10}\}=\{ab, ba\}$. Since $z_{01}\subword W$, we
      have $z_{01}=ab$ and $z_{10}=ba$.  On the other hand, if $|x_i|=|y_i|=i$ for $i\in[0,2m+3]$,
      $x_1=a$, $y_1=b$,
      $x'_i=a^i$, $y'_j=b^j$ for $i\in[0,m+1]$, $j\in[0,m+2]$, $z_{01}=ab$, and $z_{10}=ba$,
      then the formula is clearly satisfied.

      Hence, we can already define all words of length at most $2$ and
      all words $a^i$ and $b^i$ for $i\in[0,m+1]$. This lets us define
      other predicates.
      \begin{enumerate}
      \item For each $0\le \ell\le m$, we can express
        ``$|u|_a=\ell$'' using the formula
        \[ a^\ell\subword u ~\wedge~a^{\ell+1}\not\subword u \]
        Note that since $\ell+1\le m+1$, we can already define
        $a^{\ell+1}$. The same way, we can express
        ``$|u|_b=\ell$''.
      \item For each $0\le \ell\le m$, we can express ``$|u|=\ell$''
        using the formula 
        \[ \bigvee_{i+j=\ell} |u|_a=i
        \wedge |u|_b=j. \]
      \item For each word $w\in A^{\le m}$, $|w|>2$, we can define
        $w$. We proceed by induction. For $|w|\le 2$, we
        can already define $w$. Thus, suppose we can define
        every $v\in A^{\le n}$ and let $w\in
        A^{n+1}$ with $n+1\le m$. Consider the formula
        \[ |u|=n+1 ~\wedge~\bigwedge_{v\in A^{\le n},~v\subword w}
        v\subword u ~\wedge~ \bigwedge_{v\in A^{\le
            n},~v\not\subword w} v\not\subword u. \]
        Clearly, if $u=w$, then the formula is
        satisfied. On the other hand, suppose the formula is
        satisfied.  It expresses that $\down\{u\}\cap A^{\le n}=\down\{w\}\cap A^{\le n}$.
        According to \cref{shorter-subwords}, this
        implies $u=w$.
      \end{enumerate}
      Note that all the variables we introduced to define the words in
      $A^{\le m}$ carry words of length at most $2m+3$, meaning that
      we may assume that they are alternation bounded.  Thus, we can
      define all words in $A^{\le m}$ using forumlas in $\frag{1}{0}$.
      Therefore, we can turn $\varphi$ into a $\frag{1}{3}$ formula
      $\tau$ that contains $W$ as its only constant and satisfies
      $\assignments{\tau}=\assignments{\varphi}$.

      It remains to show the second statement of the
      \lcnamecref{undecidable:single}. Let $S\subseteq\N$ be
      recursively enumerable but undecidable and let $k\in\N$ be
      given. We choose the formula $\varphi$ as above, but we
      modify it as follows. Let $\varphi_0\equiv\varphi$, and for
      $i\in[1,k]$, let $\varphi_i$ express
      \[ \exists y\colon \varphi_{i-1}(y)\wedge \exists n\colon x=a^n\wedge y=a^{n+1}. \] 
      Finally, let $\varphi_{k+1}$ be the formula $\exists x\colon
      \varphi_k(x)\wedge x=\varepsilon$.  Note that by the choice of
      $W_1,\ldots,W_r$, we may assume that $\varphi_{k+1}$ contains
      only the constants $W_1,\ldots,W_r$. Note that $\varphi_{k+1}$
      has no free variables and is true if and only if $k\in S$.  Now
      $\tau_{k+1}$ is obtained from $\varphi_{k+1}$ just as $\tau$ is
      obtained from $\varphi$. It follows as above that $\tau_{k+1}$
      is true if and only if $k\in S$. This proves the second
      statement of the \lcnamecref{undecidable:single}.
\end{proof}

Here $W$ must be complex enough: For instance, the $\Sigma_1$ fragment of
$\FOconst[\epsilon]{\{a,b\}}$ and of $\FOconst[a]{\{a,b\}}$, respectively, is decidable.
\begin{theorem}\label{decidable-short}
  The $\Sigma_1$-fragment of $\FOconst[\epsilon,a]{\{a,b\}}$ is decidable.
\end{theorem}
\begin{proof}
  We may assume that the input formula is of the form
  $\varphi\equiv \exists x_1,\ldots,x_n\colon \psi$, where $\psi$ is a conjunction
  of literals of the following forms:
  \begin{align*} 
    c\subword x && c\not\subword x && x\subword c     && x\not\subword c \\
                && x\subword y     && x\not\subword y &&
              \end{align*}
   where $c\in\{\varepsilon, a\}$ and $x\in X=\{x_1,\ldots,x_n\}$. For each literal $x\subword c$, we can
guess whether $x=\epsilon$ or $x=a$ and hence assume that these do not occur. Literals of the form $\epsilon\subword x$ are always satisfied, whereas $\epsilon\not\subword x$ is never satisfied. Hence, without loss of generality, these do not occur either and we may assume that all literals are of the form
\begin{align*}
a\subword x &&  a\not\subword x  &&   x\not\subword\epsilon &&   x\not\subword a \\
            && x\subword y && x\not\subword y. 
\end{align*}
Moreover $x\not\subword \epsilon$ is equivalent to $x\ne\epsilon$ and the literal $a\not\subword x$ is equivalent to $x\in b^*$. We can therefore assume that all literals are of the form
\begin{align*}
a\subword x &&  x\in b^*  &&   x\ne\epsilon &&   x\not\subword a \\
            && x\subword y && x\not\subword y 
\end{align*}
Let $L\subseteq X$ be the set of those variables for
which we have a $x\in b^*$ literal. Clearly, $x\in b^*$ and $x\ne\epsilon$ together
mean $x\in b^+$. In the same way, $x\in b^*$ and $x\not\subword a$ together mean $x\in b^+$.
Furthermore, $a\subword x$ and $x\in b^*$ are mutually exclusive. Hence, we can rewrite our constraint system as follows: 
\begin{itemize}
\item For each $x\in L$, we have either a constraint $x\in b^*$ or $x\in b^+$. 
\item For each $x\in X\setminus L$, we have a set of constraints of the form $a\subword x$, $x\ne\epsilon$, or $x\not\subword a$.
\item We have constraints of the form $x\subword y$ and $x\not\subword y$.
\end{itemize}
As a final reformulation step, note that every $u\in\{a,b\}^*$
satisfies either $u\in b^*$ or $a\subword u$.  Therefore, we may
assume that for every $x\in X$, we have either a constraint $x\in b^*$
or $x\in b^+$ (and hence $x\in L$) or $a\subword x$. Notice that if we
already have $a\subword x$, then $x\ne\epsilon$ is redundant. Thus, we
have the following constraints:
\begin{enumerate}
\item For each $x\in L$, we have either $x\in b^*$ or $x\in b^+$.
\item For each $x\in X\setminus L$, we have $a\subword x$ and possibly $x\not\subword a$.
\item A set of constraints of the form $x\subword y$ or $x\not\subword y$.
\end{enumerate}
We say that a partial order $(X,\le)$ is \emph{compatible} if 
\begin{enumerate}
\item $L$ is downward closed and linearly ordered,
\item for each constraint $x\subword y$ ($x\not\subword y$), we have $x\le y$ ($x\not\le y$).
\end{enumerate}
We claim that $\varphi$ is satisfied in
$\FOconst[\epsilon,a]{\{a,b\}}$ if and only if there is a compatible
partial order on $X$. Since the latter is clearly decidable, this
implies the \lcnamecref{decidable-short}.

Of course, if $\varphi$ is satisfied, then the subword ordering
induces a compatible partial order on $X$. So let us prove the
converse and suppose $(X,\le)$ is a compatible partial order and let
$P=X\setminus L$. Then, $(P,\le)$ is a partial order and we can find
some $m\ge 0$ such that $(P,\le)$ embeds into the lattice $\{0,1\}^m$
of $m$-tuples over $\{0,1\}$ with componentwise comparison. Consider
such an embedding with $m\ge 2$.  This embedding allows us to assign
to each $x\in P$ a word $u_x\in a\beta_1\cdots a\beta_m$, where
$\beta_1,\ldots,\beta_m\in\{\epsilon,b\}$ such that $x\le y$ if and
only if $u_x\subword u_y$.

Now write $L=\{\ell_1,\ldots,\ell_k\}$ with $\ell_1\le\cdots \le \ell_k$. We now define a function
$f\colon X\to \{0,\ldots,k\}$. Note
that for each $x\in P$, the set $\down\{x\}\cap L$ is a downward
closed subset of $L$ and hence of the form $\{\ell_1,\ldots,\ell_i\}$ for
some $i\ge 0$. In this case, set $f(x)=i$.  Moreover, let
\[ P_i = \{x\in P \mid \down\{x\}\cap L=\{\ell_1,\ldots,\ell_i\}\}. \] 
This
allows us to construct an assignment of words $v_x$ to variables
$x$. Let us explain the intuition.  In order ensure that
$v_{\ell_1}\not\subword v_x$ for all $x\in P_0$, we let $v_x=u_x$ and
notice that then, the words $v_x$ all contain at most $m$-many
$b$'s. Hence, we set $v_{\ell_i}=b^{m+1}$. Now, we have to make sure that
the words for $v_x$ with $x\in P_1$ all contain $v_{\ell_1}$ as a
subword, so we pad the words $u_x$ with $b$'s on the left: We set
$v_x=b^{m+1}u_x$. Now, in turn, we need to make sure that $v_{\ell_2}$
contains more than $(m+1)+m$-many $b$'s, leading to
$v_{\ell_2}=b^{2(m+1)}$, and so on. Thus, we set:
\begin{align*} v_{\ell_i} = b^{i(m+1)} && v_x = b^{f(x)\cdot (m+1)} u_x \end{align*}
for $i\in\{1,\ldots,k\}$ and $x\in P$. Let us show that this
assignment sastisfies our constraint system. 
\begin{itemize}
\item Consider a constraint $x\subword y$. We have $x\le y$. 
\begin{itemize}
\item If $x,y\in P$, then $\down\{x\}\cap L\subseteq\down\{y\}\cap L$
  and thus $f(x)\le f(y)$. Since also $u_x\subword u_y$, we have
  $v_x\subword v_y$.
\item If $y\in L$, then also $x\in L$ (since $L$ is downward closed) and thus clearly $v_x\subword v_y$.
\item If $y\in P$ and $x\in L$. Suppose $x=y_i$ and $f(y)=j$. By definition of $f$, we have $i\le j$
and hence $v_{x}=v_{\ell_i}=b^{i(m+1)}\subword b^{j(m+1)}u_y=v_y$.
\end{itemize}
\item Consider a constraint $x\not\subword y$. Then $x\not\le y$.
\begin{itemize}
\item If $x,y\in P$, then $u_x\not\subword u_y$ by choice of $u_x$ and $u_y$. In particular, we have 
$v_x=b^{f(x)\cdot (m+1)} u_x \not\subword b^{f(y)\cdot (m+1)} u_y=v_y$.
\item If $x\in P$ and $y\in L$, then $v_y\in b^*$ and $a\subword v_x$. Thus $v_x\not\subword v_y$.
\item If $x\in L$ and $y\in P$, say $x=\ell_i$. Then $x\not\le y$ means that $\ell_i\notin\down\{y\}\cap L$ and hence $f(y)<i$.
Note that $v_y=b^{f(y)(m+1)} u_y$ and that $|u_y|_b\le m$. Therefore $|v_y|_b\le f(y)(m+1)+m<i(m+1)$ and hence 
$v_x=v_{\ell_i}=b^{i(m+1)}\not\subword v_y$.
\item If $x,y\in L$, then $x=y_i$ and $y=y_j$ with $j<i$. Then clearly $v_x\not\subword v_y$.
\end{itemize}
\item Constraints $x\in b^*$ or $x\in b^+$ with $x\in L$ are of course satisfied.
\item Constraints $x\not\subword a$ with $x\in P$ are satisfied because $|u_x|_a\ge m\ge 2$ for every $x\in P$.
\end{itemize}
This established our claim and thus the \lcnamecref{decidable-short}.
\end{proof}

This raises an interesting question: For which
sets $\{W_1,W_2,\ldots\}\subseteq A^*$ of constants is the truth
problem for $\Sigma_1$ sentences over $\FOconst[W_1,W_2,\ldots]{A}$
decidable?

\subsection{The $\frag{2}{1}$ fragment}

Our next result is that if we allow one more quantifier alternation, then already
one variable without alternation bound is sufficient to prove undecidability.

\begin{theorem}
Let $|A|\ge 2$ and $a\in A$. For each recursively enumerable set
$S\subseteq\N$, there is a $\frag{2}{1}$ formula $\varphi$ over the structure
$\FOconst{A}$ with one free variable such that $\assignments{\varphi}=\{a^k
\mid k\in S\}$. In particular, the truth problem for $\frag{2}{1}$ is undecidable.
\end{theorem}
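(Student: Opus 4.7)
The plan is to imitate the Diophantine reduction of \cref{subword:undecidable}. For a given recursively enumerable $S\subseteq\N$, \cref{diophantine} gives equations in $+$, $\cdot$, and ``$=1$'' over $x_0,\ldots,x_m$, and I would build a $\frag{2}{1}$ formula $\varphi(x_0)$ that existentially binds $x_1,\ldots,x_m\in\tta^{*}$ and asserts those equations on the unary counters $|x_i|_\tta$. Since the counters and $x_0$ all live in $\tta^{*}$, they are alternation-bounded and cost nothing from the single-unbounded-variable budget. The arithmetic predicates from items (1)--(12) of the previous proof are already $\fragf{1}{0}$, so only multiplication (\cref{undec2:mult}) needs to be redone: the conjunct $\exists y,z\colon y=\ttb u\wedge z=yx\wedge z=xy$ of that proof introduced two simultaneously alternation-unbounded words $x$ and $z$, exceeding $\frag{2}{1}$'s budget.

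My approach replaces $z=yx\wedge z=xy$, i.e.\ $yx=xy$, by a $\Pi_1$ characterisation that only quantifies alternation-bounded probes. For primitive $y$ I would prove the Fine--Wilf-style equivalence
\[
x\in y^{*} \quad\Longleftrightarrow\quad (x=\eword \vee y\subword x)\ \wedge\ \forall z\colon\bigl(yz\subword x\iff zy\subword x\bigr).
\]
The $\Rightarrow$ direction follows from the observation that the leftmost (resp.\ rightmost) embedding of $y$ in $y^{k}$ leaves $y^{k-1}$ available for $z$, so both sides reduce to $z\subword y^{k-1}$. The $\Leftarrow$ direction is a standard primitivity argument: the biconditional forces the embedding of $y$ in $x$ to commute with prepending and appending arbitrary $z$, which for primitive $y$ leaves no option but $x\in y^{*}$. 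Since $y=\ttb v$ with $v\in\tta^{*}$ is primitive (any nontrivial root would need at least two $\ttb$'s), the new multiplication predicate becomes
\[
\exists x, y\colon\ y=\ttb v\,\wedge\,|x|_\ttb=|u|_\tta\,\wedge\,|w|_\tta=|x|_\tta\,\wedge\,y\subword x\,\wedge\,\forall z\colon\bigl(yz\subword x\iff zy\subword x\bigr),
\]
where $x$ is the sole alternation-unbounded variable and $y\in\ttb\tta^{*}$, $z$ are alternation-bounded.

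The main obstacle is keeping this formula inside $\Sigma_2$. The clauses $yz\subword x$ and $zy\subword x$ each hide a concatenation, and the naive rendering $\exists w\colon w=yz\wedge w\subword x$ would, once nested inside the outer $\forall z$, push the prefix to $\Sigma_3$. To stay inside $\Sigma_2$ I would first extend the bounded-right-operand concatenation of \cref{undec2:conc:unary} to a $\fragf{1}{0}$ predicate for concatenating two alternation-bounded words, via a finite case-split on the boundary letters and counting block sizes; the resulting $\Pi_1$-definable $w=yz$ then lets me fold the witness $w$ into the $\forall z$ block as an additional universally bound alternation-bounded variable. After this bookkeeping, the multiplication predicate lies in $\frag{2}{1}$, and the rest of the argument---assembling the Diophantine equations as in step~(18) of \cref{subword:undecidable}---yields a $\frag{2}{1}$ formula $\varphi$ with $\assignments{\varphi}=\{\tta^k\mid k\in S\}$, giving the undecidability.
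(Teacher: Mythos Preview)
The central gap is that your ``Fine--Wilf-style'' equivalence is false. Take $y=b$ and $x=aba$: then $y\subword x$, and for every $z$ one has $bz\subword aba$ iff $z\subword a$ iff $zb\subword aba$ (the single $b$ in $aba$ sits at position~$2$, so the leftmost embedding of $b$ leaves only the trailing $a$ for $z$, and symmetrically the rightmost embedding leaves only the leading $a$). Hence the biconditional holds for all $z$, yet $aba\notin b^{*}$. The same phenomenon occurs for $y=ba$ and $x=abaa$: one checks $(ba)z\subword abaa$ iff $z\subword a$ iff $z(ba)\subword abaa$, but $abaa\notin(ba)^{*}$. Symmetric padding around a single copy of $y$ makes left and right embeddings of $y$ behave identically without forcing $x$ to be a power of $y$; subword commutation is much weaker than concatenation commutation, and \cref{cor-prim} simply does not transfer to $\subword$. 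Even had the equivalence been correct for unrestricted $z$, your formula only quantifies over alternation-bounded $z$, so the $\Leftarrow$ direction would still need a separate argument that bounded probes suffice.

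A secondary point: your bookkeeping for concatenation is stated backwards. What you actually need is that ``$w=yz$'' is $\fragf{1}{0}$-definable for alternation-bounded $w,y,z$ (which it is, via the Presburger inter-reduction), so that its \emph{negation} is $\Pi_1$; then
\[
\forall z,w,w'\colon \neg(w{=}yz)\ \vee\ \neg(w'{=}zy)\ \vee\ (w\subword x\leftrightarrow w'\subword x)
\]
prenexes to a genuine $\Pi_1$ formula because $(\forall s\,A(s))\vee(\forall t\,B(t))\equiv\forall s,t\,(A(s)\vee B(t))$. So that part can be salvaged, but it does not rescue the argument. The paper's route avoids commutation entirely: it expresses ``$u\in a^{*}\wedge v\in(bu)^{*}$'' directly in $\fragpi{1}{1}$ by probing $v$ with alternation-bounded words $x\in b^{+}a^{*}b^{*}$ satisfying $|x|_{b}=|v|_{b}$, and checking that $x\subword v$ holds exactly when $|x|_{a}\le|u|_{a}$; this pins every $a$-block of $v$ to length $|u|_{a}$.
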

\begin{proof}
\begin{enumerate}
\item We can express ``$|u|_a\le |v|_a$'' in $\fragpif{1}{0}$:
\[ \forall x\in a^* \colon x\not \subword u \vee x\subword v. \]
Hence, we can express ``$|u|_a=|v|_a$'' in $\fragpif{1}{0}$.
\item We can express ``$|u|_a>|v|_a$'' in $\fragpif{1}{0}$: This follows from the fact that
$|u|_a\le|v|_a$ is expressible in $\fragf{1}{0}$.
\item We can express ``$|u|_a\ne |v|_a$'' in $\fragpif{1}{0}$ according to the previous item.
\item We can express ``$u\in a^* \wedge v\in (bu)^*$'' in $\fragpi{1}{1}$. It clearly
suffices to express ``$u\in a^* \wedge v\in (bu)^+$'' in $\fragpi{1}{1}$. Consider the formula
\begin{multline*}
v\in b\{a,b\}^* 
~\wedge~ \forall x\in b^+a^*b^*\colon
\bigl[|x|_b\ne |v|_b \\
\vee~ (|x|_a>|u|_a\wedge x\not\subword v) 
~\vee~ (|x|_a\le |u|_a\wedge x\subword v) \bigr].
\end{multline*}
Note that ``$v\in b\{a,b\}^*$'' is expressible in $\fragpif{1}{0}$ because
$v\in a\{a,b\}^*$ is expressible in $\fragf{1}{0}$ (see \cref{undec2:last} in
the proof of \cref{subword:undecidable}).

Moreover, notice that since $b^*a^*b^*=\{a,b\}^*\setminus \LL{a b a}$, the
language $b^+a^*b^* = (b^*a^*b^*)\cap (\LL{b a}\cup (\LL{b}\setminus \LL{a}))$
is piecewise testable and thus definable in $\fragf{0}{0}$.
\item We can express ``$|u|_a=|v|_b$'' in $\fragf{2}{1}$:
\[\exists x\in (a b)^* \colon |u|_a=|x|_a \wedge |v|_b=|x|_b. \]
\item We can express ``$\exists m,n\colon u=a^m \wedge v=a^n \wedge w=a^{m\cdot n}$'' in $\frag{2}{1}$:
\begin{align*}
u\in a^*\wedge v\in a^*\wedge w\in a^*
 \wedge \exists y\in b^*\colon\exists x\in (bu)^*\colon \\
|x|_b=|y|_b \wedge |y|_b=|v|_a\wedge |x|_a=|w|_a.
\end{align*}
Note that we employ the variable $y$ because directly expressing $|x|_b=|v|_a$
(using the previous item) would require an additional alternation unbounded
variable besides $x$, but we can only use one.
\item Recall that ``$|w|_a=|u|_a+|v|_a$'' is expressible in $\fragf{1}{0}$ (see
\cref{undec2:add} of \cref{subword:undecidable}) and hence ``$\exists m,n\colon
u=a^m \wedge v=a^n\wedge w=a^{m+n}$'' in $\frag{1}{0}$.  Thus, we can implement
Diophantine equations as in the proof of \cref{subword:undecidable}.
\end{enumerate}
\end{proof}

\subsection{The $\Sigma_2$ fragment over two letters in the pure logic}
The final result in this section settles the question of how many letters are
needed to make the $\Sigma_2$ fragment of $\FOpure{A}$ undecidable. We show
here that two letters suffice. Observe that if $|A|=1$, $\FOpure{A}$ can be
interpreted in $\FO(\N, <)$ and is thus decidable.

Let $\mu\colon A^*\to A^*$ be a map. It is called a \emph{morphism} if
$\mu(uv)=\mu(u)\mu(v)$ for all $u,v\in A^*$.  It is an
\emph{anti-morphism} if $\mu(uv)=\mu(v)\mu(u)$ for all $u,v\in A^*$.
Finally, it is an \emph{automorphism} of $(A^*,\subword)$ if for any
$u,v\in A^*$, we have $u\subword v$ if and only if
$\mu(u)\subword\mu(v)$. 

Note that if we have no constants, we cannot define a language in
$a^*$, because all definable subsets are closed under automorphisms of
$(A^*,\mathord{\subword})$. It will be useful for the next proof to
have a classification of all automorphisms of $(A^*,\subword)$.  The
following is shown implicitly by Kudinov et. al.~\cite{kudinov2010},
but we include a short proof for completeness.
\begin{lemma}\label{automorphisms}
  A map $\mu$ is an automorphism of $(A^*,\subword)$ if and only if
  \begin{enumerate}[label=(\roman*)]
  \item $\mu$ is either a morphism or an
    anti-morphism and
  \item $\mu$ permutes $A$.
  \end{enumerate}
\end{lemma}
\begin{proof}
  Clearly, maps as described in the \lcnamecref{automorphisms} are
  automorphisms.  Assume $\mu$ is an automorphism.  Since $\mu$ has to
  preserve the minimal element, we have
  $\mu(\varepsilon)=\varepsilon$.  It also has to preserve the set of
  minimal elements of $A^*\setminus \{\varepsilon\}$, hence the set
  $A$. Repeating this argument yields that $\mu$ has to preserve
  length. Therefore, according to \cref{shorter-subwords}, if $\mu$ is
  identical on $A^{\le n}$ for $n\ge 2$, then it is the identity on
  $A^{\le n+1}$. By induction, this implies that if an automorphism is
  identical on $A^{\le 2}$, then it is the identity on $A^*$. Hence,
  if two automorphisms agree on $A^{\le 2}$, then they are the
  same. It therefore suffices to show that every automorphism $\mu$
  agrees on $A^{\le 2}$ with a map as described in the
  \lcnamecref{automorphisms}.

  Since $\mu$ preserves the set $A$, the map $\pi=\mu|_A$ is a
  permutation of $A$.  Moreover, for any $a,b\in A$, we have
  $\mu(ab)=\mu(a)\mu(b)$ or $\mu(ab)=\mu(b)\mu(a)$.
  If $\mu(ab)=\mu(a)\mu(b)$, then we cannot have
  $\mu(bc)=\mu(c)\mu(b)$, because the two words $ab$ and $bc$ have
  only one common upper bound of length three (namely $abc$), whereas
  the words $\mu(a)\mu(b)$ and $\mu(c)\mu(b)$ have two, namely
  $\mu(c)\mu(a)\mu(b)$ and $\mu(a)\mu(c)\mu(b)$. Therefore, if
  $\mu(ab)=\mu(a)\mu(b)$, then $\mu(bc)=\mu(b)\mu(c)$. In particular,
  if $\mu(ab)=\mu(a)\mu(b)$, then we have $\mu(cd)=\mu(c)\mu(d)$ for
  all $c,d\in A$. Hence on $A^{\le 2}$, $\mu$ agrees with a map as
  decribed.
\end{proof}

\begin{corollary}\label{undecidable:sigma2}
Let $|A|\ge 2$.  For each recursively enumerable set $S\subseteq\N$, there is a
$\Sigma_2$ formula $\tau$ over the structure $\FOpure{A}$  that defines the language 
\[ \assignments{\tau}=\{a^k\mid a\in A,~k\in S\}. \]
 In particular, the truth
problem for $\Sigma_2$ is undecidable.
\end{corollary}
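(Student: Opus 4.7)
The plan is to transfer the construction of the previous theorem, which provides a $\Sigma_2$ formula over $\swstructconst{A}$ defining $\{\tta^k \mid k\in S\}$, into the pure structure $\swstructpure{A}$ while preserving the $\Sigma_2$ quantifier alternation. Since the target set $\{a^k \mid a\in A,\ k\in S\}$ is invariant under every letter-permuting automorphism of $\swstructpure{A}$ (and each $a^k$ is a palindrome, so reversal is trivial), it suffices to construct a formula $\tau(x)$ asserting that $x$ is a power of \emph{some} letter $\alpha$ whose exponent lies in $S$.

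First I would existentially quantify two letter variables $\alpha,\beta$ with $\alpha\neq\beta$ to play the roles of the constants $\tta,\ttb$ used in the previous theorem. As pointed out in \Cref{sec-intro}, ``$\alpha$ is a letter'' is $\Sigma_2$-expressible in $\swstructpure{A}$ via $\exists z\colon \forall u\colon z\subword u \wedge (u\subword\alpha \implies \alpha\subword u \vee u\subword z)$, with $z$ forced to be $\eword$. The outer existentials over $\alpha,\beta,z_\alpha,z_\beta$ and the inner universals over $u_\alpha,u_\beta$ merge naturally into the eventual $\exists\vec y\colon\forall\vec z$ prenex.

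Then I would replay the $\frag{2}{1}$ formula of the previous theorem with $\alpha,\beta$ in place of $\tta,\ttb$. The key observation is that, once $\alpha$ is known to be a letter, the alternation-bound restriction ``$u\in\alpha^*$'' admits the $\Pi_1$ characterisation $\forall y\colon y\subword u \implies y\subword\alpha \vee \alpha\subword y$, which slots directly into the outer $\forall$-block. The counting predicates (like $|u|_\alpha=|v|_\alpha$) and the delicate $\fragpi{1}{1}$ predicate ``$v\in(\beta u)^*$'' of the previous proof are built from such restrictions together with a bounded collection of subword constraints involving short constant words over $\{\alpha,\beta\}$. Each such constant (e.g.\ $\alpha\beta\alpha$, $\beta\alpha$) is brought in as an existentially quantified auxiliary witness variable $\tilde w$, pinned down by a pure formula that uses only $\alpha,\beta$ and whose internal witnesses do not depend on any universally quantified variable of the surrounding counting or $(\beta u)^*$ predicate. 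After assembling, the outer $\exists$-block collects $\alpha,\beta$, the letter- and constant-witnesses and the Diophantine variables $y_1,\ldots,y_m\in\alpha^*$, while the outer $\forall$-block collects every universal variable coming from the letter definitions, the $\alpha^*$-restrictions and the universal part of the $(\beta u)^*$ predicate.

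The main obstacle is to keep the global alternation at $\Sigma_2$. In pure logic, the unfolding of an alternation-bound universal of the form ``$\forall x\in\alpha^*\colon\ldots$'' produces a $\Pi_2$ sub-formula, and inserting one naively under the outer $\forall$-block of $\tau$ would cause a $\Sigma_3$ blow-up. The technical heart of the argument is therefore to arrange each inner existential witness (for letters, for constant words, for Diophantine variables, and for the auxiliaries of the counting and $(\beta u)^*$ predicates) so that it depends only on the free variable $x$ and on other outer existentials, but not on any outer universally quantified variable. Because the short constants needed are fixed functions of $\alpha,\beta$ (independent of the values taken by universally quantified witnesses), each such existential can be Skolemised upward into the top-level $\exists$-block, and the entire formula flattens into a genuine $\Sigma_2$ prenex $\exists\vec y\colon\forall\vec z\colon\varphi$ with $\varphi$ quantifier-free.
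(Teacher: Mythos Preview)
Your plan differs from the paper's in two respects. First, the paper starts not from the $\frag{2}{1}$ theorem but from \cref{subword:undecidable} (the $\frag{1}{3}$ result), whose formula $\varphi$ is $\Sigma_1$ with constants $w_1,\ldots,w_m$. Second, instead of rebuilding the constants by hand, the paper invokes a result from~\cite{KS-fosubw}: given $w_1,\ldots,w_m\in A^*$, there is a pure $\Sigma_2$ formula $\psi(x_1,\ldots,x_m)$ satisfied exactly by the tuples $(\beta(w_1),\ldots,\beta(w_m))$ and $(\rev{\beta(w_1)},\ldots,\rev{\beta(w_m)})$ for $\beta$ ranging over letter permutations. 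Then $\tau\equiv\exists x_1,\ldots,x_m\colon\psi\wedge\varphi'$ (with each $w_i$ replaced by $x_i$ in $\varphi$) is immediately $\Sigma_2$ because $\varphi'$ is $\Sigma_1$; the permutation/reversal ambiguity merely maps $\{\tta^k\mid k\in S\}$ to $\{\beta(\tta)^k\mid k\in S\}$, giving the claimed language.

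Your proposal has a genuine gap at the step ``each constant $\tilde w$ is pinned down by a pure formula that uses only $\alpha,\beta$.'' This is precisely the nontrivial content of~\cite{KS-fosubw}, and it cannot be done one word at a time: no pure formula in $\alpha,\beta,\tilde w$ distinguishes $\tilde w=\alpha\beta$ from $\tilde w=\beta\alpha$, since reversal fixes the letters $\alpha,\beta$ and swaps these two words. The constants must be defined \emph{jointly} and only up to a global reversal/permutation; you neither cite the result that achieves this nor sketch how to recover it. Starting from the $\Sigma_1$ formula of \cref{subword:undecidable}, as the paper does, also removes the need for your Skolemisation analysis: conjoining a $\Sigma_1$ body with a $\Sigma_2$ constant-defining formula under a shared existential prefix is trivially $\Sigma_2$, with no nested alternation to flatten.
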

\begin{proof}
Fix a letter $a\in A$.  Let $S\subseteq \N$ be recursively enumerable, let
$\varphi$ be the $\frag{1}{3}$ formula provided by \cref{subword:undecidable}
with one free variable $x$ and with $\assignments{\varphi}=\{a^k \mid k\in
S\}$, and let $w_1,\ldots,w_m\in A^*$ be the constants used in the formula
$\varphi$. 

It was shown in~\cite{KS-fosubw} that from $w_1,\ldots,w_m$, one can construct
a $\Sigma_2$ formula $\psi$ over $\FOpure{A}$ with free variables
$V=\{x_1,\ldots,x_m\}$ such that for $\alpha\in (A^*)^V$, we have
$\alpha\in\assignments{\psi}$ if and only if there is an automorphism $\overline{\cdot}\colon
A^*\to A^*$  such that $\alpha(x_i)=\overline{w_i}$ for every $i\in[1,m]$.

Let $\varphi'$ be the formula obtained from $\varphi$ by replacing
every occurrence of $w_i$ with $x_i$. Moreover, let $\tau\equiv\exists
z_1,\ldots,z_r\colon \psi\wedge \varphi'$.

Then, $\tau$ is clearly a $\Sigma_2$ formula and has exactly one free variable, say $x$.
We claim that 
\begin{equation}\assignments{\tau}=\{b^k \mid b\in A, ~k\in S\}.\label{undecidability:sigma2:assign} \end{equation}
If $k\in S$, then $a^k\in\assignments{\varphi}$ and hence clearly
$b^k\in\assignments{\tau}$ for each $b\in A$.  Moreover, if
$w\in\assignments{\tau}$, then for some $\alpha\in\assignments{\psi}$,
we have $\alpha_w\models\varphi'$, where $\alpha_w$ denotes the
assignment with $\alpha_w|_V=\alpha$ and $\alpha(x)=w$. This means,
there is an automorphism $\overline{\cdot}$ of $(A^*,\subword)$ such
that $\alpha(x_i)=\overline{w_i}$ for $i\in[1,m]$. Therefore, there is
some $w'\in A^*$ that satisfies $\varphi$ such that
$w=\overline{w'}$. In particular, $w'=a^k$ for some $k\in S$ and hence
$w=b^k$ for some $b\in A$. This proves \cref{undecidability:sigma2:assign}. 

We can now proceed as
in \cref{undecidable:single} to show undecidability of the truth
problem.
\end{proof}

\section{Complexity}\label{complexity}
In this section, we study the complexity of the truth problem for the
$\frag{i}{j}$ fragments of $\FOconst{A}$. 

\subsection{Complexity of $\frag{i}{0}$}
We begin with the case $j=0$.  In the
following, $\WeakExp{n}$ denotes the $n$-th level of the weak $\EXP$
hierarchy~\cite{hemachandra89,gottlob95b}.
\begin{theorem}
\label{thm:presburger-eq}
If $|A|\ge 2$, then the truth problem for $\frag{i}{0}$ is
$\NP$-complete for $i=1$ and $\WeakExp{i-1}$-complete for $i>1$.
\end{theorem}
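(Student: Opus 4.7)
The strategy is to establish matching bounds by polynomial-time reductions between $\frag{i}{0}$ and the $\Sigma_i$ fragment of Presburger arithmetic, and then invoke the classical result that $\Sigma_i$ Presburger is $\NP$-complete for $i=1$ and $\WeakExp{i-1}$-complete for $i\ge 2$~\cite{gottlob95b,hemachandra89}.

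For the upper bound, I translate a $\frag{i}{0}$ formula $(\varphi,\ell)$ into an equivalent $\Sigma_i$ Presburger sentence of polynomial size. Every variable has an alternation bound $\ell$ encoded in unary, so it ranges over $(a_1^*\cdots a_n^*)^\ell$ and can be represented by the $n\ell$-tuple of non-negative integers that records the exponents of its $n\ell$ letter blocks. Each quantifier $\mathcal{Q}x$ of $\varphi$ becomes $\mathcal{Q}$ over the corresponding $n\ell$ integer variables with the same polarity, so the $\Sigma_i$ structure is preserved. Constants from $A^*$ are replaced by fixed exponent tuples. The remaining task is to translate each subword atom $u\subword v$, where $u,v$ are block-form words with exponent parameters. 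The key observation is that both $u\subword v$ and $u\not\subword v$ admit polynomial-size Presburger characterisations in \emph{both} $\Sigma_1$ and $\Pi_1$ form: for $u\subword v$ one existentially guesses a greedy matching of $u$'s blocks into $v$ (checking monotonicity and block-count inequalities), or universally asserts that every ``minimal'' candidate matching closes successfully; dually for $u\not\subword v$. This lets us pick, for each atom, a form whose auxiliary quantifier matches the polarity of the innermost enclosing quantifier block, so that it can be absorbed without increasing alternation depth.

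For the lower bound, I conversely embed $\Sigma_i$ Presburger into $\frag{i}{0}$. A Presburger variable $k\in\N$ is represented by the word $a^k$ with alternation bound $\ell=1$. Equality and comparison of counts $|u|_a$, and addition $|w|_a=|u|_a+|v|_a$, are all expressible in $\fragf{1}{0}$ by the gadgets built in the proof of \cref{subword:undecidable} — in particular items~\ref{undec2:counting} and~\ref{undec2:add}. Multiplication of a variable by a Presburger constant $c$ (given in binary) becomes $O(\log c)$ doubling steps, each a bounded number of $\fragf{1}{0}$ additions. Each Presburger atom $\sum c_ix_i\sim d$ thus expands into a formula carrying only $\Sigma_1$ auxiliaries, which merge with the innermost existential layer of the surrounding $\Sigma_i$ formula (using dual encodings under $\forall$-layers, as in the upper bound). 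The resulting $\frag{i}{0}$ sentence has polynomial size and the same alternation depth, transferring $\NP$-hardness for $i=1$ and $\WeakExp{i-1}$-hardness for $i\ge 2$.

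The main obstacle is the polynomial-size, polarity-flexible translation of subword atoms between two block-form words. A naive disjunction over all possible block-to-block matchings has size exponential in $n\ell$, so auxiliary quantifiers are necessary; the subtlety is that these auxiliaries must be available in \emph{both} polarities so that they can always be merged with the enclosing $\exists$- or $\forall$-layer of the $\Sigma_i$ formula rather than creating a new alternation. Once this lemma is in place, the remaining steps are bookkeeping: reducing constants to fixed exponent tuples for the upper bound, and, for the lower bound, encoding the Presburger operations via the word-counting gadgets already constructed in \cref{subword:undecidable}.
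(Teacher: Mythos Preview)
Your approach is essentially the paper's: a polynomial inter-reduction with the $\Sigma_i$ fragment of Presburger arithmetic, encoding alternation-bounded words by their exponent vectors in one direction and natural numbers by $a^k$ in the other. One refinement is worth noting. You correctly identify that the crux is having \emph{both} $\Sigma_1$ and $\Pi_1$ Presburger encodings of $u\subword v$ and $u\not\subword v$, so that auxiliary quantifiers can be absorbed regardless of the innermost block's polarity; but your proposed $\Pi_1$ form (``every minimal candidate matching closes successfully'') is not spelled out and it is not obvious what ``minimal'' means or why the universal statement is equivalent. The paper sidesteps this by making the auxiliary variables \emph{functional}: it writes down inductive equations $\xi$ that uniquely determine the leftmost embedding data $t_{i,j,k},e_{i,j,k}$ from the exponent vectors, so that $\exists \bar t,\bar e\colon \xi\wedge\text{(success)}$ and $\exists \bar t,\bar e\colon \xi\wedge\text{(failure)}$ are complementary $\Sigma_1$ formulas for $\subword$ and $\not\subword$, whence each is automatically $\Pi_1$ as well. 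This is the clean way to get both polarities, and the same functionality trick underlies the dual encodings you need on the lower-bound side. Finally, the completeness of $\Sigma_i$ Presburger for $\WeakExp{i-1}$ is due to Haase~\cite{Haase2014}; the references you cite concern the hierarchy itself.
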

We provide a polynomial inter-reduction with the $\Sigma_i$
fragment of 
$\FO(\N,0,1,\mathord{+},\mathord{<})$, a.k.a.
Presburger Arithmetic (PA), for which Haase~\cite{Haase2014} has
recently proven $\WeakExp{i-1}$-completeness for $i>1$. The $\Sigma_1$ fragment of PA is $\NP$-complete~\cite{Oppen1978}.

The reduction from PA to
$\frag{1}{0}$ fixes a letter $a\in A$ and encodes every number $k\in\N$ by
$a^k$. Addition can then be expressed in $\frag{1}{0}$ (\cref{undec2:add} of
\cref{subword:undecidable}). Note that although this ostensibly works with one
letter, we need another letter in $A$ to express addition.  This is crucial: If
$|A|=1$, then $\FOconst{A}$ is just $\FO(\N,<)$, which has a $\PSPACE$-complete
truth problem~\cite{FerranteRackoff1979,Stockmeyer1974}. Moreover, an
inspection of the proof of \cref{subword:undecidable} shows that an alternation
bound of $\ell=2$ suffices to define addition, which is tight: if we only use a
bound of $1$, we can also easily reduce to $\FO(\N,<)$.

The reduction from $\frag{i}{0}$ to Presburger arithmetic encodes a word $w$
known to belong to $(a_1^*\cdots a_n^*)^\ell$, i.e., of the form
\[ \prod_{i=1}^\ell\prod_{j=1}^n a_j^{x_{i,j}}, \]
by the vector
$(x_{1,1},\ldots)\in\N^{\ell\cdot n}$ of exponents.
With this encoding, it suffices to show how to
express literals $w\subword w'$ (and also $w\not\subword w'$) by polynomial-size existential
Presburger formulas for $w,w'\in (a_1^*\cdots a_n^*)^\ell$.  For a vector
$x=(x_{1,1},\ldots,x_{\ell,n})$ from $\N^{\ell\cdot n}$, let
$ w_x =  \prod_{i=1}^\ell \prod_{j=1}^n a_j^{x_{i,j}}$. 
\begin{proposition}\label{subwords-in-presburger}
There are existential Presburger formulas $\varphi$ and $\psi$ of size polynomial in $n$ and $\ell$ such that
\begin{align*}
\varphi (x_{1,1},\ldots,x_{\ell,n},y_{1,1},\ldots,y_{\ell,n}) \iff w_x\subword w_y, \\
\psi    (x_{1,1},\ldots,x_{\ell,n},y_{1,1},\ldots,y_{\ell,n})  \iff w_x\not\subword w_y.
\end{align*}
\end{proposition}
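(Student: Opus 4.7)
The plan is to express both the subword relation and its negation as existential Presburger formulas over ``flow'' variables that record how letters of $w_x$ are distributed into matching-letter blocks of $w_y$.

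For $\varphi$, I would introduce, for each $i, i' \in [1,\ell]$ and $j \in [1,n]$, a non-negative integer variable $z_{i,j,i'}$ interpreted as the number of $a_j$'s from the $(i,j)$-th $x$-block routed into the $(i',j)$-th $y$-block. The formula then asserts: (i) \emph{conservation} $\sum_{i'} z_{i,j,i'} = x_{i,j}$; (ii) \emph{capacity} $\sum_i z_{i,j,i'} \le y_{i',j}$; and (iii) \emph{monotonicity}: for every pair of $x$-blocks with $(i_1,j_1) <_{\mathrm{lex}} (i_2,j_2)$ routed to $y$-blocks with $(i'_1,j_1) >_{\mathrm{lex}} (i'_2,j_2)$, the quantifier-free disjunction $z_{i_1,j_1,i'_1} = 0 \vee z_{i_2,j_2,i'_2} = 0$. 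With $O(\ell^2 n)$ variables and $O(\ell^4 n^2)$ clauses, $\varphi$ is a polynomial-size $\exists$-Presburger formula. Correctness in the forward direction is immediate. For the reverse direction, a satisfying $z$ yields an embedding by placing letters into the indicated $y$-blocks in $x$-order; the monotonicity clauses prevent crossings, and the delicate case where two $a_j$-$x$-blocks route letters into the same $a_j$-$y$-block is resolved by observing that monotonicity then forces every intervening $x$-block of a different column to be empty, so the induced embedding remains coherent.

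For $\psi$, I would use a \emph{greedy-failure witness}: $w_x \not\subword w_y$ iff the left-to-right greedy embedding fails at some block $(i^*, j^*)$ of $w_x$, meaning that after greedily embedding the strict prefix of blocks $<_{\mathrm{lex}}(i^*,j^*)$ the suffix of $w_y$ contains fewer than $x_{i^*,j^*}$ copies of $a_{j^*}$. I would disjoin over the $O(\ell n)$ choices of $(i^*,j^*)$ and over the $O(\ell n)$ choices of the $y$-block $(I,J)$ containing the cursor after the prefix has been placed. Within each disjunct I introduce $z$-variables for the prefix satisfying conservation, capacity, and monotonicity as in $\varphi$, together with a \emph{greediness} invariant expressed as the disjunctions $z_{i,j,i'} = 0 \vee \sum_{i_1 \le i} z_{i_1,j,i''} = y_{i'',j}$ for every $i'' < i'$ (whenever a letter enters $y$-block $(i',j)$, all earlier $y$-blocks of column $j$ must be saturated). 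I then enforce that $(I,J)$ is the lex-maximal $y$-block receiving any letter, set the cursor offset $H$ to be the linear expression $\sum_{i_1} z_{i_1,J,I}$, and require the linear inequality $|w_y|_{a_{j^*}} - |w_y[1\mathinner{\ldotp\ldotp}\mathrm{cursor}]|_{a_{j^*}} < x_{i^*,j^*}$, whose second term becomes a concrete linear combination of $y$-variables and $H$ once $(I,J)$ is fixed (three cases depending on whether $J=j^*$, $J<j^*$, or $J>j^*$). With $O(\ell^2 n^2)$ disjuncts each of polynomial size, $\psi$ is polynomial.

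The main obstacle is $\psi$: greediness is naturally a minimality property (earliest cursor), which would seem to demand universal quantification. The workaround is to rephrase greediness as the finite family of implications above, each of which is a $\Sigma_1$ disjunction of linear constraints, and to disjoin explicitly over the discrete location of the cursor's final $y$-block so that the remaining-letter count becomes a linear expression. The resulting formulas live in existential Presburger and have size polynomial in $\ell$ and $n$.
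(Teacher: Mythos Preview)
Your construction of $\varphi$ via flow variables with conservation, capacity, and pairwise monotonicity is correct and is a legitimate alternative to the paper's route. The paper instead writes down inductive equations that \emph{deterministically} compute the leftmost embedding: it introduces variables $t_{i,j,k}$ (letters of the $(k,j)$-th $y$-block still available when one reaches the $(i,j)$-th $x$-block) and $e_{i,j,k}$ (how many of those are actually consumed), together with a quantifier-free conjunction $\xi$ whose \emph{unique} solution in the $t,e$ variables is the leftmost embedding. Because $\xi$ pins down $t,e$ functionally, one gets both directions at once: $\varphi\equiv\exists t,e\colon\xi\wedge\bigwedge_{i,j}\bigl(x_{i,j}\le\sum_k e_{i,j,k}\bigr)$ and $\psi\equiv\exists t,e\colon\xi\wedge\bigvee_{i,j}\bigl(x_{i,j}>\sum_k e_{i,j,k}\bigr)$, with no separate machinery for the negative case.

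Your $\psi$, by contrast, has a genuine gap. The ``greediness'' clauses you propose track saturation only within a single letter-column $j$ and miss that the cursor can be pushed past an $a_j$-block of $w_y$ by embedding a letter $a_{j'}$ with $j'\ne j$. Concretely, take $n=\ell=2$, $w_x=bab$ (so $x=(0,1,1,1)$) and $w_y=aba$ (so $y=(1,1,1,0)$). Then $w_x\not\subword w_y$, and the leftmost embedding of the prefix $ba$ places the $b$ in $y$-block $(1,2)$ and the following $a$ in $y$-block $(2,1)$. Your greediness clause for $z_{2,1,2}>0$ then demands $z_{1,1,1}+z_{2,1,1}=y_{1,1}=1$, which is impossible since conservation and monotonicity force both summands to $0$. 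Hence the disjunct $(i^*,j^*)=(2,2)$ is unsatisfiable, and one checks that the earlier disjuncts $(1,2)$ and $(2,1)$ fail as well (after the respective prefixes there is still one $b$, respectively one $a$, available). So your $\psi$ evaluates to false on this input although $w_x\not\subword w_y$. The repair is precisely the paper's idea: make the availability of $y$-block $(k,j)$ for $x$-block $(i,j)$ depend on whether \emph{any} earlier $x$-block, of \emph{any} letter, has already been placed to the right of it; once that cross-column dependency is encoded deterministically, the existential quantifier over the auxiliary variables is harmless and $\psi$ follows immediately.
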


Let us briefly describe these formulas. Let $I=[1,\ell]\times[1,n]$ and order
 the pairs $(i,j)\in I$ 
lexicographically:  $(i',j')<(i,j)$ if $i'<i$
or $i=i'$ and $j'<j$. This captures  the order of
the $a_j^{x_{i,j}}$  factors in $w_x$. We now define
formulas $\tau_{i,j}$ and $\eta_{i,j}$ where 
the $t_{i,j,k}$'s and $e_{i,j,k}$'s  are  extra free variables:
\begin{align*}
  \tau_{i,j} \colon && \bigwedge_{1 \le k \le \ell} t_{i,j,k}
  &= \begin{cases}
    0  \quad \text{if $e_{i',j',k'}>0$ for some} \\
       \quad \quad \quad (i',j')<(i,j) \text{ and } k'>k \\
    y_{k,j}  - \sum_{i' = 1}^{i-1} e_{i',j,k} \quad \text{otherwise} \end{cases} \\
  \eta_{i,j} \colon && \bigwedge_{1 \le k \le \ell} e_{i,j,k}
  &= \min\left\{ t_{i,j,k}~,~ x_{i,j}-\sum_{r=1}^{k-1} e_{i,j,r} \right\}
\end{align*}
These expressions define the leftmost embedding of $w_x$ into
$w_y$: the variable $t_{i,j,k}$ describes how many letters from $a_j^{y_{k,j}}$
are available for embedding the $a_j^{x_{i,j}}$ factor of $w_x$ into $w_y$. The variable $e_{i,j,k}$ counts how
many of these available letters are actually used for the $a_j^{x_{i,j}}$ factor
in
the left-most embedding of $w_x$ into $w_y$.  Since $i$ and $j,k$ are bounded by $n$
and $\ell$, we have polynomially many formulas of polynomial size. 

Define $\xi=\bigwedge_{(i,j)\in I} \tau_{i,j}\wedge \eta_{i,j}$ and the formulas
$\varphi,\psi$ as:
\begin{gather}
\tag{$\varphi$}
\begin{array}{c}
  \exists t_{1,1,1}\cdots\exists t_{\ell,n,\ell}\quad
        \\
  \quad\exists e_{1,1,1}\cdots \exists e_{\ell,n,\ell}
\end{array}
\colon
 \xi \wedge \bigwedge_{(i,j)\in I} \left(x_{i,j}\le \sum_{k=1}^\ell e_{i,j,k}\right)
\\
\tag{$\psi$}
\begin{array}{c}
  \exists t_{1,1,1}\cdots\exists t_{\ell,n,\ell}\quad
        \\
  \quad\exists e_{1,1,1}\cdots \exists e_{\ell,n,\ell}
\end{array}
\colon
 \xi \wedge \bigvee_{(i,j)\in I} \left(x_{i,j}>\sum_{k=1}^\ell e_{i,j,k}\right)
\end{gather}
Since formulas $\tau_{i,j}$ and $\eta_{i,j}$ are inductive
equations that uniquely define the values of $t_{i,j,k}$ and $e_{i,j,k}$
as functions of the $x$ and $y$ vectors, $\psi$ is equivalent to the negation
of $\varphi$.  Moreover, $\varphi$ expresses that there is enough room to embed
each factor $a_j^{x_{i,j}}$ in $w_y$, i.e., that $w_x\subword w_y$ as claimed,
and both formulas are easily constructed in polynomial time.

\subsection{Complexity of $\frag{1}{1}$}

\begin{theorem}
  \label{thm:sat-1-1}
The truth problem for the $\frag{1}{1}$ fragment is $\NP$-complete.
\end{theorem}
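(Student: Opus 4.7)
The NP-hardness is immediate from \cref{thm:presburger-eq}, since $\frag{1}{0}$ is a subfragment of $\frag{1}{1}$. For the NP upper bound, the plan is to reduce to $\frag{1}{0}$ with an appropriate polynomial alternation bound on the single alternation-unbounded variable, and then apply \cref{thm:presburger-eq}. After guessing a disjunct of the DNF of the quantifier-free matrix, we may assume the input is a conjunction $\Phi$ of literals whose free variables are $x_2,\ldots,x_k$ (alternation-bounded by $\ell$) and one alternation-unbounded variable $x_1$. The key claim to establish is: if $\Phi$ is satisfiable, then it has a satisfying assignment with $x_1\in(a_1^*\cdots a_n^*)^{\ell^*}$ for some $\ell^*$ polynomial in $|\Phi|$. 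Granted this claim, we guess $\ell^*$ in unary, treat $x_1$ as alternation-bounded, and obtain a polynomial-size existential Presburger encoding via \cref{subwords-in-presburger}, placing the problem in NP.

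The alternation bound for $x_1$ is obtained by a compression argument. Consider any satisfying assignment $\alpha$ and the literals involving $x_1$, which take one of four forms $s\subword x_1$, $s\not\subword x_1$, $x_1\subword t$, or $x_1\not\subword t$, where $s,t$ are either constants or values of the already-fixed bounded variables; both have polynomial alternation count. If some literal $x_1\subword t$ occurs in $\Phi$, then $\alpha(x_1)\subword t$ already has polynomial alternation and we are done. Otherwise, we construct a replacement $x_1^*$ by keeping from $\alpha(x_1)$ only (i) the positions used by a leftmost embedding of each required $s$ from positive literals $s\subword x_1$, and (ii) for each negative super-word constraint $x_1\not\subword t_j$, a minimal set of additional letters of $\alpha(x_1)$ certifying $x_1^*\not\subword t_j$. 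Since both (i) and (ii) contribute only polynomial alternation, so does $x_1^*$. The new word $x_1^*$ satisfies the positive subword literals by construction, the negative subword literals $s\not\subword x_1$ because $x_1^*$ is a subword of $\alpha(x_1)$, and the negative super-word literals by construction of the witness letters.

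The main obstacle is the careful selection of the witness letters for each negative super-word constraint $x_1\not\subword t_j$. The guiding observation is that since $\alpha(x_1)\not\subword t_j$, the greedy leftmost attempt to embed $\alpha(x_1)$ into $t_j$ must fail at some position: the letters of $\alpha(x_1)$ responsible for this failure form a witness of size bounded polynomially by the alternation of $t_j$, much as in the embedding analysis underlying \cref{subwords-in-presburger}. Accounting for all negative literals simultaneously---so that collecting witnesses for each $t_j$ does not inflate the total alternation, and so that the witnesses (which only remove letters from $\alpha(x_1)$) cannot re-introduce a forbidden subword---will be a routine but delicate combinatorial argument.
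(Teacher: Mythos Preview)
Your high-level plan matches the paper's: reduce to $\frag{1}{0}$ by bounding the alternation of the single unbounded variable. However, your compression argument has a genuine gap.

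The claim ``Since both (i) and (ii) contribute only polynomial alternation, so does $x_1^*$'' is false: taking a union of position sets does not bound alternation additively. Consider $\alpha(x_1) = (ab)^N$ with positive literals $a^N \subword x_1$ and $b^N \subword x_1$ (each $s$ here has a single block). The leftmost embedding of $a^N$ uses all odd positions, that of $b^N$ all even positions; their union is all of $\alpha(x_1)$, which has $2N$ blocks. Worse, \emph{no} subword $x_1^*$ of $(ab)^N$ with $a^N \subword x_1^*$ and $b^N \subword x_1^*$ has fewer than $2N$ blocks, since any such $x_1^*$ must contain all $N$ occurrences of $a$ and all $N$ occurrences of $b$, hence equals $(ab)^N$ itself. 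So your strategy of building $x_1^*$ as a subword of $\alpha(x_1)$---which you rely on to preserve the negative literals $s \not\subword x_1$---cannot succeed here, no matter how cleverly positions are selected. The ``routine but delicate'' part you defer is therefore not the real obstacle; the construction fails already at step (i).

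The paper's \cref{lem-k-blocks} avoids this by a different operation: it does not delete letters from $t$ but \emph{rearranges} them, moving a so-called reducible $a$-block past the intervening factor so that it merges with the next $a$-block. The resulting $t'$ has the same length as $t$ and is in general not a subword of $t$; instead one proves directly that each $x_i \subword t'$ and each $y_j \not\subword t'$ by analysing how the leftmost embeddings interact with the displaced block. The remaining literals $t \not\subword u$ are handled separately by ensuring that $t'$ still has strictly more blocks than any such $u$. This block-rearrangement idea is the missing ingredient in your proposal.
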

Of course, hardness is inherited from $\frag{1}{0}$.  Conversely,
$\NP$-membership is shown by a reduction to the $\frag{1}{0}$ fragment. For
this reduction, we first explain how a single ``unbounded'' word can be made
alternation bounded while respecting its relationships with  other
alternation bounded words. 

For this we use a slightly different measure of alternation levels for words: we
factor words in \emph{blocks} of repeating letters, writing
$u=\prod_{i=1}^k{a_i}^{\ell_i}$ with $\ell_i>0$ and $a_i\not= a_{i+1}$ for all
$i$. By ``an $a$-block of $u$'' we mean an occurrence of a factor $a_i^{\ell_i}$ with $a_i=a$.
We note that requiring some bound in the number of blocks is equivalent to
bounding the number of alternations when it comes to defining the $\frag{i}{j}$
fragments. However, counting blocks is more precise.

\begin{lemma}
\label{lem-k-blocks}
  Let $t, x_1, \dots, x_n, y_1, \dots, y_m \in A^*$ such that:
  \begin{itemize}
  \item for all $i$, $x_i \subword t$,
  \item for all $j$, $y_j \not\subword t$,
  \item for all $i$ and $j$, $x_i$ and $y_j$ have less than $\ell$ blocks,
  \item $t$ has $k > (m+n)\cdot \ell + |A|$ blocks.
  \end{itemize}
  Then there exists $t' \in A^*$ such that:
  \begin{itemize}
  \item for all $i$, $x_i \subword t'$,
  \item for all $j$, $y_j \not\subword t'$,
  \item $t'$ has either $k-1$ or $k-2$ blocks.
  \end{itemize}
\end{lemma}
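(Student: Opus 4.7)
The first observation is that since $y_j\not\subword t$ and any $t'$ obtained by deleting letters from $t$ satisfies $t'\subword t$, we automatically have $y_j\not\subword t'$. Thus the negative constraints are preserved under any letter deletion, and I only need to focus on preserving each $x_i\subword t'$ while reducing the block count.

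My plan is to construct $t'$ by deleting a single entire block $B_q$ of $t$. Removing an entire block reduces the block count by exactly $2$ when the two neighbours of $B_q$ carry the same letter (and thus merge into a single block of $t'$), and by exactly $1$ otherwise (including when $B_q$ is a leftmost or rightmost block). Either way, $t'$ has $k-1$ or $k-2$ blocks as required, so the block-count condition is automatic.

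The real task is therefore to exhibit a block $B_q$ whose removal preserves $x_i\subword t$ for every~$i$. I will use a pigeonhole counting argument. For each $x_i$, I intend to identify a set $C_i$ of ``critical'' blocks of $t$ -- blocks whose deletion would break $x_i\subword t'$ -- and bound $|C_i|$ linearly in the number of blocks of $x_i$, which is at most $\ell-1$. The natural candidates for $C_i$ come from the leftmost and rightmost embeddings of $x_i$ into $t$: for each of the $<\ell$ blocks of $x_i$ these two extremal embeddings pin down a ``range'' of blocks in $t$, and any block outside the envelope of all such ranges can be avoided by some embedding of $x_i$ in $t'$. Summing $|C_i|\le O(\ell)$ over $i$ (and possibly $j$ to handle symmetry) together with an additive $|A|$ slack that absorbs corner cases from merging adjacent blocks yields a total of at most $(m+n)\ell+|A|$ critical blocks. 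Since $t$ has strictly more than $(m+n)\ell+|A|$ blocks, some block $B_q$ is non-critical and can be deleted.

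The main obstacle is making the notion of ``critical'' precise enough so that (i) the counts sum to $(m+n)\ell+|A|$, and (ii) every non-critical deletion actually admits some embedding of each $x_i$ in $t'=t\setminus B_q$. Point (ii) is delicate: removing $B_q$ shifts positions after it, so even if $B_q$ is ``off'' the leftmost embedding, that embedding's suffix may still be affected. The correct justification is to interpolate between the leftmost and rightmost embeddings of $x_i$, showing that the slack between them is always enough to absorb the deletion of a non-critical block. The additive $|A|$ term in the hypothesis is precisely what is needed to guarantee that at least one deletion site avoids the critical blocks of all $n$ subword constraints simultaneously, regardless of which letter's blocks happen to be scarce in $t$.
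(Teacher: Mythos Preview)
Your first observation---that deletion preserves the negative constraints $y_j\not\subword t'$---is correct, but your plan to delete an entire block is fatally flawed for the \emph{positive} constraints. The trouble is that $x_i\subword t$ may require every single occurrence of some letter in $t$, in which case deleting \emph{any} block of that letter destroys $x_i\subword t'$, regardless of how many blocks $t$ has.

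Here is a concrete counterexample. Take $A=\{a,b\}$, $n=2$, $m=0$, $x_1=a^{102}$, $x_2=b^{101}$ (each has one block, so $\ell=2$ suffices), and
\[
t \;=\; a^{50}\,b^{50}\,a^{50}\,b^{50}\,a\,b\,a,
\]
which has $k=7>(m+n)\ell+|A|=6$ blocks, $|t|_a=102$, and $|t|_b=101$. Both $x_1\subword t$ and $x_2\subword t$ hold with equality on the letter counts. Deleting any $a$-block drops $|t'|_a$ below $102$, so $x_1\not\subword t'$; deleting any $b$-block drops $|t'|_b$ below $101$, so $x_2\not\subword t'$. Hence no single-block deletion works here. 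In particular your proposed bound $|C_i|=O(\ell)$ on critical blocks is false: the one-block word $x_1$ already renders all four $a$-blocks of $t$ critical.

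The paper's construction is not a deletion but a \emph{rearrangement}: it locates an $a$-block $b_0$ that is not the rightmost $a$-block, writes $t=t_0\,b_0\,t_1\,b_1\,t_2$ with $b_1$ the next $a$-block, and sets $t'=t_0\,t_1\,b_0 b_1\,t_2$, sliding $b_0$ past $t_1$ to merge it with $b_1$. This preserves the multiset of letters (so the failure mode above cannot arise), but now $t'\not\subword t$ in general, and both $x_i\subword t'$ and $y_j\not\subword t'$ must be argued explicitly. The conditions under which such a slide is safe are phrased in terms of the leftmost embeddings of the $x_i$'s \emph{and} the $y_j$'s, which is why $m$ genuinely enters the bound $(m+n)\ell+|A|$; the apparent free lunch on the $y_j$-side in your proposal is an artefact of choosing the wrong construction.
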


\begin{proof}
  Given $u \in A^*$, we write $\Ima u$ for the image of the
  left-most embedding of $u$ into $t$. This is a set of positions in $t$ and, in case
  $u\not\subword t$, these positions only account for the longest prefix of $u$
  that can be embedded in $t$. In particular, and since we assumed
$x_i\subword t$ and $y_j\not\subword t$, then
$|\Ima x_i|=|x_i|$ and $|\Ima y_j|<|y_j|$ for all $i,j$.

Let $b_0$ be an $a$-block of $t$. This block is said to be \emph{irreducible} if
either (1) it is the last, i.e.\ right-most, $a$-block of $t$, or (2) writing
$t$ under the form $t = t_0 b_0 t_1 b_1 t_2$ where $b_1$ is the next $a$-block,
i.e.\ $a \notin t_1$, one of the following holds:
    \begin{itemize}
    \item there is  some $i$ s.t.\ $b_0 \cap \Ima x_i\neq\emptyset$ and $t_1 \cap \Ima x_i\neq\emptyset$.
    \item there is $j$ s.t.\ $b_0 \cap \Ima y_j = \emptyset$ and $t_1 \cap \Ima y_j \neq \emptyset$ and $b_1 \cap \Ima y_j \neq \emptyset$.
    \end{itemize}
Otherwise $b_0$ is said to be \emph{reducible}.
\\

\noindent
\textit{Claim:} $t$ contains a reducible block.

Indeed, every irreducible block is either a right-most $a$-block for some $a$,
or can be associated with a letter alternation in some $x_i$, or in some
$y_j$. Furthermore, this association is injective. Thus there are at most
$(n+m)\cdot \ell$ irreducible blocks that are not right-most (and at most $|A|$
right-most blocks). Since $k > (n+m)\cdot \ell + |A|$, $t$ has a reducible
block.

So let us pick one such reducible block, say an $a$-block $b_0$, 
write $t$ under the form $t = t_0 b_0 t_1 b_1 t_2$ as above, and
let $t' = t_0 t_1 b_0 b_1 t_2$. \\

\noindent
\textit{Claim:} $t'$ fulfills the requirements of
\Cref{lem-k-blocks}.

Since $b_1$ is an $a$-block, $b_0b_1$ is now a block of $t'$ and $t'$ has less
than $k$ blocks. Moreover, the only other possible block merge is in $t_0t_1$,
thus $t'$ has at least $k-2$ blocks. We now show that  $x_i\subword t'$
and $y_j\not\subword t'$ for all $i,j$.
  \begin{itemize}
  \item Pick some $i$. Since $x_i \subword t$, there
    is a unique decomposition $x_i = u_0 u_1 u_2 u_3 u_4$ of $x_i$ such that
    $\Ima u_0 \subseteq t_0$, $\Ima u_1 \subseteq b_0$, $\Ima u_2 \subseteq
    t_1$, $\Ima u_3 \subseteq b_1$ and $\Ima u_4 \subseteq t_2$. 
Since $b_0$ is reducible one of $\Ima u_1$ or $\Ima u_2$ is empty.
Thus one of $u_1$ or $u_2$ is the empty word, allowing $x_i\subword t'$.

  \item Assume, by way of contradiction,  that for some $j$, $y_j \subword t'$.
Let $z_1$ be the maximal prefix of $y_j$ that embeds into $t$.
We proceed to show that $b_0$ is irreducible.
\begin{itemize}
\item First, $b_0 \cap \Ima z_1 = \emptyset$. Otherwise, since $a \notin t_1$,
  the left-most embedding of $z_1$ into $t' = t_0 t_1 b_0 b_1 t_2$ does not use
  $t_1$ at all and we would have $y_j \subword t_0 b_0 b_1 t_2 \subword t$.
\item Secondly, $t_1 \cap \Ima z_1$ is not empty. If it were, since $b_0$ is
  made of $a$'s only and $a \notin t_1$, the left-most embedding of $z_1$ into
  $t_0 t_1 b_0 b_1 t_2$ would not use $t_1$ and  again we would have $y_j \subword t_0 b_0 b_1 t_2 \subword t$.
\item Lastly, $b_1 \cap \Ima z_1 \neq \emptyset$. 
Otherwise, the already established fact $b_0\cap \Ima z_1=\emptyset$
implies that $y_j$ embeds not only in $t'$ but in $t_0t_1t_2$, which is a subword of $t$.
\end{itemize}
  Since  $b_0$ is reducible, we conclude  that the original assumption that
  $y_j\subword t'$ does not hold, i.e., that $y_j\not\subword t'$ as required.
\end{itemize}

\end{proof}

We now proceed to prove \Cref{thm:sat-1-1}.  Let $\varphi$ be a
$\frag{1}{1}$ sentence, where $t$ is the only variable which is not
alternation bounded.  As a first step of our $\NP$ algorithm, we guess
the set of literals occurring in $\varphi$ that is satisfied. After
guessing this subset, we check whether the formula would be satisfied
if exactly this subset were true (which essentially amounts to
evaluating a formula in propositional logic). If this is the
case, it remains to check whether it is possible to choose words for
all existential quantifiers of $\varphi$ so that exactly this subset
of literals is true. This means, we are left with the task of checking
satisfiability of a formula $\varphi$ of the form $\varphi\equiv\exists
t\colon \psi$: Here, $\psi$ begins with existential quantifiers for
alternation bounded variables, which are followed by a conjunction of
literals.

Every literal in $\psi$ that involves $t$ is of one of the following types:
  \begin{typelist}
  \item\label{type:bt} $x \subword t$, where $x$ is an alternation bounded variable,
  \item\label{type:bnt} $y \not\subword t$, where $x$ is an alternation bounded variable,
  \item\label{type:tnb} $t \not\subword u$, where $u$ is an alternation bounded variable,
  \item\label{type:tb} $t \subword z$, where $z$ is an alternation bounded variable,
  \item\label{type:tt} $t \subword t$,
  \item\label{type:tnt} $t \not\subword t$.
  \end{typelist}
  Assertions of \cref{type:tt,type:tnt} can be replaced by their truth
  value. If a literal $t\subword z$ of \cref{type:tb} occurs in
  $\psi$, then $\varphi$ is equivalent to $\exists t \in (a_1^* \cdots
  a_n^*)^\ell\colon \psi$,
  where $\ell$ is the alternation bound of variable $z$.

  We can thus assume that only literals of \cref{type:bt,type:bnt,type:tnb} occur in $\psi$. Let $n$
  be the number of variables $x$ that occur in literals of \cref{type:bt}, $m$ the
  number of variables $y$ that occur in literals of \cref{type:bnt}, $\ell$ the
  maximum alternation level of these variables, and $k$ the maximum alternation
  bound of all variables $u$ that appear in literals of \cref{type:tnb}.  Let 
  \[ p = \max\{ (m+n)\cdot (\ell \cdot |A|) + |A|, ~k\cdot |A| +3\}. \]
  (here $\ell$ and $k$
  are multiplied by $|A|$ to obtain a number of blocks from a maximum
  alternation). Then $\varphi$ is equivalent to $\exists t \in (a_1^* \cdots
  a_n^*)^p\colon \psi$. Indeed, if the restricted formula has a solution,
  it is a solution for $\psi$. Conversely, if $\psi$ is satisfiable via some
  $t\in A^*$ having more than $p$ blocks, then by \Cref{lem-k-blocks}, one can
  also use a $t$ having between $k\cdot |A|$ and $p$ blocks. The fact that $t$
  has more than $k\cdot |A|$ blocks ensures that all literals $t \not\subword u$
  are still satisfied.

  Finally, we can replace every $\exists t$ in $\varphi$ by a bounded
  quantification and obtain an equivalent formula in $\frag{1}{0}$ which proves
  \Cref{thm:sat-1-1}
\\

To the authors' knowledge, the following was not known.
  \begin{corollary}
\label{coro-interesting}
    If PT languages are represented as boolean combinations of sets of the form
$\LL{w}$ with $w\in A^*$, then their non-emptiness problem is $\NP$-complete.
  \end{corollary}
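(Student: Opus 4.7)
The plan is to identify non-emptiness of a PT language given as a boolean combination with a $\frag{1}{1}$ truth problem, and then invoke \Cref{thm:sat-1-1}. Concretely, suppose we are given a boolean combination $\Phi$ of sets $\LL{w_1},\ldots,\LL{w_r}$ specifying a PT language $L(\Phi)\subseteq A^*$. Read $\Phi$ as a quantifier-free formula $\phi(t)$ by interpreting each atom ``$\LL{w_i}$'' as the literal $w_i\subword t$. Then $L(\Phi)\neq\emptyset$ iff the sentence $\exists t\colon \phi(t)$ is true, and this sentence has a single, alternation-unbounded variable and no other, hence lies in $\frag{1}{1}$. Since $|\phi|$ is linear in $|\Phi|$, the $\NP$ upper bound is immediate from \Cref{thm:sat-1-1}.

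For $\NP$-hardness, I would give a direct polynomial reduction from $3$-SAT. Given an instance with variables $x_1,\ldots,x_n$ and clauses $C_1,\ldots,C_m$, take the alphabet $A=\{T_1,F_1,\ldots,T_n,F_n\}$ and define the language
\[
L = \bigcap_{i=1}^n \bigl((\LL{T_i}\cap (A^*\setminus\LL{F_i}))\cup (\LL{F_i}\cap(A^*\setminus\LL{T_i}))\bigr)\ \cap\ \bigcap_{j=1}^m \bigcup_{k=1}^3 \LL{L_{jk}},
\]
where $L_{jk}=T_i$ if the $k$-th literal of $C_j$ is $x_i$, and $L_{jk}=F_i$ if it is $\neg x_i$. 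The first intersection forces any $t\in L$ to contain exactly one of $\{T_i,F_i\}$ for each $i$, which encodes a truth assignment; the second conjunction then ensures every clause is satisfied under that assignment. Hence $L\neq\emptyset$ iff the $3$-SAT instance is satisfiable, and the construction is clearly polynomial-time. This gives the matching lower bound.

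No serious obstacle is anticipated. The main thing to double-check is that the syntactic translation from a boolean combination of $\LL{w}$'s to a quantifier-free $\phi(t)$ preserves models and has polynomial size; since the translation substitutes atoms by literals one-for-one, both properties are immediate. A minor cosmetic point is that the PT-language representation used in the statement is the boolean-combination one, so no auxiliary conversion (e.g., to a DFA) is required, and consequently the bounds inherited from \Cref{thm:sat-1-1} transfer without any size blow-up.
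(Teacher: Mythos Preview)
Your upper bound is exactly the paper's: translate the boolean combination into a quantifier-free $\phi(t)$ and invoke \Cref{thm:sat-1-1} on $\exists t\colon\phi(t)$.

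For hardness, your reduction is correct but differs from the paper's in one significant respect: you use an alphabet $A=\{T_1,F_1,\ldots,T_n,F_n\}$ whose size grows with the SAT instance, whereas the paper reduces CNF-SAT over the fixed two-letter alphabet $\{a,b\}$. There, an assignment $\alpha$ is encoded as the word $ba^{\alpha(x_1)}ba^{\alpha(x_2)}\cdots ba^{\alpha(x_n)}$; the literal $x_i$ becomes $\LL{b^{i}ab^{n-i}}$, its negation becomes the complement, and the set of valid encodings is cut out by the PT language $(b(a+\eword))^n$. This buys a strictly stronger statement: non-emptiness is already $\NP$-hard for $|A|=2$, which matches the paper's standing assumption that $A$ is a fixed finite alphabet. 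Your reduction, as written, only yields hardness when the alphabet is part of the input; if the corollary is read (as in the rest of the paper) with $A$ fixed, you would still need an encoding step to bring everything down to two letters, and that step is exactly what the paper's construction supplies.
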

Membership in $\NP$ follows from \cref{thm:sat-1-1}. For hardness, we can
reduce CNF-SAT as follows. We encode an assignment
$\alpha\colon \{x_1,\ldots,x_n\}\to\{0,1\}$ as a word
$b a^{\alpha(x_1)}b a^{\alpha(x_2)}\cdots b a^{\alpha(x_n)}$. With literals
$x_i$ and $\neg x_i$, we associate the languages 
$K_{x_i}=\LL{b^{i}a b^{n-i}}$ and $K_{\neg x_i}=\{a,b\}^*\setminus
\LL{b^{i}a b^{n-i}}$.  A clause $C=L_1\vee \cdots \vee L_m$ is then translated to
$K_C=K_{L_1}\cup\cdots\cup K_{L_n}$ and a conjunction of clauses  $C_1\wedge\cdots\wedge C_k$
is satisfiable if, and only if, the PT language $(b(a+\eword))^n\cap K_{C_1}\cap \cdots \cap
K_{C_k}$ is nonempty.

In particular, given a finite
number of PT languages, the problem of deciding whether they intersect non-vacuously is
$\NP$-complete. This is in contrast with general regular languages represented
by DFAs (or NFAs), for which the problem is well-known to be $\PSPACE$-complete~\cite{Kozen1977}.

\subsection{Complexity of $\frag{1}{2}$}

Our next result is an upper bound for the truth problem of $\frag{1}{2}$.
\begin{theorem}\label{complexity:two-variables:counterrelation}
The truth problem for $\frag{1}{2}$ is in $\NEXP$.
\end{theorem}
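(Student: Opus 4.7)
The plan is to establish the $\NEXP$ upper bound by combining an exponential-size witness guess with an existential Presburger feasibility check, generalising the $\frag{1}{0}$ and $\frag{1}{1}$ arguments but with an additional block-reduction tailored to two alternation-unbounded variables.

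First, I would normalise. Write the $\frag{1}{2}$ sentence as $\exists \vec y \exists s \exists t : \psi$, where $s,t$ are the alternation-unbounded variables, $\vec y$ the alternation-bounded ones, and $\psi$ is quantifier-free. Guess nondeterministically a truth value for each atomic subformula of $\psi$ that makes $\psi$ true. This reduces the problem to deciding satisfiability of a conjunction $\Psi$ of literals $u\subword v$ or $u\not\subword v$, where each side is a constant, some $y_i$, or one of $s,t$. The bounded $y_i$'s are represented by their exponent vectors in $(a_1^*\cdots a_n^*)^{\ell_i}$ as in \Cref{subwords-in-presburger}.

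Second, I would prove a block-bound lemma generalising \Cref{lem-k-blocks}: if $\Psi$ is satisfiable, then it admits a satisfying assignment in which both $s$ and $t$ have at most exponentially many blocks in $|\Psi|$. The idea is to iterate the reducible-block argument from \Cref{lem-k-blocks}. With $t$ momentarily fixed, that proof directly bounds the block count of $s$ by a polynomial in the number of relevant literals and the alternation bounds of the constants and of $\vec y$, plus an additive term equal to the block count of $t$ coming from cross-literals $s\subword t$, $t\subword s$, $s\not\subword t$, $t\not\subword s$, whose irreducibility witnesses necessarily sit in blocks of $t$ (by the leftmost-embedding analysis). A symmetric bound holds for $t$ in terms of $s$'s block count. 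Solving the resulting coupled linear inequality yields a singly exponential bound for both. It is precisely here that two variables remain manageable: the primitive-word commutation trick used in the $\frag{1}{3}$ undecidability proof of \Cref{subword:undecidable} genuinely needs a third variable, so the coupling between $s$ and $t$ cannot propel block counts beyond singly exponential.

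Third, I would turn this into a $\NEXP$ algorithm. Guess nondeterministically in exponential time a block template for $s$ and $t$, that is, two sequences $\sigma_s = a_{j_1}\cdots a_{j_K}$ and $\sigma_t = a_{j'_1}\cdots a_{j'_{K'}}$ of alphabet letters of exponential length, together with integer variables $e_1,\ldots,e_K$ and $e'_1,\ldots,e'_{K'}$ standing for block lengths. Then, extending \Cref{subwords-in-presburger} by replacing the fixed $(a_1^*\cdots a_n^*)^\ell$ layout by the guessed templates $\sigma_s,\sigma_t$ (the leftmost-embedding equations $\tau_{i,j}$ and $\eta_{i,j}$ depend only on the sequence of block letters, not on their periodicity), build an existential Presburger formula of size polynomial in $K+K'$, hence exponential overall, expressing all literals of $\Psi$ simultaneously. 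Since existential Presburger satisfiability is in $\NP$ in the formula size, the whole procedure runs in $\NEXP$. The main obstacle is the block-bound lemma: one must carefully redefine reducible and irreducible blocks from \Cref{lem-k-blocks} to accommodate all four cross-literals between $s$ and $t$ and then verify that the coupling amplifies the block count only by a single exponential.
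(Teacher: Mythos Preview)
Your block-bound lemma is the heart of the argument, and it does not go through as stated. When you fix $t$ and apply the \Cref{lem-k-blocks} machinery to $s$, every cross-literal of the form $t\subword s$ or $t\not\subword s$ makes $t$ play the role of one of the $x_i$'s or $y_j$'s; each such literal then contributes up to $\mathrm{block}(t)$ irreducible blocks in $s$. Hence the bound you obtain is of the shape $\mathrm{block}(s)\le C_1 + c\cdot\mathrm{block}(t)$ with $c\ge 1$, and symmetrically $\mathrm{block}(t)\le C_2 + c\cdot\mathrm{block}(s)$. Substituting one into the other gives $\mathrm{block}(s)\le C_1 + cC_2 + c^2\cdot\mathrm{block}(s)$, which is vacuous for $c\ge 1$: the ``coupled linear inequality'' has no finite solution. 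Your remark that the $\frag{1}{3}$ undecidability proof needs three variables is a heuristic about expressive power, not a bound on block counts; it does not rule out that two coupled unbounded variables force each other to have many blocks. To rescue this route you would need a genuinely new simultaneous reduction that shrinks $s$ and $t$ together while preserving all four cross-literal types, and that is not what \Cref{lem-k-blocks} provides.

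The paper sidesteps the block-bound question entirely. It first decomposes the conjunction into a \emph{tree-shaped} system $T$ on the unbounded variables (with two such variables the graph $\Gamma(T)$ is a single edge or two isolated vertices) together with an alternation-bounded system $S$. For $T$ it builds, in exponential time, a counter automaton that reads the unbounded words letter by letter and enforces each $(s,t)$-constraint ($s=t$, $s\strictsubword t$, $t\strictsubword s$, $s\perp t$) via a small product component; no bound on the number of blocks of $s$ or $t$ is ever assumed. The constraints in $S$ are then imposed by storing the alternation-bounded words as counter vectors and conjoining the Presburger formulas of \Cref{subwords-in-presburger}. Non-emptiness of the resulting exponential-size counter automaton is in $\NP$, giving the $\NEXP$ bound. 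So the paper's key idea is the tree-shaped decomposition plus counter automata, not a generalisation of \Cref{lem-k-blocks}.
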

We prove \cref{complexity:two-variables:counterrelation} in two steps.  The
first step of our decidability result is to transform a $\frag{1}{2}$ formula
into a system of constraints where the relations among those variables without
an alternation bound have a tree shape. In the second step, we exploit the tree
shape to construct an exponential-size counter automaton for the set of
satisfying assignments.

\subsection*{Tree-shaped constraints} 
Let $A=\{a_1,\ldots,a_n\}$ and let $V$ be a set of variables. A
\emph{constraint system} is a set of constraints of the form $x\subword y$,
$x\not\subword y$, $x=y$, $x\in(a_1^*\cdots a_n^*)^\ell$, or $x=w$, where
$x,y\in V$, $\ell\in\N$ and $w\in A^*$. A constraint of the form $x\subword y$,
$x\not\subword y$, or $x=y$ is also called \emph{$(x,y)$-constraint} or
\emph{$(y,x)$-constraint}.  Constraints of the form $x\in (a_1^*\cdots
a_n^*)^\ell$ are called \emph{alternation constraints}.  The set of assignments
$\alpha\in (A^*)^V$ that satisfy $S$ is denoted by $\assignments{S}$.  For a
subset $U\subseteq V$, by existentially quantifying all variables outside of
$U$, the constraint system $S$ also defines a set of assignments in $(A^*)^U$, which we denote by
$\assignments[U]{S}$.

For a constraint system $S$ over $V$, we define the graph $\Gamma(S)=(V,E)$
where $\{x,y\}\in E$ if and only if $S$ contains an $(x,y)$-constraint.  We say
that $S$ is \emph{tree-shaped} if $\Gamma(S)$
is a forest. Furthermore, $S$ is called \emph{alternation bounded} if every
variable occurring in $S$ also has an alternation constraint in $S$.

\begin{proposition}\label{complexity:constraint-decomposition}
For any disjunction-free $\frag{1}{2}$-formula $\varphi$, one can construct
polynomial-size constraint systems $T$ and $S$ over variables $V'\supseteq \freevar{\varphi}$ such that 
\begin{enumerate}
\item $T$ is tree-shaped,
\item $S$ is alternation bounded, and
\item $\assignments[\freevar{\varphi}]{T\cup S}=\assignments{\varphi}$.
\end{enumerate}
\end{proposition}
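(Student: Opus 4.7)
The plan is to flatten the disjunction-free $\Sigma_1$ formula $\varphi$ to a conjunction of atomic literals, eliminate constants by replacing them with fresh alternation-bounded witnesses, and then partition the resulting constraints into $T$ (those touching an unbounded variable) and $S$ (those staying among bounded variables). The last and most subtle step will be a local surgery that duplicates bounded variables bridging two unbounded ones, so that the graph $\Gamma(T)$ becomes a forest.

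More concretely: after prenex normalisation and stripping the outer existential block (whose variables become part of $V'$), $\varphi$ is equivalent to a conjunction of literals of the form $s\subword t$, $s=t$ or their negations, plus the alternation constraints $x\in (a_1^*\cdots a_n^*)^{\ell(x)}$ imposed by $\ell$. For every constant $w$ occurring in a literal, I would introduce a fresh variable $z_w$, replace $w$ throughout by $z_w$, and add $z_w = w$ together with an alternation constraint $z_w\in (a_1^*\cdots a_n^*)^{\ell_w}$, where $\ell_w$ is the alternation number of $w$. Partition $V'=U\cup B$, with $|U|\le 2$ the unbounded variables and $B$ all the others (original bounded variables and constant witnesses). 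I would then put every $(x,y)$-constraint with $x,y\in B$, together with all alternation constraints on variables in $B$, into $S$, and every $(x,y)$-constraint with at least one endpoint in $U$ into $T$. By construction, every variable appearing in $S$ carries an alternation constraint in $S$, so $S$ is alternation bounded.

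The main obstacle is to make $\Gamma(T)$ a forest, and this is where the hypothesis $j\le 2$ is used crucially. Since bounded--bounded constraints were routed to $S$, the only potential cycles in $\Gamma(T)$ pass through bounded variables adjacent to both $u_1$ and $u_2$: one such vertex combined with an edge $\{u_1,u_2\}$ yields a triangle, two give a 4-cycle, etc. I would destroy these cycles by duplication: for every $b\in B$ whose $T$-constraints involve both $u_1$ and $u_2$, introduce a fresh $b'$, rewrite every $(b,u_2)$-constraint of $T$ as the corresponding $(b',u_2)$-constraint, and add $b=b'$ together with $b'\in (a_1^*\cdots a_n^*)^{\ell(b)}$ to $S$. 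After this transformation, every bounded vertex of $\Gamma(T)$ is a leaf attached to exactly one of $u_1,u_2$, so $\Gamma(T)$ reduces to (at most) the edge $\{u_1,u_2\}$ together with pendant leaves, a forest on $V'$.

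Semantic correctness $\assignments[\freevar{\varphi}]{T\cup S}=\assignments{\varphi}$ then follows because every introduced variable is pinned to a concrete value (either a constant $w$, or a copy of some $b$) and can be existentially projected away; the partition of constraints into $T$ and $S$ merely regroups the conjuncts. Polynomial size is immediate, since every input literal or constant generates a bounded number of new variables and constraints. The duplication trick is precisely what breaks down with three unbounded variables: cycles of the form $u_1{-}b{-}u_2{-}b'{-}u_3{-}b''{-}u_1$ cannot be eliminated by copying bounded witnesses, a combinatorial gap consistent with the undecidability of $\frag{1}{3}$ proved earlier.
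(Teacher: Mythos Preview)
Your approach is correct and genuinely different from the paper's. The paper proceeds by structural induction on the formula, maintaining a strengthened invariant: besides \labelcref{treeshape:tree}--\labelcref{treeshape:map} it also requires that whenever the formula has two unbounded free variables, their images are either neighbours in $\Gamma(T)$ or lie in distinct components. This extra property is what makes the conjunction case go through when both conjuncts carry two unbounded free variables, and it is established via a quotient construction that glues the two inductively obtained systems along the unbounded variables. You, by contrast, first flatten $\varphi$ globally to a single conjunction of literals (legitimate because $\varphi$ is disjunction-free $\Sigma_1$), route all bounded--bounded constraints to $S$, and then observe that the only obstacle to $\Gamma(T)$ being a forest is a bounded vertex adjacent to both unbounded ones; splitting each such vertex into two copies linked by an equality in $S$ removes all cycles at once. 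Your argument is shorter and arguably more transparent for this particular proposition; the paper's inductive proof is more compositional and makes the role of the ``two unbounded variables'' hypothesis visible at every step rather than only in the final surgery. Both yield polynomial size for the same reason: each literal or constant spawns a bounded number of new variables and constraints.
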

\begin{proof}
We will need the notion of quotients of constraint systems. The idea is to
identify certain pairs of variables.  Suppose $S$ is a constraint system over
$V$. Furthermore, let $\sim\subseteq V\times V$ be an equivalence relation that
specifies which variables we want to identify with each other. Then we define
the \emph{quotient} $S/\mathord{\sim}$ as a constraint system over the variable
set $V/\mathord{\sim}$ with the constraints
\begin{align*}
S/\mathord{\sim} ~= &~\{ [x]\delta [y] \mid \delta\in\{\subword,\not\subword\}, ~(x\delta y)\in S\}. \\
                   &\cup~ \{[x] = w \mid  (x=w)\in S \} \\
                   &\cup~ \{[x]\in (a_1^*\cdots a_n^*)^\ell \mid (x\in (a_1^*\cdots a_n)^\ell) \in S\}
\end{align*}

In the course of constructing the constraint systems, it will be convenient to
assume that two constraint systems are defined over disjoint sets of variables.
To this end, we need some notation to state that a constraint system is
equivalent to a formula even though its variables have different names.
Suppose we have a constraint system $S$ over the set of variables $V'$ and
$\psi\colon\freevar{\varphi}\to V'$ is an injective map. Then, via $\psi$, the
formula $\varphi$ defines a set of assignments
$\assignments[\psi]{\varphi}\subseteq (A^*)^{\im{\psi}}$.  We say that
$\varphi$ and $S$ are \emph{$\psi$-equivalent} if
$\assignments[\im{\psi}]{S}=\assignments[\psi]{\varphi}$.

We may clearly assume that all literals are of the form
$x\subword y$, $x\not\subword y$, or $x=w$ for $w\in A^*$ (and there are no
literals $w\subword x$ etc.).

We show the following stronger statement. Let $B$ be the set of variables in
$\varphi$ that are alternation bounded. For each disjunction-free
$\frag{1}{2}$-formula $\varphi$, there is a set of variables $V'$, constraint
systems $T$ and $S$ over $V'$, an injective map $\psi\colon
\freevar{\varphi}\to V'$ such that the following holds. If $B'\subseteq V'$
denotes the set of variables for which there is an alternation bound in $S$,
then
\begin{enumerate}[label=(\roman*)]
\item\label{treeshape:tree} $T$ is tree-shaped,
\item\label{treeshape:alt} $S$ is alternation-bounded,
\item\label{treeshape:equi} $T\cup S$ and $\varphi$ are $\psi$-equivalent,
\item\label{treeshape:map} for every $x\in B$ we have $\psi(x)\in B'$, and
\item\label{treeshape:two} if $|\freevar{\varphi}\setminus B|=2$ with $\freevar{\varphi}\setminus B=\{x,y\}$, then
$\psi(x)$ and $\psi(y)$ are either neighbors in $\Gamma(T)$ or in distinct components.
\end{enumerate}

To prove this statement by induction, we need to consider three cases.
\begin{enumerate}
\item Literals, i.e. $x\subword y$, $x\not\subword y$, or $x=w$. There are only two
variables so we can just take the literal as the set $T$ and let $S$
contain the global alternation constraints for the variables in the literal.
\item Existentially quantified formulas $\exists x\colon \varphi$. Here, we
just reduce the set of free variables, so it suffices to adjust the map $\psi$.
\item Conjunctions $\varphi\equiv\varphi_0\wedge \varphi_1$. Suppose we have
constructed $T_i,S_i,V'_i,\psi_i,B'_i$ as above for $i=0,1$. We may clearly
assume $V_0\cap V_1=\emptyset$.  We construct $T,S,V',\psi,B'$ as follows.  Let
$\sim\subseteq (V'_0\cup V'_1)\times (V'_0\cup V'_1)$ be the smallest
equivalence relation with $\psi_0(x)\sim \psi_1(x)$ for all
$x\in \freevar{\varphi}\setminus B$. Then we take $V'=(V'_0\cup V'_1)/\mathord{\sim}$
and define $T=(T_0\cup T_1)/\mathord{\sim}$. Moreover, let
\[ S=S_0\cup S_1\cup \{[\psi_0(x)]=[\psi_1(x)] \mid x\in\freevar{\varphi}\cap B\}. \]
Moreover, we choose $\psi\colon\freevar{\varphi}\to V'$ so  that
$\psi(x)=\psi_0(x)$ if $x\in\freevar{\varphi_0}$ and $\psi(x)=\psi_1(x)$ if
$x\notin\freevar{\varphi_0}$.

It is clear that \cref{treeshape:alt,treeshape:equi,treeshape:map} above are
satisfied.  It remains to verify \cref{treeshape:tree,treeshape:two}.
We distinguish the following cases.
\begin{itemize}
\item $|\freevar{\varphi_i}\setminus B|\le 1$ for some $i\in\{0,1\}$. Then
there is at most one variable in $V'_i$ that is identified with a variable in
$V'_{1-i}$ by $\sim$. Hence, $\Gamma(T)$ is obtained from $\Gamma(T_0)$ and
$\Gamma(T_1)$ either by disjoint union or by identifying one vertex from
$\Gamma(T_0)$ with one vertex from $\Gamma(T_1)$. In any case, $\Gamma(T)$ is a
forest. Hence, \cref{treeshape:tree} is satisfied.

Let us now show \cref{treeshape:two}. Hence, assume $|\freevar{\varphi}\setminus
B|=2$ with $\freevar{\varphi}\setminus B=\{x,y\}$.

Clearly, if $\freevar{\varphi_0}\setminus B$ and $\freevar{\varphi_1}\setminus
B$ are disjoint, then no variables are identified and hence $\psi(x)$ and
$\psi(y)$ are in distinct components of $\Gamma(T)$. Hence, we assume that
$\freevar{\varphi_0}\setminus B$ and $\freevar{\varphi_1}\setminus B$ have a
variable in common, say $x$.

Since $|\freevar{\varphi_0}\setminus B|\le 1$, this means
$\freevar{\varphi_1}\setminus B=\{x,y\}$ and hence $\psi_1(x)$ and $\psi_1(y)$
are neighbors in $\Gamma(T_1)$ or they are in distinct components of $\Gamma(T_1)$.

$\Gamma(T)$ is obtained from $\Gamma(T_0)$ and $\Gamma(T_1)$ by identifying
$\psi_0(x)$ and $\psi_1(x)$.  Therefore, $\psi(x)$ and $\psi(y)$ are neighbors
in $\Gamma(T)$ if and only if they are neighbors in $\Gamma(T_1)$.  Moreover,
they are in disjoint components of $\Gamma(T)$ if and only if they are in
disjoint components of $\Gamma(T_1)$. This proves \cref{treeshape:two}.
\item $|\freevar{\varphi_0}|=|\freevar{\varphi_1}|=2$. Write
$\freevar{\varphi_0}\setminus B=\freevar{\varphi_1}=\{x,y\}$. Note that
$\Gamma(T)$ is obtained from $\Gamma(T_0)$ and $\Gamma(T_1)$ by identifying
$\psi_0(x)$ with $\psi_1(x)$ and identifying $\psi_0(y)$ with $\psi_1(y)$.
Since for each $i\in\{0,1\}$ we know that $\psi_i(x)$ and $\psi_i(y)$ are
either neighbors in $\Gamma(T_i)$ or in distinct components, this clearly
implies that $\Gamma(T)$ is a forest. Hence, we have shown \cref{treeshape:tree}.

Moreover, if for some $i\in\{0,1\}$, $\psi_i(x)$ and $\psi_i(y)$ are neighbors
in $\Gamma(T_i)$, then $\psi(x)$ and $\psi(y)$ are neighbors in $\Gamma(T)$.
Otherwise, $\psi(x)$ and $\psi(y)$ are in disjoint components of $\Gamma(T)$.
This proves \cref{treeshape:two}.
\end{itemize}
\end{enumerate}
\end{proof}

\subsection*{Counter automata}
In the next step, we exploit the decomposition into a tree-shaped constraint
system and an alternation-bounded constraint system to reduce satisfiability to
non-emptiness of counter automata.
To this end, we use a type of counter automata known as \emph{Parikh
automata}~\cite{KlaedtkeRuess2003,CaFiMcK2012}.  In terms of expressiveness,
these are equivalent to the classical reversal-bounded counter
automata~\cite{Ibarra1978}, but their syntax makes them convenient for our
purposes.

Let $V$ be a finite set of variables.  A \emph{counter automaton over $V$} is
a tuple $\cA=(Q,A,C,E,q_0,F)$, where $Q$ is a finite set of \emph{states}, $A$
is the input alphabet, $C$ is a set of \emph{counters}, 
\[ E\subseteq Q\times (A\cup \{\eword\})^V \times \N^C \times Q \] 
is the finite set of \emph{edges}, $q_0\in Q$ is the
\emph{initial state}, and $F$ is a finite set of pairs
$(q,\varphi)$, where $q\in Q$ and $\varphi$ is an existential Presburger
formula with free variables in $C$.  A \emph{configuration of $\cA$} is a tuple
$(q,\alpha,\mu)$, where $q\in Q$, $\alpha\in (A^*)^V$, $\mu\in \N^C$. The step
relation is defined as follows. We have  
$(q,\alpha,\mu)\autstep[\cA] (q',\alpha',\mu')$
iff there is an edge $(q,\beta,\nu,q')\in E$ such that $\alpha'=\alpha\beta$ and $\mu'=\mu+\nu$. 
A counter automaton accepts a set of assignments, namely
\begin{align*}
\langof{\cA}=\{\alpha\in(A^*)^V \mid \exists (q,\varphi)\in F\colon 
  (q_0,\eword,0)\autsteps[\cA] (q,\alpha,\mu), ~\mu\models\varphi \}\:.
\end{align*}

We call a subset $R\subseteq (A^*)^V$ a \emph{counter relation} if there is a
counter automaton $\cA$ with $R=\langof{\cA}$. If $|V|=1$, say $V=\{x\}$, then
$\cA$ defines a subset of $A^*$, namely the language $\{w\in A^* \mid (x\mapsto w)\in
\langof{\cA}\}$.  Languages of this form are called \emph{counter languages}.

Suppose $V_0,V_1$ are sets of variables with $|V_0\cap V_1|\le 1$. Let
$\cA_i=(Q_i,A,C_i,E_i,q_{0,i},F_i)$ be a counter automaton over $V_i$ for
$i=0,1$ such that $C_0\cap C_1=\emptyset$. Then a simple product construction
yields a counter automaton
$\cA_0\otimes \cA_1=(Q_0\times Q_1,A,C_0\cup C_1,E,(q_{0,0},q_{0,1}),F)$ over $V_0\cup V_1$
such that $((p_0,p_1), \alpha, \mu)\autsteps[\cA_0\otimes\cA_1] ((p'_0,p'_1),\alpha',\mu')$ iff 
\begin{align*}
(p_i,\alpha|_{V_i}, \mu|_{C_i}) \autsteps_{\cA_i} (p'_i,\alpha'|_{V_i},\mu'|_{C_i})
\end{align*}
for $i=0,1$ and 
\[
F=\{((p_0,p_1), \varphi_0\wedge \varphi_1) \mid \text{$(p_i,\varphi_i)\in F_i$ for $i=0,1$}\}.
\]

\begin{proposition}\label{complexity:treeshaped:counterrelation}
Given a tree-shaped constraint system $T$, one can construct in exponential time
a counter automaton $\cA$ with $\langof{\cA}=\assignments{T}$.
\end{proposition}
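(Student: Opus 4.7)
The plan is to build the counter automaton bottom-up, one constraint at a time, by combining constant-size component automata via the product $\otimes$. First, I would construct, for each constraint of $T$, a small counter automaton, in fact an ordinary NFA with no counters. The single-variable constraints $x=w$ and $x\in(a_1^*\cdots a_n^*)^\ell$ are regular and yield automata of size $O(|w|)$ and $O(n\ell)$ respectively. For the edge constraints, $x=y$ is recognised by a one-state two-tape NFA with transitions $(a,a)$ for each $a\in A$, while $x\subword y$ is recognised by a one-state NFA with transitions $(a,a)$ and $(\eword,a)$ for each $a$. For $x\not\subword y$ I would use the characterisation that $x\not\subword y$ iff $x = uav$ and $y = wz$ with $u \subword w$ and $a$ not occurring in $z$: a constant-size NFA nondeterministically guesses the failure letter $a\in A$ via an $\eword$-transition from the matching state into a state $q_a$, then consumes the committed $a$ on the $x$-tape and thereafter forbids any $a$ on the $y$-tape.

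To assemble these components I would exploit the tree-shape hypothesis. I first split $\Gamma(T)$ into its connected components; distinct components share no variables, so their automata combine freely via $\otimes$ (the precondition $|V_0\cap V_1|\le 1$ is trivially met, with $|V_0\cap V_1|=0$). Within a single tree, I pick an arbitrary root and enumerate its edges $e_1,\ldots,e_m$ in depth-first order so that each $e_i$ shares exactly one endpoint with $e_1\cup\cdots\cup e_{i-1}$. I start from the single-variable automaton for the root, and inductively extend by taking $\otimes$ with the edge automaton for $e_i$ and with the single-variable automaton for $e_i$'s new endpoint. At every step the two variable sets intersect in exactly one variable, so the precondition is met, and correctness is immediate by induction: the current automaton accepts an assignment iff every already-combined constraint of $T$ is satisfied, so after processing all edges we have $\langof{\cA}=\assignments{T}$.

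For the complexity, each component automaton has polynomial size in its constraint, and the product of $k$ such automata has size bounded by the product of their sizes. Since $T$ has polynomially many constraints each of polynomial size, the final automaton has exponential size and is constructible in exponential time, as claimed. The main obstacle in this plan is really the tree-shape assumption itself: on a graph with a cycle one would eventually be forced to take $\otimes$ with $|V_0\cap V_1|\ge 2$, which is not supported by the given product construction; the entire point of the proposition is that tree-shape avoids this problem. A secondary technical subtlety is the construction for $x\not\subword y$, where the commitment to a failure letter must be encoded carefully to avoid spuriously accepting pairs where some embedding actually exists.
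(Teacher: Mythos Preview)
Your inductive strategy---peel off a leaf, recurse, and combine via $\otimes$---is essentially the paper's proof; both versions also set the unary constraints aside and handle the binary ones edge by edge. The gap is in your automaton for $x\not\subword y$. The characterisation you state is false in the backward direction: with $x=a$, $y=ab$, $u=\eword$, $w=a$, $z=b$ one has $u\subword w$ and $a\notin z$, yet $a\subword ab$. Operationally, your matching state carries the $(\eword,a)$ transition, so the automaton may first skip the $a$ in $y$, then jump to $q_a$, consume the $a$ on the $x$-tape, and finish $y$ with the remaining $b$---an accepting run on a subword pair. For correctness the matching phase must be \emph{greedy} (leftmost): after reading a letter from $x$ one must advance in $y$ until that very letter is matched, never nondeterministically skipping past a matching position. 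A constant-size automaton with one ``pending letter'' state per $a\in A$ achieves this, but the one you describe does not; the caveat you add at the end flags exactly this danger without resolving it.

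A related point: you build one automaton per constraint but then speak of ``the edge automaton for $e_i$''. When several $(x,y)$-constraints sit on the same edge they cannot simply be intersected afterwards, since the paper's $\otimes$ requires $|V_0\cap V_1|\le 1$ and asynchronous two-tape automata are not closed under intersection. The paper deals with this by first rewriting the conjunction of all constraints on an edge as a disjunction over the four atoms $x=y$, $x\strictsubword y$, $y\strictsubword x$, $x\bot y$, and then building one automaton for that disjunction; the counters enter precisely here, via the length comparisons used to split the $x\bot y$ case.
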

\begin{proof}
First, observe that it suffices to consider the case where every constraint in
$T$ involves two variables: The other constraints have the form $x=w$ for some
$w\in A^*$ or $x\in(a_1^*\cdots a_n^*)^\ell$ for some $\ell\in\N$ and can easily
be imposed afterwards in the counter automaton.

We construct the automaton inductively. The statement is trivial if $T$
involves only one variable, so assume $|V|\ge 2$ from now on.

Since $\Gamma(T)$ is a forest, it contains a vertex $x\in V$ with at most one
neighbor. Let $T'$ be the constraint system obtained from $T$ by removing all
constraints involving $x$ and suppose we have already constructed a counter
automaton $\cA'$ with $\langof{\cA'}=\assignments{T'}$. 

Now if $x$ has no neighbor, it is easy to construct the automaton for $T$.  So
suppose $x$ has a unique neighbor $y$.  Then, the additional constraints
imposed in $T$ are all $(x,y)$-constraints. Let $T''$ be the set of all
$(x,y)$-constraints in $T$. Now note that if $\cA''$ is a counter automaton
with $\langof{\cA''}=\assignments{T''}$, then we have 
\[ \langof{\cA'\otimes\cA''} = \assignments{T'\cup T''}=\assignments{T}. \]
Therefore, it suffices to construct in polynomial time a counter automaton $\cA''$
with $\langof{\cA''}=\assignments{T''}$.

Observe that any set of $(x,y)$-constraints can be written as a disjunction of
one of the following constraints:
\begin{align*} \text{(i)~$x=y$} && \text{(ii)~$x\strictsubword y$} && \text{(iii)~$y\strictsubword x$} && \text{(iv) $x\bot y$} 
\end{align*}
Since it is easy to construct a counter automaton for the union of two
relations accepted by counter automata, it suffices to construct a counter
automaton for the set of solutions to each of the constraints
(i)---(iv).
This is obvious in all cases but the last. In that last case, one can notice
that $x\bot y$ holds if either
\begin{enumerate*}
\item $|x|<|y|$ and $x\not\subword y$ or
\item $|y|<|x|$ and $y\not\subword x$ or \item $|x|=|y|$ and $x\ne y$.
\end{enumerate*}
Note that each of these cases is easily realized in a counter automaton since
we can use the counters to guarantee the length constraints. Moreover, the
resulting counter automaton can clearly be constructed in polynomial time,
which completes the proof.
\end{proof}

We can now prove \cref{complexity:two-variables:counterrelation} by taking the
constraint system provided by \cref{complexity:constraint-decomposition} and
construct a counter automaton just for $T$ using
\cref{complexity:treeshaped:counterrelation}.  Then, we can impose the
constraints in $S$ by using additional counters. Note that since all variables
in $S$ are alternation bounded, we can store these words, in the form of their
occurring exponents, in counters. We can then install the polynomial-size
Presburger formulas from \cref{subwords-in-presburger} in the counter automaton
to impose the binary constraints required by $S$. This results in an
exponential size counter automaton that accepts the satisfying assignments of
$\varphi$.  The $\NEXP$ upper bound then follows from the fact that
non-emptiness for counter automata is $\NP$-complete.

Since counter automata are only a slight extension of reversal-bounded
counter automata, the following is well-known.
\begin{lemma}\label{complexity:counter:emptiness}
The non-emptiness problem for counter automata is $\NP$-complete.
\end{lemma}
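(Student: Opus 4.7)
The plan is to show NP-completeness by reducing non-emptiness of a counter automaton to satisfiability of an existential Presburger formula of polynomial size.

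For NP-hardness, I would observe that satisfiability of existential Presburger is already NP-hard (Oppen \cite{Oppen1978} as cited earlier for \cref{thm:presburger-eq}), and any such formula can be encoded as a non-emptiness instance for a trivial one-state counter automaton whose edges nondeterministically add arbitrary non-negative values to the counters and whose unique acceptance pair is $(q_0,\varphi)$.

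For the NP upper bound, given $\cA=(Q,A,C,E,q_0,F)$, the key observation is that the counter updates are purely additive and commutative, so only the multiplicity of each edge in a run matters, not the order. Concretely, I would view the edge set $E$ as an alphabet and build an NFA $N$ over $E$ with states $Q$, initial state $q_0$, and an $e$-labelled transition from $p$ to $p'$ whenever $e=(p,\beta,\nu,p')\in E$; for each $(q_f,\varphi_f)\in F$ we let $N_{q_f}$ be the variant with sole final state $q_f$. A word $e_1\cdots e_r\in E^*$ then belongs to $\langof{N_{q_f}}$ exactly when $e_1\cdots e_r$ is the edge sequence of a run of $\cA$ from $q_0$ to $q_f$, and the resulting counter valuation is $\mu=\sum_{e\in E}x_e\cdot\nu(e)$, where $x_e$ counts the occurrences of $e$.

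The workhorse is the Verma--Seidl--Schwentick theorem (or any of its equivalents in the Parikh-automata literature \cite{KlaedtkeRuess2003,CaFiMcK2012}): from an NFA one can build in polynomial time an existential Presburger formula $\Psi_{q_f}(x_e)_{e\in E}$ whose models are exactly the Parikh images of words in $\langof{N_{q_f}}$. Non-emptiness of $\cA$ then becomes
\[
\bigvee_{(q_f,\varphi_f)\in F}\exists (x_e)_{e\in E}\colon\Psi_{q_f}\big((x_e)_e\big)\ \wedge\ \varphi_f\!\left(\sum_{e\in E}x_e\cdot\nu(e)\right),
\]
which is an existential Presburger sentence of polynomial size (each $\varphi_f$ is already existential Presburger by assumption, and summing $x_e\cdot\nu(e)$ is a linear expression with constant coefficients). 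Satisfiability of existential Presburger is in NP, yielding the upper bound.

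The only step that requires care is the Parikh-image construction: I would simply cite it as a known polynomial-time result rather than reprove it, since its size bound is well documented. Apart from that, everything is bookkeeping: disjoint renaming of the existential quantifier blocks from $\Psi_{q_f}$ and $\varphi_f$, and absorbing the $\mu$-defining equalities into the quantifier prefix.
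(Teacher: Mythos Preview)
Your proposal is correct and is essentially the same argument as the paper's: both reduce non-emptiness to satisfiability of an existential Presburger formula by taking the Parikh image of the underlying NFA over the edge alphabet (citing the Verma--Seidl--Schwentick construction) and conjoining with the acceptance formulas, and both invoke $\NP$-completeness of existential Presburger for the bounds. Your hardness direction (a one-state automaton with self-loops realizing arbitrary counter values and acceptance pair $(q_0,\varphi)$) is spelled out more explicitly than in the paper, which simply points to the $\NP$-hardness of existential Presburger.
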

\begin{proof}
Given a counter automaton $\cA=(Q,A,C,E,q_0,F)$, and a state $q\in Q$, we
can construct an existential Presburger formula $\theta_q$ with a free variable
for each edge in $\cA$ that is satisfied for an assignment $\nu\in\N^E$ iff
there is a run from $q_0$ to $q$ where each edge $e\in E$ occurs exactly
$\nu(e)$ times. This is just the fact that we can construct in polynomial time
an existential Presburger formula for the Parikh image of a finite
automaton~\cite{VermaSS05}.

For each edge $e=(p,\alpha,\mu,p')$, define $\mu_e=\mu$.  Then the formula
\[ \bigvee_{(q,\varphi)\in F} \theta_q\wedge \bigwedge_{c\in C} c = \mu_e(c)\cdot e \wedge \varphi \]
expresses precisely that there is an accepting run. The fact that the
satisfiability problem for existential Presburger arithmetic is
$\NP$-complete~\cite{Oppen1978} now gives us the upper bound as well as the
lower bound.
\end{proof}

We are now ready to prove \cref{complexity:two-variables:counterrelation}.
\begin{proof}
First we use \cref{complexity:constraint-decomposition} to turn the formula
$\varphi$ into constraint systems $T$ and $S$ such that $T$ is tree-shaped, $S$
is alternation bounded, and $\assignments[\freevar{\varphi}]{T\cup
S}=\assignments{\varphi}$. Then we use
\cref{complexity:treeshaped:counterrelation} to to obtain in exponential time a
counter automaton $\cA$ with $\langof{\cA}=\assignments{T}$.

Let $U=\{x_1,\ldots,x_m\}$ be the set of variables occurring in $S$.  It
remains to impose the constraints in $S$.  We do this by first building the
product with one automaton $\cA_i$ for each $x_i$.  This automaton imposes the
alternation constraint on $x_i$ and stores the word read into $x_i$ in a set of
counters.  Note that this is possible because the word is alternation bounded.
After taking the product with all these automata, we impose the remaining constraints
from $S$ (which are binary constraints or of the form $x=w$ with $x\in U$, $w\in
A^*$) by adding existential Presburger formulas that express subword
constraints as provided by \cref{subwords-in-presburger}.

We may clearly assume that whenever there is a variable $x$, an alternation
constraint $x\in(a_1^*\cdots a_n^*)^\ell$, and a constraint $x=w$, then
$w\in(a_1^*\cdots a_n^*)^\ell$: Otherwise, the system is not satisfiable and
clearly has an equivalent counter automaton.

Let $A=\{a_1,\ldots,a_n\}$.  Since $S$ is alternation bounded, $S$ contains an
alternation constraint $x_i\in (a_1^*\cdots a_n^*)^{\ell_i}$ for each
$i\in[1,m]$. Let $\ell$ be the maximum of all these $\ell_i$.  We will use the
counter variables $c_{i,j,k}$ for each $i\in [1,m]$, $j\in [1,\ell]$, and $k\in
[1,n]$. We set up the automaton $\cA_i$ over $V_i=\{x_i\}$ such that it has an
initial state $q_0$, a state $q_1$, and satisfies
\[ (q_0,\eword,0) \autsteps[\cA_i] (q_1,\alpha,\mu) \]
if and only if $\alpha$ maps $x_i$ to the word
\begin{equation}
a_1^{\mu(c_{i,1,1})} \cdots a_n^{\mu(c_{i,1,n})} \cdots a_1^{\mu(c_{i,\ell_i,1})} \cdots a_n^{\mu(c_{i,\ell_i,n})}.\label{complexity:word-in-counters}
\end{equation}
This can clearly be done with $n\cdot\ell$ states. Moreover, let
$F_i=\{(q_1,\top)\}$.  Note that $\cA_i$ has the counters $c_{i,j,k}$ even for
$j>\ell_i$ although it never adds to them.  The reason we have the variables
$c_{i,j,k}$ for $j>\ell_i$ is that this way, the formulas from
\cref{subwords-in-presburger} are applicable.

Note that since each $V_i$ is a singleton, the automaton
$\cB=\cA\times\cA_1\otimes\cdots\otimes\cA_m$ is defined. It satisfies
$\langof{\cB}=\assignments{S'}$, where $S'$ is the set of alternation
constraints in $S$.

It remains to impose the remaining constraints from $S$, namely the
binary constraints and those of the form $x=w$ with $x\in U$, $w\in A^*$. Let
$R\subseteq S$ be the set of these remaining constraints. For each $i$ and for
$\mu\in (A^*)^V$, let $w_{\mu,i}$ be the word in
\cref{complexity:word-in-counters}. According to \cref{subwords-in-presburger},
for each constraint $r\in R$, we can construct a polynomial size existential
Presburger formula $\kappa_r$ such that $\mu \models \kappa_r$ if and only if
the constraint is satisfied for the assignment $\alpha$ with
$\alpha(x_i)=w_{\mu,i}$. Moreover, let $\kappa=\bigwedge_{r\in R} \kappa_r$.

Suppose $\cB=(Q,A,C,E,q_0,F)$.
In the last step, we construct the counter automaton $\cB'=(Q,A,C,E,q_0,F')$, where
\[ F'=\{(q,\psi\wedge\kappa) \mid (q,\psi)\in F \}. \]
Now $\cB'$ clearly satisfies $\langof{\cB'}=\assignments{T\cup S}$.  Thus, if
we obtain $\cB''$ from $\cB'$ by projecting the input to those variables that
occur freely in $\varphi$, then
$\langof{\cB''}=\assignments[\freevar{\varphi}]{T\cup S}$. Moreover, $\cB''$ can
clearly be constructed in exponential time.

The membership of the truth problem in $\NEXP$ follows from the fact that
emptiness of counter automata is in $\NP$ (\cref{complexity:counter:emptiness}).
\end{proof}

\section{Expressiveness}\label{expressiveness}

In this section, we shed some light on which predicates or languages are
definable in our fragments $\frag{i}{j}$.

\subsection{Expressiveness of the $\frag{1}{0}$ fragment}
A language $L$ definable in $\frag{1}{0}$ always satisfies
$L\subseteq (a_1^*\cdots a_n^*)^\ell$ for some $\ell\in\N$. Hence, it can be
described by the set of vectors that contain the occurring exponents.  As can
be derived from results in \cref{complexity}, these sets are always semilinear.
In this section, we provide a decidable characterization of the semilinear sets
that are expressible in this way. Stating the characterization requires some
terminology.

Let $V$ be a set of variables. By $\N^V$, we denote the set of mappings
$V\to\N$.  By a \emph{partition of $V$}, we mean a set $P=\{V_1,\ldots,V_n\}$
of subsets $V_1,\ldots,V_n\subseteq V$ such that $V_i\cap V_j=\emptyset$ for
$i\ne j$ and $V_1\cup \cdots\cup V_n=V$. If $U\cap V=\emptyset$ and $\alpha\in
\N^U$, $\beta\in\N^V$, we write $\alpha\times\beta$ for the map
$\gamma\in\N^{U\cup V}$ such that $\gamma|_U=\alpha$ and $\gamma|_V=\beta$.
Furthermore, if $S\subseteq \N^U$, $T\subseteq \N^V$, then $S\times
T=\{\alpha\times\beta \mid \alpha\in S,~ \beta\in T\}$.
A semilinear set $S\subseteq \N^V$ is \emph{$P$-compatible} if it has
a semilinear representation where each occurring period vector belongs
to $\N^{V_i}$ for some $i\in[1,n]$.

\begin{theorem}\label{expressiveness:compatible}
Suppose $L\subseteq (a_1^*\cdots a_n^*)^\ell$.  Let $V=\{x_{i,j} \mid
i\in[1,\ell],~j\in[1,n]\}$ and consider the partition $P=\{V_1,\ldots,V_n\}$
where $V_j=\{x_{i,j}  \mid i\in[1,\ell]\}$ for $j\in[1,n]$. The language $L$ is
definable in $\frag{1}{0}$ if, and only if, the set
\[ \{\alpha\in\N^V \mid a_1^{\alpha(x_{1,1})}\cdots a_n^{\alpha(x_{1,n})}\cdots a_1^{\alpha(x_{\ell,1})}\cdots a_n^{\alpha(x_{\ell,n})}\in L \} \]
is a $P$-compatible semilinear set.
\end{theorem}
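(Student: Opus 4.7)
The plan is to prove both directions of the equivalence.

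For the $\Leftarrow$ direction, I would take a $P$-compatible semilinear $S$, decompose it as a finite union of $P$-compatible linear sets, and handle each piece by disjunction in $\frag{1}{0}$. The key structural observation is that a single $P$-compatible linear set $L\subseteq\N^V$ factors as a direct product $L=L_1\times\cdots\times L_n$ with $L_j\subseteq\N^{V_j}$ linear: the $V_j$-slice of a point of $L$ depends only on those coefficients whose period $p_k$ is supported on $V_j$, and distinct classes' coefficients are independent. Thus it suffices to express, for one letter $a_j$, the per-letter condition ``$(x_{1,j},\ldots,x_{\ell,j})\in L_j$'' as a $\frag{1}{0}$ formula on $u$. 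I would existentially quantify the coefficients $\lambda_k$ as $a_j$-counts of auxiliary alternation-bounded words, and then state the linear equalities $x_{i,j}=b_{j,i}+\sum_k p_{k,i}\lambda_k$ using the addition toolkit developed in the proof of \cref{subword:undecidable}. The nontrivial input is access to individual block sizes $x_{i,j}$ of $u$, which I would obtain through landmark subword constraints built from $(a_1\cdots a_n)^{i-1}$ together with witnesses in $a_j^*$, after a small disjunctive case analysis to handle empty neighboring blocks.

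For the $\Rightarrow$ direction, assume $L$ is defined by a $\frag{1}{0}$ formula $\varphi$. I would invoke \cref{subwords-in-presburger} to translate each atomic subword literal, including constant constraints, into an existential Presburger formula on the exponent vectors of $u$ and of the auxiliary alternation-bounded variables, so that $S$ appears as the projection of a semilinear set defined by a Boolean combination of such formulas. $P$-compatibility is then proved by a case split on embedding skeletons: for an alternation-bounded embedding $w_x\subword w_y$, the embedding is encoded by a capacity matrix $(e_{i,j,k})$, and for each fixed feasible skeleton the resulting system of equalities and inequalities in $(x,y)$ separates by letter $j$, so its solution set is a direct product across letters, hence $P$-compatible. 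Taking the union over all feasible skeletons (which impose cross-letter monotonicity) preserves $P$-compatibility, and $\not\subword$ literals are handled by De Morgan and distribution, yielding a union of per-letter conjunctions of strict upper bounds.

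To finish, one checks that $P$-compatibility is preserved under the assembly operations: intersection of direct products is the product of intersections (semilinear by Ginsburg--Spanier), union is immediate, and projection sends a period already supported on some $V_j$ either to zero or to a vector still supported on the same class, so projecting out the auxiliary variables keeps $P$-compatibility intact. The main obstacle is the skeleton analysis in the $\Rightarrow$ direction: the formulas $\tau_{i,j}$ from \cref{subwords-in-presburger} superficially couple different letters through the lexicographic condition involving $(i',j')<(i,j)$, and I must argue that splitting on the embedding matrix eliminates these disjunctive cross-letter couplings and exposes per-letter factorizable constraints. Once this case analysis is in place, the remaining steps of both directions become essentially routine.
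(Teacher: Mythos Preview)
Your overall plan is sound, but it diverges from the paper's argument in both directions, and the $\Leftarrow$ direction has a gap you should patch.

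\paragraph{The $\Rightarrow$ direction.} Your skeleton analysis is correct but heavier than necessary. The paper's argument is a one-line observation: the existential Presburger formulas $\varphi,\psi$ from \cref{subwords-in-presburger} are already $P$-compatible in the sense of \cref{expressiveness:partition}, once the auxiliary variables $t_{i,j,k},e_{i,j,k}$ are placed in class $V'_j$. The cross-letter coupling that worried you appears only in the case condition ``$e_{i',j',k'}>0$ for some $(i',j')<(i,j)$, $k'>k$'', which unfolds into a disjunction of \emph{single-variable} atoms $e_{i',j',k'}>0$ (each trivially confined to its own class $V'_{j'}$) together with the class-$j$ equality $t_{i,j,k}=y_{k,j}-\sum_{i'}e_{i',j,k}$ or $t_{i,j,k}=0$. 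So every literal already lives in one block of the partition, and \cref{expressiveness:partition} immediately gives $P$-compatibility of $S$. Your skeleton split is essentially the DNF expansion that underlies this observation; it works, but the paper avoids carrying it out explicitly.

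\paragraph{The $\Leftarrow$ direction.} Here the paper proceeds differently and more safely. It passes through \cref{expressiveness:partition} to obtain a $P$-compatible existential Presburger formula, translates each variable $x\in V'_j$ into a word variable $\bar{x}\in a_j^*$, replaces addition literals by $|\bar{x}|_{a_j}=|\bar{y}|_{a_j}+|\bar{z}|_{a_j}$ (via \cref{undec2:add}), and then \emph{builds} the target word by asserting $u=\bar{x}_{1,1}\cdots\bar{x}_{\ell,n}$ using repeated applications of \cref{undec2:conc:unary}. Your route instead tries to \emph{extract} the block sizes $x_{i,j}$ from $u$ via landmarks such as $(a_1\cdots a_n)^{i-1}$. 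This is where you have a genuine problem: the exponent vector of $u\in(a_1^*\cdots a_n^*)^\ell$ is not unique (e.g.\ $a^5\in(a^*b^*)^2$ admits every split $a^ka^{5-k}$), so ``the block size $x_{i,j}$ of $u$'' is not well-defined and no landmark scheme can recover it. What you actually need is ``there exists a decomposition $(x_{i,j})$ with $u=\prod_{i,j} a_j^{x_{i,j}}$ and $(x_{i,j})\in S$'', which is exactly the paper's build-then-compare formulation. Your disjunctive case analysis on empty blocks does not repair this; it would at best select one canonical decomposition, but you still have to certify that the chosen exponents reassemble to $u$, and the clean way to do that in $\frag{1}{0}$ is precisely \cref{undec2:conc:unary}. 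Once you switch to existentially quantifying the $x_{i,j}$ (as words in $a_j^*$) and asserting the concatenation equality, the rest of your argument (per-letter linear constraints via addition) goes through and coincides with the paper's.
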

For example, this means we can define $\{a^nba^n \mid n\in\N\}$, but not
$\{a^nb^n\mid n\in\N\}$: A semilinear representation for the latter requires
a period that produces both $a$'s and $b$'s.

The proof of \cref{expressiveness:compatible} employs a characterization of
$P$-compatible sets in terms of Presburger arithmetic.  Let $V$ be a set of
variables and $\varphi$ be a Presburger formula whose variables are in $V$.
Let $P=\{V_1,\ldots,V_n\}$ be a partition of $V$.  We say $\varphi$ is
\emph{$P$-compatible} if there is a set of variables $V'\supseteq V$ and a
partition $P'=\{V'_1,\ldots,V'_n\}$ of $V'$ such that 
\begin{enumerate}
\item $V_j\subseteq V'_j$ for each $j\in[1,n]$ and
\item in each literal in $\varphi$, all variables belong to the same set $V'_j$
for some $j\in[1,n]$.
\end{enumerate}

The following is a simple observation.
\begin{theorem}\label{expressiveness:partition}
Let $P=\{V_1,\ldots,V_n\}$ be a partition of $V$. For sets $S\subseteq \N^V$,
the following conditions are equivalent:
\begin{enumerate}
\item\label{partition:semilinear} $S$ is a $P$-compatible semilinear set.
\item\label{partition:formula} $S=\llbracket \varphi\rrbracket$ for some $P$-compatible existential
Presburger formula $\varphi$.
\item\label{partition:cartesian} $S$ is a finite union of sets of the form $A_1\times\cdots\times A_n$
where each $A_j$ is a semilinear subset of $\N^{V_j}$.
\end{enumerate}
\end{theorem}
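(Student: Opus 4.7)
The plan is to establish the equivalences cyclically, as (1) $\Rightarrow$ (2) $\Rightarrow$ (3) $\Rightarrow$ (1). Each implication is essentially routine once one chooses the right normal form.

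\smallskip
For (1) $\Rightarrow$ (2), take a semilinear representation $S=\bigcup_k \{b_k+\sum_{i} \lambda_{k,i} p_{k,i} \mid \lambda_{k,i}\in\N\}$ where each period $p_{k,i}$ is supported in some $V_{j(k,i)}$. Introduce for each $\lambda_{k,i}$ a fresh variable placed in $V'_{j(k,i)}$, and set $V'_j=V_j$ extended by these fresh variables. Writing the equality $x=b_k+\sum_i\lambda_{k,i}p_{k,i}$ coordinate by coordinate yields, for each $v\in V$, a literal $x_v=(b_k)_v+\sum_{i\,:\,v\in V_{j(k,i)}}\lambda_{k,i}(p_{k,i})_v$ whose variables all lie in the same $V'_j$. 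Taking the existentially quantified disjunction over $k$ of the conjunctions of these literals gives a $P'$-compatible existential Presburger formula defining $S$.

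\smallskip
For (2) $\Rightarrow$ (3), pull the existential quantifiers of $\varphi$ to the outside as $\exists y_1,\ldots,y_m\colon \psi$ with $\psi$ quantifier-free, and put $\psi$ in disjunctive normal form: $\psi\equiv\bigvee_i\bigwedge_k\ell_{i,k}$. By $P'$-compatibility, each literal $\ell_{i,k}$ uses variables from a single block $V'_j$, so each conjunction decomposes as $\bigwedge_{j=1}^n\psi_{i,j}$ where $\psi_{i,j}$ has variables in $V'_j$ only. Since each existentially bound $y_t$ lies in one $V'_j$, the quantifiers split across the conjuncts, giving
\[
\llbracket\varphi\rrbracket
=\bigcup_{i} A'_{i,1}\times\cdots\times A'_{i,n},
\qquad A'_{i,j}=\llbracket\exists y_t\in V'_j\colon\psi_{i,j}\rrbracket\subseteq\N^{V'_j}.
\]
Each $A'_{i,j}$ is Presburger-definable and therefore semilinear. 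Projecting onto $V$ preserves both the product structure (because the factors have disjoint variables) and semilinearity, yielding the required representation of $S$.

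\smallskip
For (3) $\Rightarrow$ (1), it suffices to observe that if each $A_j\subseteq\N^{V_j}$ is semilinear, written as a finite union of linear sets whose period vectors live in $\N^{V_j}$, then $A_1\times\cdots\times A_n$ is a finite union of linear sets whose base is a concatenation of bases and whose period set is the disjoint union of the period sets of the factors; every such period is supported in a single $V_j$, so the product is $P$-compatible. A finite union of $P$-compatible semilinear sets is $P$-compatible, proving (1).

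\smallskip
No single step is a real obstacle; the one place requiring a little care is (2) $\Rightarrow$ (3), where one must justify both the DNF transformation (which may blow up the formula but only finitely) and the commutation of the existential quantifiers with the block-wise conjunctions, which relies crucially on the fact that the $V'_j$ are pairwise disjoint so that different $\psi_{i,j}$ share no free or bound variables.
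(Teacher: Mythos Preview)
Your proof is correct and follows essentially the same route as the paper. The paper declares \direction{partition:cartesian}{partition:semilinear} and \direction{partition:semilinear}{partition:formula} ``easy to see'' and only writes out \direction{partition:formula}{partition:cartesian}, handling the quantifiers by first observing that projection preserves form~\ref{partition:cartesian} and then reducing to the quantifier-free case before going to DNF; you instead pull to prenex form, take DNF of the matrix, and then distribute the existential quantifiers block-wise---this is the same argument in a slightly different order. One small slip: after you quantify away the bound $y_t$'s in $\psi_{i,j}$, the resulting set $A'_{i,j}$ already lives in $\N^{V_j}$, not $\N^{V'_j}$, so the subsequent ``projecting onto $V$'' step is redundant (or, if you meant $A'_{i,j}\subseteq\N^{V'_j}$ before quantification, the displayed equality $\llbracket\varphi\rrbracket=\bigcup_i A'_{i,1}\times\cdots\times A'_{i,n}$ does not type-check). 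Either reading is easily fixed and does not affect the argument.
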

\begin{proof}
The directions \direction{partition:cartesian}{partition:semilinear} and
\direction{partition:semilinear}{partition:formula} are easy to see, so we show
\direction{partition:formula}{partition:cartesian}.

If a set satisfies the condition of \ref{partition:cartesian}, then projecting
to a subset of the coordinates yields again a set of this form. Therefore, it
suffices to consider the case where in $\varphi$, there are no quantifiers.
Now, bring $\varphi$ into disjunctive normal form. Since each literal in
$\varphi$ only mentions variables from $V_j$ for some $j\in[1,n]$, we can sort
the literals of each co-clause of the DNF according to the subset $V_j$ they
mention. Hence, we arrive at the form
\[ \varphi \equiv \bigvee_{i=1}^k \bigwedge_{j=1}^n \varphi_{i,j}, \]
where $\varphi_{i,j}$ only mentions variables from $V_j$. This implies
\[ \llbracket\varphi\rrbracket = \bigcup_{i=1}^k \llbracket \varphi_{1,j}\rrbracket \times \cdots\times\llbracket \varphi_{n,j}\rrbracket, \]
which is the form required in \ref{partition:cartesian}.
\end{proof}

We are now ready to prove \cref{expressiveness:compatible}.
\begin{proof}
If $L$ is definable in $\frag{1}{0}$, we can write down a Presburger formula
that defines $S$.  Here, in order to express the subword ordering (and its
negation), we use the formulas from \cref{subwords-in-presburger}.  Observe
that these formulas are $P$-compatible. This means that $S$ is $P$-compatible.

For the converse, suppose $S$ is $P$-compatible. According to
\cref{expressiveness:partition}\ref{partition:semilinear}, $S$ is defined by a
$P$-compatible existential Presburger formula $\varphi$. Hence, $\varphi$ has
free variables $V=\{x_{i,j} \mid i\in [1,\ell], j\in[1,n]\}$ and uses variables
$V'\supseteq V$ that are partitioned as $V'=\biguplus_{j=1}^n V'_j$ so that in
each literal, all occurring variables belong to the same $V'_j$.

In the first step, we turn $\varphi$ into a $\frag{1}{0}$ formula
$\bar{\varphi}$ with the same number of free variables.  For each $x\in V'$, we
take a fresh variable $\bar{x}$, which will hold words in $a_j^*$. More
precisely, we have our new variables $\bar{V}=\{\bar{x} \mid x\in V'\}$ and a
mapping $\iota\colon \N^{V'}\to (A^*)^{\bar{V}}$ defined by
$\iota(\alpha)(\bar{x})=a_j^{\alpha(x)}$, where $j$ is the unique index with
$x\in V'_j$. We want to construct $\bar{\varphi}$ so that
$\assignments{\bar{\varphi}}=\iota(\assignments{\varphi})$.

We obtain $\bar{\varphi}$ from $\varphi$ as follows. For each literal $x=y+z$,
we know that there is a $j\in[1,n]$ with $x,y,z\in V'_i$, so we can replace
the literal with $|\bar{x}|_{a_j}=|\bar{y}|_{a_j}+|\bar{z}|_{a_j}$, which is
expressible in $\frag{1}{0}$ according to \cref{undec2:add} in the proof of
\cref{subword:undecidable} (note that in this case, we actually are in
$\frag{1}{0}$ because the variables $\bar{x}$, $\bar{y}$, and $\bar{z}$ range
over $a_j^*$ and are thus alternation bounded). Since we can clearly also
express $|\bar{x}|_{a_j}\ne |\bar{y}|_{a_j}$ in $\frag{1}{0}$, we use this to
implement literals $x\ne y$ with $x,y\in V'_j$.  Literals of the form $x=k$
with $k\in\N$ and $x\in V'_j$ can just be replaced by $\bar{x}=a_j^k$.  Then we
clearly have $\assignments{\bar{\varphi}}=\iota(\assignments{\varphi})$.

In the second step, we construct the words
\[ a_1^{\alpha(x_{1,1})}\cdots a_n^{\alpha(x_{1,n})}\cdots a_1^{\alpha(x_{\ell,1})}\cdots a_n^{\alpha(x_{\ell,n})} \]
for $\alpha\in\assignments{\varphi}$. This is possible thanks to
\cref{undec2:conc:unary} of the proof of \cref{subword:undecidable}. We can
express
\[ u=\bar{x}_{1,1}\cdots \bar{x}_{1,n}\cdots \bar{x}_{\ell,1}\cdots\bar{x}_{\ell,n} \]
by applying \cref{undec2:conc:unary} exactly $\ell\cdot n-1$ times, once to
append each $x_{i,j}$ to the word defined so far, using $\ell\cdot n-1$
additional variables. Of course, all these variables can be restricted to
$(a_1^*\cdots a_n^*)^\ell$, which means the resulting formula belongs to
$\frag{1}{0}$. Moreover, it clearly defines $L$.
\end{proof}

Our characterization of $\frag{1}{0}$ is decidable.
We use a technique from~\cite{GinsburgSpanier1966a}, where it is shown that
recognizability is decidable for semilinear sets. The idea is to characterize
$P$-compatibility as the finiteness of the index of certain equivalence
relations, which can be expressed in Presburger arithmetic.  
\begin{theorem}\label{expressiveness:compatibility:decidable}
Given a semilinear subset $S\subseteq\N^V$ and a partition $P$ of $V$, it is
decidable whether $S$ is $P$-compatible.
\end{theorem}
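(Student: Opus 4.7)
The plan is to reduce $P$-compatibility to the finite index of certain Presburger-definable equivalence relations, in the spirit of Ginsburg and Spanier. For each $j \in [1,n]$, I would define the equivalence $\sim_j$ on $\N^{V_j}$ by
\[
\alpha \sim_j \alpha' ~\iff~ \forall \beta \in \N^{V \setminus V_j}:\ \alpha \times \beta \in S \leftrightarrow \alpha' \times \beta \in S.
\]
Since $S$ is semilinear and hence Presburger-definable, so is each $\sim_j$. The key claim is that $S$ is $P$-compatible if and only if every $\sim_j$ has finite index.

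For the forward implication, write $S$ as a finite union $\bigcup_{k=1}^K A_1^{(k)} \times \cdots \times A_n^{(k)}$ of Cartesian products given by \cref{expressiveness:partition}; the $\sim_j$-class of $\alpha \in \N^{V_j}$ is determined by the subset $\{k \in [1,K] : \alpha \in A_j^{(k)}\}$, so $\sim_j$ has at most $2^K$ classes. For the backward implication, a coordinate-by-coordinate swap argument shows that whenever $(\alpha_1,\ldots,\alpha_n) \in S$ and $\alpha_j \sim_j \alpha'_j$ for every $j$, then $(\alpha'_1,\ldots,\alpha'_n) \in S$ as well. Hence every product $C_1 \times \cdots \times C_n$ of $\sim_j$-classes is either entirely inside or entirely disjoint from $S$, exhibiting $S$ as a finite union of such products. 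Because each class is a section of the Presburger-definable relation $\sim_j$, the classes are themselves semilinear, and \cref{expressiveness:partition} yields $P$-compatibility.

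It then suffices to decide finite index of the Presburger-definable $\sim_j$, which I would reduce to a finiteness test. Fixing any order on $V_j$ and letting $<_{\mathrm{lex}}$ be the induced lexicographic order on $\N^{V_j}$, set
\[
R_j = \{\alpha \in \N^{V_j} : \forall \alpha' <_{\mathrm{lex}} \alpha,\ \alpha' \not\sim_j \alpha\},
\]
a Presburger-definable set containing exactly one representative per $\sim_j$-class. Thus $\sim_j$ has finite index if and only if $R_j$ is finite, and finiteness of a Presburger-definable subset of $\N^d$ is decidable, e.g.\ by computing a semilinear representation of $R_j$ and checking that all period vectors vanish, or by deciding the Presburger sentence asserting boundedness of $R_j$. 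The overall algorithm runs this test for each $j \in [1,n]$.

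The main obstacle I anticipate is the backward direction of the characterisation, specifically ensuring that the coordinate-by-coordinate substitution argument yields the uniform inclusion/exclusion of each product of classes with respect to $S$, and that the classes obtained this way really are Presburger-definable so that the structural decomposition matches the form required by \cref{expressiveness:partition}. Once this characterisation is established, the decidability step is a routine application of standard Presburger decision procedures.
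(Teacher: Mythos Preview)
Your proposal is correct and follows essentially the same route as the paper: both characterise $P$-compatibility by the finiteness of the index of the Presburger-definable equivalences $\sim_j$ (your swap argument is exactly what justifies the decomposition of $S$ into products of classes), and both then reduce to a Presburger decision problem. The only cosmetic difference is in the last step---the paper tests finite index by expressing ``there is a bound $k$ such that every $\alpha$ is $\sim_j$-equivalent to some $\beta$ with $\|\beta\|\le k$'' as a single Presburger sentence, whereas you extract lexicographically minimal representatives $R_j$ and test finiteness; both are standard and equally valid.
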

\begin{proof}
For $\alpha\in\N^{V_i}$ and $\gamma\in\N^V$, we write
$\gamma[i/\alpha]$ to be the element of $\N^V$ with
\[ \gamma[i/\alpha](v)=\begin{cases} \alpha(v) & \text{if $v\in V_i$,} \\ \gamma(v) & \text{otherwise.}\end{cases} \]
For $\alpha,\beta\in\N^{V_i}$, we write $\alpha\sim_i\beta$ if for every
$\gamma\in\N^V$, we have $\gamma[i/\alpha]\in S$ if and only if
$\gamma[i/\beta]\in S$. Moreover, for $\gamma\in\N^V$, we will use the norm
$\|\cdot\|$ as defined by $\|\gamma\|=\sum_{v\in V} \gamma(v)$. We claim that
$S$ is $P$-compatible if and only if
\begin{equation}
\exists k\in\N\colon \bigwedge_{i=1}^n \forall\alpha\in\N^{V_i} \colon \exists \beta\in\N^{V_i}\colon \|\beta\|\le k,~\alpha\sim_i\beta.\label{charact:dec:compatible}
\end{equation}
Suppose \cref{charact:dec:compatible} holds. For each $\beta\in\N^{V_i}$, we define
$S_{i,\beta}=\{\alpha\in\N^{V_i} \mid \alpha\sim_i\beta\}$.
Then $S_{i,\beta}\subseteq\N^{V_i}$ is semilinear and we have
\[ S = \bigcup_{\beta_1\in\N^{V_1},\|\beta_1\|\le k}\cdots \bigcup_{\beta_n\in\N^{V_n},\|\beta_n\|\le k} S_{1,\beta_1}\times\cdots\times S_{n,\beta_n}. \]
Hence, $S$ is $P$-compatible.

Now assume $S$ is $P$-compatible. Then we can write $S=\bigcup_{j=1}^\ell
A_{j,1}\times \cdots\times A_{j,n}$, where each $A_{j,i}\subseteq\N^{V_i}$ is
semilinear. For each $i\in[1,n]$, consider the function $\kappa_i\colon\N^{V_i}\to \Powerset{\{1,\ldots,\ell\}}$ with
\[\kappa_i(\alpha)=\{j\in[1,\ell] \mid \alpha\in A_{j,i} \}. \]
Observe that if $\kappa_i(\alpha)=\kappa_i(\beta)$, then $\alpha\sim_i\beta$.
Since $\kappa_i$ has a finite codomain, this means the equivalence relation
$\sim_i$ on $\N^{V_i}$ has finite index. This immediately implies
\cref{charact:dec:compatible}.

Since we can clearly formulate the condition \cref{charact:dec:compatible} in
Presburger arithmetic, $P$-compatibility is decidable.
\end{proof}

In fact, it is not
hard to see that if $P$ consists only of singletons, a semilinear set is
$P$-compatible iff it is recognizable.  Hence,
\cref{expressiveness:compatibility:decidable} generalizes the decidability of
recognizability.
Let $M$ be a monoid. A subset $S\subseteq M$ is called \emph{recognizable} if
there is a finite monoid $F$ and a morphism $\varphi\colon M\to F$ such that
$S=\varphi^{-1}(\varphi(S))$.
\begin{theorem}\label{singletons:recognizable}
Suppose $P$ consists only of singletons. Then $S\subseteq\N^V$ is
$P$-compatible if and only if it is recognizable.
\end{theorem}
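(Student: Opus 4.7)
The plan is to combine \cref{expressiveness:partition} with two classical facts: Mezei's theorem characterising recognisable subsets of a direct product of monoids, and the equivalence between semilinearity and recognisability in $\N$. Since $P=\{\{v_1\},\ldots,\{v_n\}\}$ consists only of singletons, each $\N^{V_i}$ is canonically $\N$, and the implication \direction{partition:semilinear}{partition:cartesian} in \cref{expressiveness:partition} reduces $P$-compatibility of $S\subseteq \N^V$ to the statement that $S$ is a finite union of ``rectangles'' $A_1\times\cdots\times A_n$ where each $A_i$ is a semilinear subset of $\N$.

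Next I would invoke the well-known fact that the semilinear subsets of $\N$ coincide with the recognisable ones, namely the ultimately periodic subsets: any morphism from $(\N,+)$ to a finite monoid is eventually periodic by pigeonhole, so preimages of subsets of the monoid are ultimately periodic and hence semilinear; conversely, any ultimately periodic set is recognised by a suitable finite quotient of $\N$.

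Finally I would apply Mezei's theorem: a subset of a direct product of monoids $M_1\times\cdots\times M_n$ is recognisable (for the componentwise structure) if and only if it is a finite union of products $R_1\times\cdots\times R_n$ with each $R_i$ recognisable in $M_i$. Specialising to $M_i=\N$, this rectangular decomposition is exactly the one obtained from $P$-compatibility via the previous paragraphs, and the theorem follows by stringing together the equivalences. The only delicate point I expect is verifying Mezei's statement for the infinite factor monoid $\N$: the ``finite union of rectangles implies recognisable'' direction is immediate from the fact that recognisable sets form a Boolean algebra closed under pullbacks along the projection morphisms $\N^V\to\N$, while the converse is handled by taking a recognising morphism $\varphi\colon \N^V\to F$, projecting along each coordinate to obtain syntactic congruences on the factors, and expressing $S$ as the union of the (finitely many) rectangles built from the resulting equivalence classes whose product lies in $\varphi^{-1}(\varphi(S))$.
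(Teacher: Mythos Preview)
Your proposal is correct and follows essentially the same route as the paper: invoke condition~\ref{partition:cartesian} of \cref{expressiveness:partition}, the coincidence of semilinear and recognizable subsets of $\N$, and Mezei's theorem to identify the rectangular decomposition with recognizability in $\N^V$. The only difference is that you sketch proofs of Mezei's theorem and the ultimately-periodic characterisation, whereas the paper simply cites them.
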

\begin{proof}
Mezei's Theorem~\cite{berstel79} states that if $M_1,\ldots,M_n$ are monoids,
then a subset of $M_1\times\cdots\times M_n$ is recognizable if and only if it
is a finite union of sets $S_1\times \cdots\times S_n$ such that $S_i\subseteq
M_i$ is recognizable for $i\in\{1,\ldots,n\}$.

Combined with the fact that a subset of $\N$ is semilinear if and only if it is
recognizable, the condition \ref{partition:cartesian} of
\cref{expressiveness:partition} yields the result.
\end{proof}

\subsection{Expressiveness of $\frag{1}{0}$ vs. $\frag{1}{1}$}
It is obvious that
$\frag{1}{1}$ is strictly more expressive than $\frag{1}{0}$, because it permits the definition of
languages with unbounded alternations, such as $\{a,b\}^*$. But is this the
only difference between the two fragments? In other words: Restricted to alternation
bounded languages, is $\frag{1}{1}$ more expressive? The answer is no.
\begin{theorem}
If $L\subseteq (a_1^*\cdots a_n^*)^\ell$, then $L$ is definable in
$\frag{1}{1}$ if and only if it is definable in $\frag{1}{0}$.
\end{theorem}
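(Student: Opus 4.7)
The direction $\frag{1}{0}\Rightarrow\frag{1}{1}$ is trivial, since every $\frag{1}{0}$ formula is also a $\frag{1}{1}$ formula.

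For the converse, let $\varphi$ be a $\frag{1}{1}$ formula defining $L$, with free variable $x$. Since $L\subseteq(a_1^*\cdots a_n^*)^\ell$, the formula
\[ \varphi'(x) \equiv \varphi(x)\wedge x\in(a_1^*\cdots a_n^*)^\ell \]
also defines $L$, and the conjoined PT-constraint is expressible in $\fragf{0}{0}$. Hence we may assume without loss of generality that the free variable of $\varphi$ is alternation bounded. Consequently, any alternation-unbounded variable occurring in $\varphi$ must be existentially quantified.

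Now we apply exactly the same argument as in the proof of \cref{thm:sat-1-1}: we replace each quantifier $\exists t$ binding an alternation-unbounded variable $t$ by a bounded quantifier $\exists t\in(a_1^*\cdots a_n^*)^p$ for a suitable $p$ depending polynomially on the alternation bounds and on the number of literals of types~\labelcref{type:bt,type:bnt,type:tnb}. The crucial point is that \cref{lem-k-blocks} only compares $t$ with other \emph{alternation-bounded} words (those assigned to variables occurring in literals of types~\labelcref{type:bt,type:bnt,type:tnb}), and its conclusion does not care whether these bounded words come from quantified or free variables. In our setting, the free variable $x$ is alternation bounded and thus behaves, from the viewpoint of this reduction, exactly like any other alternation-bounded variable appearing in a literal with $t$.

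Iterating this replacement for each existential quantifier over an unbounded variable (in a $\frag{1}{1}$ formula there is at most one such variable, but it can be bound by several quantifiers in subformulas) yields an equivalent formula in which every quantifier is alternation bounded. Together with the fact that $x$ itself is alternation bounded, this is a formula in $\frag{1}{0}$ that still defines $L$. The only subtlety to verify is that the argument of \cref{thm:sat-1-1} goes through verbatim for open formulas; this amounts to checking that \cref{lem-k-blocks} is stated and proved in terms of arbitrary tuples of alternation-bounded words $x_1,\dots,x_n,y_1,\dots,y_m$, which is indeed the case.
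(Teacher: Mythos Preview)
Your proposal is correct and follows essentially the same approach as the paper: the paper's proof is a one-sentence sketch invoking \cref{lem-k-blocks} to bound every unbounded quantifier once the free variable is alternation bounded, and you have spelled out the same argument with more care (in particular, you make explicit why we may assume the free variable is alternation bounded, and you note that the reduction from \cref{thm:sat-1-1} works for open formulas).
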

\begin{proof}
  Let $\varphi$ be a $\frag{1}{1}$ formula where the free variable is
  alternation bounded and the variable $t$ is not alternation
  bounded. We can transform $\varphi$ into a disjunction
  $\bigvee_{i=1}^k \varphi_i$, where each $\varphi_i$ belongs to
  $\frag{1}{1}$ and consists of a block of existential quantifiers
  followed by a conjunction of literals. Then, the proof of
  \cref{thm:sat-1-1} yields for each $\varphi_i$ a (polynomial) bound
  $p_i$ so that if we replace the quantifier $\exists t$ in
  $\varphi_i$ by $\exists t\in (a_1^*\cdots a_n^*)^{p_i}$, the
  resulting $\frag{1}{0}$ formula is equivalent.
\end{proof}

This allows us to reason beyond alternation bounded languages. We have seen in
the proof of \cref{subword:undecidable} that one can express ``$|u|_a=|v|_b$''
in $\frag{1}{3}$, which required significantly more steps, and two more
alternation unbounded variables, than the ostensibly similar ``$|u|_a=|v|_a$''.
This raises the question: Can we define the former in $\fragf{1}{1}$? We cannot:
\begin{corollary}
The predicate ``$|u|_a=|u|_b$'' is not definable in $\frag{1}{1}$.
\end{corollary}
\begin{proof}
Otherwise, we could define the set $\{a^nb^n \mid
n\ge 0\}$ in $\frag{1}{1}$, hence in $\frag{1}{0}$, contradicting \cref{expressiveness:compatible}.
\end{proof}

\subsection{Expressiveness of $\frag{1}{2}$ vs. $\frag{1}{3}$}
We have seen in \cref{subword:undecidable} that $\frag{1}{3}$ can express all
recursively enumerable unary languages. Moreover,
\cref{complexity:two-variables:counterrelation} tells us that the languages
definable in $\frag{1}{2}$ are always counter languages.

How do the fragments compare with respect to natural (binary) predicates on
words. We already know from \cref{undec2:prefix} in \cref{subword:undecidable}
that over two letters, the prefix relation is expressible in $\frag{1}{3}$.  Note that the
following \lcnamecref{expressiveness:prefix} does not follow directly from the
fact that for any $\frag{1}{2}$ formula $\varphi$, the set
$\assignments{\varphi}$ is a counter relation, as shown in \cref{complexity}.
This is because the prefix relation is a counter relation (and even rational). 
\begin{theorem}\label{expressiveness:prefix}
In $\frag{1}{2}$, ``$u$ is a prefix of $v$'' is not expressible.
\end{theorem}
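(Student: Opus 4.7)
The plan is to show that any $\frag{1}{2}$-definable binary relation on two alternation unbounded free variables $u,v$ necessarily has the restricted shape of a finite union of rectangles intersected with one of a handful of basic $\subword$-definable relations, and then to rule out the prefix relation via a pigeonhole argument applied to a carefully chosen infinite diagonal sequence.

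First I would apply \cref{complexity:constraint-decomposition} to a hypothetical $\frag{1}{2}$ formula $\varphi(u,v)$ defining the prefix relation. After putting $\varphi$ in disjunctive normal form and pushing the top-level existentials inside, each disjunct is a disjunction-free $\frag{1}{2}$ formula and yields a tree-shaped constraint system $T$ together with an alternation bounded system $S$. Since $u$ and $v$ are the only alternation unbounded variables of the disjunct, property~(v) inside the inductive proof of \cref{complexity:constraint-decomposition} guarantees that in $\Gamma(T)$ they are either in distinct components (producing a pure product $L\times M$) or direct neighbors; in the latter case, the conjunction of $(u,v)$-constraints collapses to exactly one basic relation
$R\in\{=,\subword,\supword,\strictsubword,\strictsupword,\perp,\neq,A^{*}\times A^{*}\}$
and the two subtrees hanging off $u$ and $v$ (together with the relevant parts of $S$) project cleanly onto independent side conditions $L$ on $u$ and $M$ on $v$. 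Projecting onto $(u,v)$ therefore yields a decomposition
\[ \assignments{\varphi} \;=\; \bigcup_{i=1}^k R_i\cap (L_i\times M_i), \]
where each $R_i$ is one of the nine basic relations listed above.

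Next I would restrict to strict subword pairs. For every basic relation $R_i$, the intersection $R_i\cap{\strictsubword}$ is either empty (when $R_i\in\{=,\supword,\strictsupword,\perp\}$) or equals $\strictsubword$ itself (when $R_i\in\{\subword,\strictsubword,\neq,A^{*}\times A^{*}\}$), so the decomposition collapses to
\[ \text{prefix}\cap{\strictsubword} \;=\; {\strictsubword}\cap T, \qquad T=\bigcup_{i\in I} L_i\times M_i, \]
a finite union of genuine rectangles. The combinatorial crux is then the diagonal sequence $(u_n,v_n)=(a^nb,\,a^nba^n)$ for $n\geq 1$: the word $u_n$ is a prefix (hence strict subword) of $v_n$; for $n<m$ the word $u_n$ is a strict subword of $v_m$ (embed the $n$ leading $a$'s into the first $m$ $a$'s of $v_m$ and the $b$ into the middle $b$) but \emph{not} a prefix of $v_m$, since the first $n+1$ letters of $v_m$ are $a^{n+1}\neq a^nb$; and for $n>m$ one has $u_n\not\subword v_m$, as $v_m$ contains only $m<n$ letters $a$ before its unique $b$. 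Pigeonhole then yields the contradiction: since there are infinitely many pairs $(u_n,v_n)\in\text{prefix}\cap{\strictsubword}$ but only finitely many rectangles in $T$, some single rectangle $L_i\times M_i$ must contain two of them, say with indices $n<m$. Then $u_n\in L_i$ and $v_m\in M_i$ give $(u_n,v_m)\in T$, and combined with $u_n\strictsubword v_m$ this places $(u_n,v_m)$ in $\text{prefix}\cap{\strictsubword}$, contradicting the fact that $u_n$ is not a prefix of $v_m$.

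The main obstacle is the structural decomposition in the first step: one must trace through the inductive construction of \cref{complexity:constraint-decomposition} carefully enough to be sure that, after projecting onto $(u,v)$, the only residual interaction between the two alternation unbounded variables is one of the nine basic $\subword$-definable relations, and that all other constraints really split into independent side conditions on $u$ and on $v$. Once the ``rectangle times basic relation'' shape is in hand, the remainder of the argument is a short combinatorial calculation on the pairs $a^nb$ versus $a^nba^n$ together with a single application of the pigeonhole principle.
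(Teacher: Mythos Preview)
Your combinatorial endgame with the family $(a^nb,\,a^nba^n)$ is fine, but the structural decomposition it rests on is not valid. You claim that projecting $T\cup S$ onto $(u,v)$ yields a finite union $\bigcup_i R_i\cap(L_i\times M_i)$ with each $R_i$ one of the nine basic relations and with \emph{independent} side conditions $L_i,M_i$. This fails precisely at the point you flag as the ``main obstacle'': the alternation-bounded system $S$ produced by \cref{complexity:constraint-decomposition} can link the subtree at $u$ with the subtree at $v$ through shared bounded variables, and this linkage does not split into a product.

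A concrete witness is the $\frag{1}{2}$ formula $\exists x\in a^*\colon x\subword v\wedge x\not\subword u$, defining $\{(u,v):|u|_a<|v|_a\}$. Tracing the inductive construction in the proof of \cref{complexity:constraint-decomposition}, the two literals land in two separate edges of $T$ (with two distinct copies of $x$), so $u$ and $v$ lie in distinct components of $\Gamma(T)$; the identification of the two copies of $x$ is recorded as an equality constraint in $S$, not in $T$. Your scheme would then declare this relation to be a pure rectangle $L\times M$ (or, after a DNF, a finite union of basic-relation-times-rectangle pieces). But a pigeonhole argument of exactly the kind you use rules this out: all pairs $(ba^n,\,a^{n+1}b)_{n\ge 1}$ are $\perp$ and satisfy the relation, so some rectangle must contain two of them with indices $n<m$; then $(ba^m,\,a^{n+1}b)$ lies in the same rectangle, is still $\perp$, yet has $|u|_a=m\ge n+1=|v|_a$. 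Hence $\frag{1}{2}$-definable binary relations do \emph{not} in general have the shape you assume.

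The paper's proof is entirely different and sidesteps any analysis of binary $\frag{1}{2}$-definable relations. It observes that if prefix were $\frag{1}{2}$-definable then, by reversing all constants, so would suffix; one could then define in $\frag{1}{2}$ the \emph{unary} language $\{ww\mid w\in\{a,b\}^*\}$ via $\exists v\in\{a,b\}^*\colon v\prefix u\wedge v\suffix u\wedge |u|_a{=}2|v|_a\wedge |u|_b{=}2|v|_b$. That language is not a counter language, contradicting the upper bound established in \cref{complexity:two-variables:counterrelation}.
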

\begin{proof}
Suppose the prefix relation were expressible using a $\frag{1}{2}$ formula
$\varphi$. Then, by reversing all constants in $\varphi$, we obtain a formula
expressing the suffix relation.  Let $\prefix$ denote the prefix relation and
$\suffix$ the suffix relation. We can now express
\[ \exists v\in\{a,b\}^*\colon v\prefix u\wedge v\suffix u \wedge |u|_a=2\cdot |v|_a \wedge |u|_b=2\cdot |v|_b, \]
which is equivalent to $u\in S$, where $S=\{vv \mid v\in\{a,b\}^*\}$.  Note
that $|u|_a=2\cdot |v|_a$ can be expressed by using $|u|_a=|v|_a+\cdot |v|_a$,
which can be done in $\fragf{1}{0}$ according to \cref{undec2:add} in
\cref{subword:undecidable}.

However, $S$ is not a counter language. This is due to the fact that the class
of recursively enumerable languages is the smallest language class that
contains $S$, is closed under rational transductions, union, and intersection
(this can be shown as in the case of the set of
palindromes~\cite{BakerBook1974}).  However, the class of counter languages
also has these closure properties and is properly contained in the recursively
enumerable languages. Hence, $S$ is indeed not a counter language.  This is in
contradiction with the fact that for any $\frag{1}{2}$ formula  $\varphi$,
the set $\assignments{\varphi}$ is a counter relation, as shown in \cref{complexity}.
\end{proof}

\section{Conclusion}
\label{sec-concl}

We have shown that the $\Sigma_1$ theory of the subword ordering is undecidable
(already for two letters), if all words are available as constants.  This
implies that the $\Sigma_2$ theory is undecidable already for two letters, even
without constants.

In order to shed light on decidable fragments of first-order logic over
the structure $\swstructconst{A}$, we introduced the fragments $\frag{i}{j}$. We have completely
settled their decidability status. In terms of complexity, the only open case
is the $\frag{1}{2}$ fragment.  We have an $\NP$ lower bound and an $\NEXP$
upper bound. 

This aligns with the situation for expressiveness. We have a decidable
characterization for the expressiveness of $\frag{1}{0}$ and, obvious
exceptions aside, $\frag{1}{1}$ is as expressive as $\frag{1}{0}$. However, we
do not know whether $\frag{1}{1}$ and $\frag{1}{2}$ differ significantly: Of
course, $\frag{1}{2}$ can have two alternation unbounded free variables, but it
is conceivable that $\frag{1}{1}$ and $\frag{1}{2}$ define the same languages.

\printbibliography

\end{document}